\definecolor{orcidlogocol}{HTML}{A6CE39}
\tikzset{
	orcidlogo/.pic={
		\fill[orcidlogocol] svg{M256,128c0,70.7-57.3,128-128,128C57.3,256,0,198.7,0,128C0,57.3,57.3,0,128,0C198.7,0,256,57.3,256,128z};
		\fill[white] svg{M86.3,186.2H70.9V79.1h15.4v48.4V186.2z}
		svg{M108.9,79.1h41.6c39.6,0,57,28.3,57,53.6c0,27.5-21.5,53.6-56.8,53.6h-41.8V79.1z M124.3,172.4h24.5c34.9,0,42.9-26.5,42.9-39.7c0-21.5-13.7-39.7-43.7-39.7h-23.7V172.4z}
		svg{M88.7,56.8c0,5.5-4.5,10.1-10.1,10.1c-5.6,0-10.1-4.6-10.1-10.1c0-5.6,4.5-10.1,10.1-10.1C84.2,46.7,88.7,51.3,88.7,56.8z};
	}
}
\newcommand\orcidicon[1]{\href{https://orcid.org/#1}{\mbox{\scalerel*{
				\begin{tikzpicture}[yscale=-1,transform shape]
					\pic{orcidlogo};
				\end{tikzpicture}
			}{|}}}}
\numberwithin{equation}{section}
\newcommand{\id}{\mathrm{id}}
\newcommand{\ii}{{\rm i}}
\newcommand{\vol}{{\rm vol}}
\newcommand{\dd}{{\rm d}}
\newcommand{\dvol}{{\rm dvol}}
\DeclareMathOperator{\supp}{supp}
\DeclareMathOperator{\End}{End}
\DeclareMathOperator{\Hom}{Hom}
\newcommand{\ogth}{{\mathfrak o}}
\newcommand{\tgth}{{\mathfrak t}}
\newcommand{\Mb}{{\boldsymbol{M}}}
\newcommand{\Mbb}{{\boldsymbol{\EuScript{M}}}}
\newcommand{\1}{1\!\!1}
\newcommand{\beq}{\begin{equation}}
\newcommand{\ene}{\end{equation}}
\newcommand{\RR}{\mathbb{R}}
\newcommand{\CC}{\mathbb{C}}
\newcommand{\NN}{\mathbb{N}}
\newcommand{\BB}{\mathbb{B}}
\newcommand{\PP}{\mathbb{P}}
\DeclareMathOperator{\WF}{WF}
\let\Re\relax
\DeclareMathOperator{\Re}{Re}
\DeclareMathOperator{\ran}{ran}
\newcommand{\bstar}{{\bar{*}}}
\newcommand{\pc}{\textnormal{pc}}
\newcommand{\fc}{\textnormal{fc}}
\renewcommand{\sc}{\textnormal{sc}}
\newtheorem{thm}{Theorem}[section]
\newtheorem{lemma}[thm]{Lemma}
\newtheorem{prop}[thm]{Proposition}
\newtheorem{rem}[thm]{Remark}
\theoremstyle{definition}
\newtheorem{defn}[thm]{Definition}
\newcommand{\Af}{\mathscr{A}}
\newcommand{\Bf}{\mathscr{B}}
\newcommand{\Cf}{\mathscr{C}}
\newcommand{\Sf}{\mathscr{S}}
\newcommand{\Uf}{\mathscr{U}}
\newcommand{\Wf}{\mathscr{W}}
\newcommand{\Zf}{\mathscr{Z}}
\newcommand{\Ac}{\mathcal{A}}
\newcommand{\Bc}{\mathcal{B}}
\newcommand{\Cc}{\mathcal{C}}
\newcommand{\Dc}{\mathcal{D}}
\newcommand{\Ec}{\mathcal{E}}
\newcommand{\Fc}{\mathcal{F}}
\newcommand{\Hc}{\mathcal{H}}
\newcommand{\Kc}{\mathcal{K}}
\newcommand{\LL}{\mathcal{L}}
\newcommand{\Mc}{\mathcal{M}}
\newcommand{\Nc}{\mathcal{N}}
\newcommand{\Rc}{\mathcal{R}}
\newcommand{\Sc}{\mathcal{S}}
\newcommand{\Uc}{\mathcal{U}}
\newcommand{\Vc}{\mathcal{V}}
\newcommand{\Wc}{\mathcal{W}}
\newcommand{\Zc}{\mathcal{Z}}
\def\fm{{\mathfrak m}}
\def\fh{{\mathfrak h}}
\newcommand{\dhadj}[1]{\leftidx{^{\bstar}}{#1}{}}
\newcommand{\hadj}[1]{\leftidx{^{\dagger}}{#1}{}}
\newcommand{\sadj}[1]{\leftidx{^{\star}}{#1}{}}
\newcommand{\abs}[1]{{\left\vert #1 \right\vert}}
\newcommand{\norm}[1]{{\left\Vert #1 \right\Vert}}
\newcommand{\IP}[1]{{\left\langle #1 \right\rangle}}
\newcommand{\cv}[1]{{\underline{ #1 }}}
\newcommand{\rfhgho}{RFHGHO}
\newcommand{\aux}{\textnormal{aux}}
\DeclareMathOperator{\Diag}{Diag}
\newenvironment{pnmatrix}[1]{\left(\begin{NiceArray}{#1}}{\end{NiceArray}\right)}
\begin{document}

\title{Coupled Proca theories\\ \Large Green-hyperbolicity, quantization and applications to polarization measurement}
\author{Christopher J. Fewster{\orcidicon{0000-0001-8915-5321}}${}^{1,2}$\thanks{chris.fewster@york.ac.uk}\hspace{0.25em} and Christiane K. M. Klein{\orcidicon{0000-0003-2398-7226}}${}^{1}$\thanks{christiane.klein@york.ac.uk}
\\[6pt]  
	\small ${}^{1}$Department of Mathematics, Ian Wand Building, Deramore Lane,
\break University of York, \\ \small York YO10 5GH, United Kingdom.\\[4pt]
	\small ${}^{2}$York Centre for Quantum Technologies, University of York, Heslington, York YO10 5DD, United Kingdom.}
    
 
\date{14 November 2025}

\maketitle
 
\begin{abstract}
The Proca field describes a massive relativistic spin-$1$ particle  and was originally formulated in Minkowski spacetime. Here we consider a variety of generalizations in globally hyperbolic spacetimes, including couplings between a number of Proca fields via a mass-matrix, the charged Proca field with arbitrary magnetic moment in an arbitrary external electromagnetic field, and a Proca--Klein--Gordon theory with a spacetime-dependent bilinear coupling. The equations are analysed using a general auxiliary field method, introduced here, which provides practical criteria for showing that a given operator is (semi)-Green-hyperbolic. The method goes beyond what is achieved in existing analyses of deformed equations, which for example place restrictions on the magnetic moment and electromagnetic potential that can be coupled to a Proca field.
The theories considered can be quantized following a common pattern and all the examples treated in this work admit Hadamard states on any globally hyperbolic spacetime. 

As an application, the Proca--Klein--Gordon system is used to develop a measurement scheme sensitive to the Proca polarization, using a Klein--Gordon field as the probe. For a suitable family of $n$-particle Proca states, the leading-order probe response accords with Malus' law,  confirming that this system acts as a polarization-sensitive detector. 
\end{abstract}

\section{Introduction}
\label{sec:Intro}

Proca introduced the field equation that now bears his name in 1936~\cite{Proca:1936}  in the search for a relativistic  wave equation that could describe charged massive particles with spin and nonzero magnetic moment, while avoiding the perceived defect of the negative energy states  present in Dirac's theory.  Although the hope that this theory would describe electrons and positrons was futile, nonetheless Proca's equation has taken an important place in fundamental physics. Specifically, electroweak symmetry breaking produces charged massive vector bosons $W^\pm$ and a neutral massive vector boson $Z^0$, plus massless photons. The neutral Proca equation provides an effective description of the $Z^0$ boson and composite particles such as the $\omega$, $K^0$, and $\rho^0$-mesons, and models of photons with small nonzero mass, while the charged Proca equation describes the $W^\pm$ bosons, and  $K^\pm$ and $\rho^\pm$-mesons~\cite{ParticleDataGroup:2024}.

 In this paper, we discuss the solution theory and quantization of a number of variants of the Proca equation, typically coupled to other fields on general globally hyperbolic spacetimes. Our main goals are to give full treatments of the charged Proca field with arbitrary magnetic moment coupled to an external electromagnetic field, and also the charged Proca field coupled to a dynamical scalar field. These models have applications in the measurement theory of quantum fields~\cite{FewVer_QFLM:2018} that will be discussed here and elsewhere.

To understand why the task at hand is nontrivial, it is useful to recall how the standard uncharged Proca equation is solved (see e.g.~\cite{Few&Pfen03}), working in Minkowski space for simplicity and using signature convention $+---$. There, the inhomogeneous Proca equation is
\begin{equation}
    \nabla^\mu (\nabla_\mu W_\nu - \nabla_\nu W_\mu)+ m^2 W_\nu = J_\nu,
\end{equation}
where $m>0$ is the mass. This equation is not hyperbolic, but can be solved using a trick:
taking the divergence, the first term vanishes and one obtains the constraint 
\begin{equation}
    m^2 \nabla^\nu W_\nu = \nabla^\nu J_\nu,
\end{equation}
which is satisfied by all solutions. Substituting back and commuting derivatives, the Proca equation can be rearranged into the normally hyperbolic equation
\begin{equation}\label{eq:trick}
    (\nabla^\mu \nabla_\mu +m^2)W_\nu = J_\nu + m^{-2}\nabla_\nu \nabla^\mu J_\mu 
\end{equation}
In this way, advanced and retarded Green operators may be constructed for the Proca equation using those of the covector Klein--Gordon equation, acting on the modified source term in~\eqref{eq:trick}

The same trick continues to work for the Proca equation in general globally hyperbolic spacetimes, in which the inhomogeneous equation is conveniently written using differential forms as
\begin{equation}
	-\delta \dd W+ m^2 W = J,
\end{equation}
where $\dd$ is the exterior derivative and $\delta$ the codifferential, making use of the fact that $\delta^2W=0$. 
However, the trick is fragile against additional couplings to the field. For example, the charged Proca field in an external electromagnetic vector potential $A_\mu$ obeys 
\begin{equation}\label{eq:chargedProca}
	-\delta_A \dd_A W + \ii q \kappa F\cdot W + m^2 W = J
\end{equation}
where $q$ is the charge, $\dd_A$ and $\delta_A$ are minimally coupled generalizations of the exterior derivative and codifferential (see Section~\ref{sec:charged_Proca}),
$F=\dd A$ is the background field strength and the parameter $\kappa$ controls the magnetic moment $(\kappa+1)q/(2m)$, with $\kappa=0$ corresponding to minimal coupling. We have also written $F\cdot W=F_{\mu}^{\phantom{\mu}\nu}W_\nu$. Applying $\delta_A$ leads to the equation
\begin{equation}\label{eq:prenonhyp}
	m^2\delta_A W = \ii q(\kappa-1)F^{\bullet\bullet}\dd_A W + \ii q \kappa j\cdot W + \delta_A J ,
\end{equation}
where $j=-\delta F$ is the background current, $j\cdot W=j_\mu W^\mu$, 
and we use the notation $F^{\bullet\bullet}H=\tfrac{1}{2}F^{\mu\nu}H_{\mu\nu}$ (note that $\delta_A^2\neq 0$ when $F\neq 0$). Substituting back into~\eqref{eq:chargedProca}, one obtains
\begin{equation}\label{eq:nonhyp}
	K_A W + \ii q(\kappa-1)\dd_A(F^{\bullet\bullet}\dd_A W)+
	\textnormal{lower order} = (1-m^{-2}\dd_A \delta_A)J,
\end{equation}
where $K_A = -(\dd_A\delta_A+\delta_A\dd_A)+m^2$ is the $1$-form
Klein--Gordon operator  minimally coupled to $A$. Unless $\kappa=1$ or $F\equiv 0$, equation~\eqref{eq:nonhyp} is not normally hyperbolic, owing to the additional second order derivatives of $W$. Therefore the strategy used for the uncharged Proca field generally fails unless $\kappa=1$.

In fact, the value $\kappa=1$ is singled out for other reasons. For example, physical $W^\pm$-bosons have $\kappa=1$ at the tree-level value in the standard model after symmetry breaking. Corben and Schwinger~\cite{CorbenSchwinger:1940} noted that $\kappa=1$ is the only value for which the Noether current associated with global $U(1)$ invariance falls into two separately conserved terms with good physical interpretations. Moreover, there are systematic analyses of deformations of physical equations which only accommodate the $\kappa=1$, $j=0$ theory~\cite{KaparulinLyakhovichSharapov, CorteseRahmanSivakumar, Rahman:2020}. 

However, it would be puzzling if the free Proca theory could not be formulated consistently for other values of $\kappa$. Presently available measurements of the W-boson magnetic moment indicate a value $\kappa+1= 2.22^{+0.20}_{-0.19}$~\cite{ParticleDataGroup:2024}, owing to radiative corrections. It would be strange indeed if one could not formulate a free model with the physical value  of $\kappa$ as the basis of perturbative calculations. In particular, it would contradict the (generalized) principle of perturbative agreement \cite{HollandsWald:2005,DragoHackPinamonti:2017} which asserts the freedom to move quadratic terms between the `free' and `interaction' parts of the action.

In this paper, we  introduce an auxiliary field method that allows us to treat a variety of Proca and coupled Proca systems, including the charged Proca field for any $\kappa$, and which we expect to have wider applications. The idea is to augment the system of equations to one which is at least semi-Green-hyperbolic. Here, a differential operator $P$ acting on sections of a vector bundle over spacetime will be called semi-Green-hyperbolic if it has advanced and retarded Green operators.\footnote{Note that some authors use `Green-hyperbolic' to mean what we call `semi-Green-hyperbolic'.} Normally hyperbolic operators are semi-Green-hyperbolic, but there are numerous examples that are not normally hyperbolic, including the  neutral Proca equation. If the formal dual of $P$ is also semi-Green-hyperbolic then $P$ is called Green-hyperbolic, which has numerous consequences~\cite{Baer:2015}. Among other things one may construct a linear bosonic quantum field theory (QFT) governed by $P$ --  usually this is done under additional mild conditions  such as formal hermiticity with respect to a sesquilinear form induced by a bundle metric but here we show that only Green-hyperbolicity is required. 

In our auxiliary field method, the operator of interest is written in the form $P=\pi Q D$, where 
$D$ and $\pi$ are differential operators to and from an enlarged vector bundle acted on by a semi-Green-hyperbolic operator $Q$. We will give conditions under which $P$ can be shown to be semi-Green-hyperbolic with retarded/advanced Green operators given by
$E_P^\pm = \pi E_Q^\pm D$. 
The operators $\pi$, $Q$, and $D$ are not uniquely specified, but may be chosen in different ways.

The auxiliary field framework turns out to be versatile and general. Indeed every semi-Green-hyperbolic operator can be  expressed trivially in this way, but we will also give a number of interesting examples whose (semi)-Green-hyperbolicity is not immediately evident. The standard Proca equation is one, as are systems of Proca fields with varying mass matrix, and the complex Proca field in an external electromagnetic field for arbitrary $\kappa$ and $j$. 
The difference between our approach and that of \cite{KaparulinLyakhovichSharapov, CorteseRahmanSivakumar, Rahman:2020},  which require $\kappa=1$ and $j=0$, appears to lie in the fact that \cite{KaparulinLyakhovichSharapov, CorteseRahmanSivakumar, Rahman:2020} require consistency at all orders in the coupling constant, $q$ in this case, while we solve the full Proca equation -- which is quadratic in $q$ -- from the start.

Finally, we will apply the method to a system consisting of a Proca field bilinearly coupled to a Klein--Gordon field of possibly different mass, and with a spacetime-variable coupling strength given by a $1$-form field. Turning to the quantized theories, we investigate conditions under which they admit Hadamard states, using a recent generalization of the Hadamard condition to `decomposable' Green-hyperbolic operators~\cite{Fewster:2025a}. All the examples mentioned admit Hadamard states in general globally hyperbolic spacetimes.

The motivation for this work arises in the theory of measurement for quantum fields introduced in~\cite{FewVer_QFLM:2018}, in which one measures a target `system' QFT by coupling it to a `probe' QFT that is prepared independently from the system at early times. Measurements of probe observables at late times provide indirect measurements of system observables, and one may also derive state update results appropriate to selective or nonselective measurements. These updates are consistent with causality in various ways; in particular the framework automatically excludes `impossible measurements' of the type first discussed by Sorkin~\cite{sorkin1993impossible}, as shown in~\cite{BostelmannFewsterRuep:2020} -- see~\cite{PapageorgiouFraser:2023b} for more discussion and references. Further developments appear in~\cite{FewsterJubbRuep:2022} and a recent paper~\cite{MandryschNavascues:2024}, which shows how Gaussian-modulated measurements of smeared fields may be implemented. The framework has also proved useful as a starting-point for a discussion of von Neumann algebra `type change' in~\cite{FewsterJanssenLoveridgeRejznerWaldron:2024}, building on and generalising~\cite{chandrasekaranAlgebraObservablesSitter2023}.

The Proca field has a number of features that make it a good example for models of measurement. In particular, it has polarization states due to its spin. Real quantum optics experiments are performed with photons rather than Proca particles, of course, but the Proca theory avoids the technical complications associated with gauge freedom in electromagnetism, and charged Proca fields can be manipulated by external electromagnetic fields. By comparison, a device such as a Faraday rotator for photon polarization \cite[Section 8.11.2]{Hecht} relies on the properties of electromagnetism in media, which would add a further layer of complication to any model. In short, coupled Proca systems provide a good testing ground for ideas that could be implemented for electromagnetism at a later stage. 

As a case study, we show how the Proca--Klein--Gordon system can be used to model a polarization-sensitive detector that responds, at leading order, according to Malus' law~\cite{Malus:1809}: for suitable transversely polarized $n$-particle single mode states the probe response is proportional to the squared cosine of the angle between the polarization direction and the direction picked out by the coupling. 
 In more detail, given a qubit state $\sigma\in\CC^2$ obeying the normalisation condition $\sigma^\dagger\sigma =1$ and a square-integrable function $s$ of $3$-momentum peaked near some $\cv{k}_0$ (without loss of generality pointing in the $z$-direction), we define $n$-particle single mode states with transverse polarization determined by $\sigma$. We define $\alpha_{\rm max}$ to be the maximum angle between the $3$-momentum in the support of $s$ and $\cv{k}_0$, which therefore measures the extent to which these states are collimated along the axis.  The interaction coupling the scalar field $\phi$ to the Proca field $W$ is of the form $v^\mu(\phi \overline{W}_\mu + \overline{\phi}W_\mu)$,
 where $v$ is a compactly supported real vector field of a constant direction in the $xy$-plane at angle $\theta$ to the $x$-axis. We define a probe observable which is, up to a calibration constant, a simplified model for a smearing of the scalar condensate $ {:}|\phi|^2{:}$. Its corresponding induced observable (using the Minkowski vacuum as the probe prepartion state) has an expectation in the $n$-particle state given by
\begin{equation}
    n |\sigma^\dagger e_\theta|^2 CF \lambda^2 +O(\lambda^{4})
\end{equation}
as the coupling strength $\lambda\to 0$. Here 
$e_\theta=(\cos\theta,\sin\theta)\in \CC^2$, while
 $\cos^2\alpha_{\rm max}\le C\le 1$ is a collimation factor accounting for the fact that a particle described by a wave packet cannot be perfectly collimated, and $F$ is a form factor independent of $\sigma$ and $\theta$. At leading order, and when $\alpha_{\rm max}\ll \pi/2$, the probe response is largely determined by the factor $|\sigma^\dagger e_\theta|^2$, which may be recognised as the idealised quantum mechanical probability for polarization state $\sigma$ to pass a perfect linear polarizer oriented  along $v$. In the case that $\sigma$ is real, and therefore represents a linear polarization state, the probability reduces to the squared cosine of the difference in the polarization angles, which is the original form of Malus' law. The form factor $F$ accounts for the extent to which the particles hit the detector. It turns out that $F$ is equal to the leading order response of the same probe linearly coupled to a scalar field in a suitable state.

\paragraph{Further literature and remarks}
  
In the context of the algebraic approach to quantum field theory, the quantized Proca theory was first discussed by Furlani~\cite{Furlani:1999}, who quantized the theory in the style of Dimock's works~\cite{Dimock1980, Dimock92}, obtaining Weyl algebras; see also~\cite{Furlani:1997} for the more specific case of ultrastatic spacetimes with compact Cauchy surfaces. Later,  Strohmaier proved the Reeh--Schlieder property for ground and KMS states of the Proca (and other) fields in stationary spacetimes~\cite{Stroh:2000}, while Dappiaggi \cite{Dappiaggi:2011} formulated the Proca field as a locally covariant theory in the sense of~\cite{BrFrVe03}, making it possible to apply Sanders' general results on the Reeh--Schlieder property~\cite{Sanders_ReehSchlieder}. Likewise, the locally covariant formulation would allow one to obtain the split property following~\cite{Few_split:2015}. More recently, Schambach and Sanders studied the zero-mass limit~\cite{SchambachSanders:2018}, also including external sources. A general presentation of the Proca equation and its quantization can be found in~\cite{BeniniDappiaggi:2015}.  

Turning to the states of the theory, Hadamard states for the Proca field were defined by Fewster and Pfenning (FP)~\cite{Few&Pfen03} using a slightly indirect method, which expressed the Hadamard condition in terms of a Hadamard bisolution for the $1$-form Klein--Gordon equation. A more direct approach was suggested by Moretti, Murro and Volpe (MMV) in \cite{MorettiMurroVolpe:2023}; unfortunately their paper has a gap, which was identified and fixed in~\cite{Fewster:2025b} using polarization set methods. 
The MMV Hadamard condition constrains the wavefront set of the Proca two-point function, which is preferable to the less direct FP approach. MMV were able to prove a partial equivalence between the two definitions, which has  now been established in full~\cite{Fewster:2025a}. In fact~\cite{Fewster:2025a} also extends ideas from MMV to provide a unified and general approach to the Hadamard condition for a class of Green-hyperbolic operators well beyond the Proca equation. We will expand on this summary below in Section~\ref{sec:states}.

(Semi)-Green-hyperbolic operators play a central role in this work and others, as they are a cornerstone of the quantization procedure of linear theories in algebraic quantum field theory. Extensions of this concept, which may for example be useful for the quantization of non-local theories, can for instance be found in~\cite{LechnerVerch:2015, Fewster:2023}.

Lastly, an engaging account of Malus' life, particularly with respect to his 1809 paper on polarization, can be found in~\cite{KahrClaborn:2008}.

\paragraph{Structure}
 In Section~\ref{sec:charged_Proca}, we introduce the charged Proca equation and demonstrate in more detail that it cannot be solved with the trick for the neutral Proca equation presented above. 
In Section~\ref{sec:aux field}, we introduce the notion of semi-Green-hyperbolic operators in more detail and develop the auxiliary field method, taking the neutral Proca field as a starting point. We briefly discuss how the auxiliary field method allows one to draw conclusions about the $\Vc^\pm$-decomposability of the operator. 
The auxiliary field method is used in  Section~\ref{sec:appli} to show the semi-Green-hyperbolicity of a system of Proca fields with variable mass matrix, the charged Proca field in a background electromagnetic field for any $\kappa\in \RR$, and for the linearly coupled Proca-scalar system. We give an alternative treatment of the neutral Proca field to show the non-uniqueness of the construction in the auxiliary field framework. 
Section~\ref{sec:quantum} describes how a theory governed by a Green-hyperbolic operator can be quantized without assuming formal hermiticity. The procedure is applied to the examples discussed in Section~\ref{sec:appli}. We  briefly review how Hadamard states can be identified for the quantum theories corresponding to real and formally hermitian Green-hyperbolic operators, and conclude the existence of Hadamard states for all the example cases  of Section~\ref{sec:appli}.
After recalling the basic principles of the measurement scheme by Fewster \& Verch \cite{FewVer_QFLM:2018}, the particular example of the coupled Proca-scalar system is used in Section~\ref{sec:Malus} to set up a model of a polarization-sensitive detector. To demonstrate the viability of the model, we verify that it follows (the generalised) Malus' law as described above. 
Concluding remarks are given in Section~\ref{sec:Conc}.

\paragraph{Notation} The symbol $\subset$ includes the possibility of equality; when alternatives are labelled 
by stacked symbols, e.g., $\pm$, they are labelled in order from top to bottom. For example, in a time-oriented Lorentzian spacetime, $J^\pm(S)$ denotes the set of points reachable from a spacetime subset $S$ by future/past-directed causal curves. If $B$ is a complex vector bundle, $\Gamma^\infty_\Sc(B)$ represents the smooth sections of $B$ with support system $\Sc$~\cite{Baer:2015}, where $\Sc=0/\fc/\pc/\sc$ stands for compact/future-compact/past-compact/spatially compact supports, respectively. Here, a subset $S$ is spatially compact if $S$ is closed and $S\subset J(K)$ for compact $K$; $S$ is future/past-compact if $S\cap J^\pm(x)$ is compact for every point $x$. 
For a bundle $B\xrightarrow{\pi}M$, $\dot B$ denotes the bundle with the zero-section removed. 
 Physical units in which $\hbar=c=1$ are employed.

\section{The charged Proca equation}\label{sec:charged_Proca}

Let $(M,g)$ be a $4$-dimensional Lorentzian spacetime, with signature ${+}{-}{-}{-}$, that is oriented and time-oriented. For the moment we suppress notation for the orientations. The Proca field, of mass $m$ and charge $q$ in an external electromagnetic field $A_\mu$, is described by a covector field $W_\mu$ with dynamics given by Lagrangian $\LL=\LL_0+\kappa\LL_{\textnormal{mag}}$, where
\begin{align}
	\LL_0 &= -\frac{1}{2} \overline{H}_{\mu\nu} H^{\mu\nu} + m^2 \overline{W}_\mu W^\mu
	= -(\nabla_\mu \overline{W}_\nu -\ii q A_\mu  \overline{W}_\nu ) H^{\mu\nu}+ m^2 \overline{W}_\mu W^\mu, \nonumber\\
	\LL_{\textnormal{mag}}&=\ii \overline{W}_\mu W_\nu F^{\mu\nu} 
\end{align}
and $\kappa$ is a fixed real parameter determining the anomalous magnetic dipole moment (see, e.g., equation~(12) in~\cite{CorbenSchwinger:1940}, modulo differences in conventions).
Here, $H$ is given in terms of $W$ by $H_{\mu\nu} = D_{\mu}W_\nu - D_\nu W_\mu$, with $D_\mu=\nabla_\mu+\ii q A_\mu$ and $F_{\mu\nu} = \nabla_\mu A_\nu-\nabla_\nu A_\mu$.
The resulting field equation is  
\begin{equation}\label{eq:chargedProca_index}
	D^\mu H_{\mu\nu}  + m^2 W_\nu + \ii\kappa F_{\nu\mu}W^\mu = 0.
\end{equation} 

It is convenient to reformulate~\eqref{eq:chargedProca_index} using differential forms, writing  $\Lambda^pM$ ($0\le p\le 4$) for the bundle of complex $p$-forms over $M$.  One has the usual exterior derivative $\dd:\Gamma^\infty(\Lambda^{p}M)\to\Gamma^\infty(\Lambda^{p+1}M)$ and, as $M$ is oriented, the Hodge isomorphism $\star$ of $p$-forms with $(4-p)$-forms, obeying
\begin{equation}
	\omega\wedge\star \eta = \frac{1}{p!}\omega_{\alpha_1\cdots\alpha_p}\eta^{\alpha_1\cdots\alpha_p}\vol,
\end{equation}
with $\vol$ being the metric volume $4$-form. 
This is used to define the 
codifferential by 
\begin{equation}
	\delta\omega = (-1)^{\deg \omega} \star^{-1}\dd\star \omega,
\end{equation}
for any form field $\omega$. Writing a typical $p$-form $\omega$ as an antisymmetric tensor, one has
\begin{equation}
	(\dd \omega)_{\alpha_0\cdots\alpha_{p}}= \sum_{j=0}^{p} (-1)^j \nabla_{\alpha_j} \omega_{\alpha_0\ldots \alpha_{p}\setminus\alpha_j},\qquad
	(\delta\omega)_{\alpha_1\ldots\alpha_{p-1}} = -\nabla^\beta \omega_{\beta\alpha_1\ldots\alpha_{p-1}}, 
\end{equation}
where $\nabla$ is the Levi--Civita connection and the notation $\omega_{\alpha_0\ldots \alpha_{p}\setminus\alpha_j}$ indicates that the index $\alpha_j$ is omitted.  The wedge product of a $p$-form $\omega$ and $q$-form $\eta$ is
\begin{equation}
        (\omega\wedge \eta)_{\alpha_1\ldots\alpha_p\beta_1\ldots\beta_q}= \frac{(p+q)!}{p!q!} \omega_{[\alpha_1\ldots\alpha_p}\eta_{\beta_1\ldots\beta_q]}.
\end{equation}

We define the operator $\Box=-(\dd\delta+\delta\dd)$ on $p$-forms (i.e., minus the Laplace--de Rham operator) which is normally hyperbolic, because $\Box$ and $g^{\mu\nu}\nabla_\mu \nabla_\nu$ differ by zeroth-order curvature terms (and agree for $p=0$).

Given the external electromagnetic field $A\in \Gamma^\infty(T^*M)$, we may define a modified exterior derivative and codifferential by 
\begin{equation}
	\dd_A\omega = \dd\omega + \ii q A\wedge \omega, \qquad \delta_A\omega = (-1)^{\deg \omega} \star^{-1}\dd_A\star \omega,
\end{equation}
for any form field $\omega$.
 Noting that $\delta_A$ can be expressed as
\begin{equation}
    (\delta_A \omega)_{\alpha_1\ldots\alpha_{p-1}} = -(\nabla^{\beta}+\ii q A^\beta)\omega_{\beta\alpha_1\ldots\alpha_{p-1}},
\end{equation}
equation~\eqref{eq:chargedProca_index} becomes
\begin{equation}\label{eq:fProca}
	-\delta_A \dd_A W + \ii q \kappa F\cdot W + m^2 W = J,
\end{equation}
as stated in~\eqref{eq:chargedProca}, with $(F\cdot W)_\alpha = F_\alpha^{\phantom{\alpha}\beta}W_\beta$. 

The charged Proca operator $P_A = 	-\delta_A \dd_A   + \ii q \kappa F^\bullet + m^2$ is formally hermitian with respect to the hermitian pairing
\begin{equation}\label{eq:hermpair}
	\langle W, Z \rangle_p =(-1)^p\int_M \overline{W}\wedge \star Z
\end{equation} 
of $p$-forms. To show this, note that $-\delta_A$ is the formal hermitian adjoint of $\dd_A$, i.e.,
for a $p$-form $W$ and $(p-1)$-form $Z$ having compactly intersecting supports, 
\begin{align}
	\langle Z, \delta_A W\rangle_{p-1} &=(-1)^{p-1} \int_M(\overline{Z}\wedge \star \delta W+ (-1)^p \ii q\overline{Z}\wedge A\wedge\star W) \nonumber\\
    &= (-1)^{p-1}\int_M(\overline{\dd Z}\wedge\star W-\dd(\overline{Z}\wedge \star W)+\overline{\ii qA\wedge Z}\wedge\star W)\\\nonumber
    &= (-1)^{p-1}\int_M \overline{\dd_A Z}\wedge\star W= -\langle \dd_A Z, W\rangle_p \, .
\end{align}
The calculation makes use of the identities
\begin{align}
\dd (\overline{Z}\wedge\star W) &= \overline{\dd Z}\wedge \star W + (-1)^{p-1} \overline{Z}\wedge \star \star^{-1}\dd\star W = \overline{\dd Z}\wedge \star W - 
\overline{Z}\wedge \star \delta W\\
Z\wedge A&=(-1)^{p-1}A\wedge Z
\end{align}
for $p$-form $W$, $(p-1)$-form $Z$, and $1$-form $A$ as well as Stokes' theorem  and compactness of $\supp Z\cap\supp W$.

Consequently,  $\langle \delta_A\dd_A W,Z\rangle_1=-\langle \dd_AW,\dd_A Z\rangle_2 = \langle W,\delta_A\dd_A Z\rangle_1$, for any $W$ whose support intersects that of $Z$ compactly, and
\begin{equation}
	\langle W, P_A Z\rangle_1 -\langle P_A W, Z\rangle_1 =
	-\ii q \kappa \int_M \left(\overline{W}_\mu F^{\mu\nu}Z_\nu + \overline{F^{\nu\mu}W_\mu}Z_\nu\right)=0
\end{equation}
because $F$ is real and antisymmetric.

As discussed in the introduction, the $A=0$ equation can be solved by applying $\delta$ to obtain $\delta W=-m^{-2}\delta J$ and then substituting back to find a normally hyperbolic equation for $W$.
This trick depends on the fact that $\delta^2=0$. However, the  modified operators do not square to zero; instead, one has
\begin{align}\label{eq:dAsquared}
	\dd_A^2\omega 
	&= \ii q F\wedge \omega	
\end{align}
and consequently,
\begin{equation}\label{eq:deltaAsquared}
	\delta_A^2 \omega = -\ii q \star^{-1} (F\wedge \star \omega),
\end{equation}
which reduces to $\delta_A^2\omega=-\ii q F^{\bullet\bullet}\omega$ if $\omega$ is a $2$-form, where $F^{\bullet\bullet}\omega =\tfrac{1}{2} F^{\mu\nu}\omega_{\mu\nu}$. The identity 
$\delta_A \omega =-(\nabla^\mu + \ii q A^\mu)\omega_\mu$ for $1$-forms implies 
\begin{equation}\label{eq:usefulidentity}
	\delta_A(F\cdot W) = -j\cdot W - F^{\bullet\bullet}\dd_A W ,
\end{equation}
where $j=-\delta F$ and $j\cdot W = j^\mu W_\mu$. Using these facts, the modified codifferential of~\eqref{eq:fProca} leads to~\eqref{eq:prenonhyp} and
the non-hyperbolic equation~\eqref{eq:nonhyp} as stated in the introduction. Consequently the method used for $A=0$ is not available to us except in the special case $\kappa=1$. 
Instead we will use an alternative method based on introducing auxiliary fields and constraints, which will be developed in the next section, using the neutral Proca field to motivate an abstract existence result that can be applied in various situations.

\section{Auxiliary field method} \label{sec:aux field}

\subsection{Green-hyperbolic operators}

A globally hyperbolic spacetime consists of a smooth paracompact $n$-dimensional manifold $M$ with at most finitely many components, equipped with a smooth Lorentzian metric $g$ on $M$ of signature ${+}{-}\cdots{-}$ and a time-orientation, so that there are no closed $g$-causal curves in $M$ and $K\mapsto J^+(K)\cap J^-(K)$ maps compact sets to compact sets.

Let $B_1$ and $B_2$ be finite-dimensional complex vector bundles over a globally hyperbolic spacetime, and let $P:\Gamma^\infty(B_1)\to\Gamma^\infty(B_2)$ be a differential operator. 
If they exist, linear maps $E_P^\pm:\Gamma_0^\infty(B_2)\to \Gamma^\infty(B_1)$ such that 
\begin{enumerate}[G1]
	\item $PE_P^\pm f=f$ for all $f\in \Gamma_0^\infty(B_2)$
	\item $E_P^\pm P f=f$ for all $f\in \Gamma_0^\infty(B_1)$
	\item $\supp E_P^\pm f\subset J^\pm (\supp f)$ for all $f\in \Gamma_0^\infty(B_2)$
\end{enumerate} 
are called retarded ($+$) and advanced ($-$) Green operators for $P$, and we will say that $P$ is \emph{semi-Green-hyperbolic}. If the formal dual of $P$ (see later) is also semi-Green-hyperbolic, then $P$ is said to be Green-hyperbolic~\cite{Baer:2015}; the same is true if the bundles $B_1$ and $B_2$ are equipped with hermitian or bilinear pairings and the formal hermitian conjugate or formal transpose of $P$ are also semi-Green-hyperbolic (as is the case for the Proca operator $P_A$). For the moment, however, we only require semi-Green-hyperbolicity.  

An important result (see Theorem~3.8 in~\cite{Baer:2015}) is that if $P$ has Green operators $E_P^\pm$, then $E_P^\pm$ admit unique linear extensions $\overline{E}_P^\pm:\Gamma^\infty_{\pc/\fc}(B_2)\to \Gamma^\infty_{\pc/\fc}(B_1)$ that are inverses of the restriction of $P$ to a map  $\Gamma^\infty_{\pc/\fc}(B_1)\to \Gamma^\infty_{\pc/\fc}(B_2)$. In particular, $P$ is injective on 
$\Gamma^\infty_{\pc/\fc}(B_1)$ (cf.\ also Corollary~3.9 in~\cite{Baer:2015}). Furthermore, the maps
$\overline{E}_P^\pm$ and consequently $E_P^\pm$ are automatically continuous in appropriate topologies, and indeed $E_P^\pm$ are uniquely determined by $P$ (Corollaries~3.11 and~3.12 of~\cite{Baer:2015}).
We note a useful consequence.
\begin{lemma}\label{lem:intertwine}
	Let $P:\Gamma^\infty(B_1)\to\Gamma^\infty(B_2)$, $Q$, $C$ and $D$ be differential operators so that $CP=QD$, and suppose that $P$ and $Q$ 
	are semi-Green-hyperbolic. Then $DE_P^\pm =   E_Q^\pm C$ on $\Gamma_0^\infty(B_2)$.
\end{lemma}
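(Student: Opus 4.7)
The plan is to use the relation $CP=QD$ together with the injectivity of semi-Green-hyperbolic operators on past/future-compactly supported sections (which was just recalled from Theorem~3.8 in~\cite{Baer:2015}). Fix $f\in\Gamma^\infty_0(B_2)$ and consider the two sections $DE_P^\pm f$ and $E_Q^\pm Cf$. I would first show that both lie in $\Gamma^\infty_{\pc/\fc}$ of the appropriate bundle and then show that applying $Q$ to either yields $Cf$; injectivity then forces them to coincide.

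For the support step, recall that in a globally hyperbolic spacetime the intersection $J^+(K_1)\cap J^-(K_2)$ is compact for any compact sets $K_1,K_2$, so that $J^+(K)$ is past-compact and $J^-(K)$ is future-compact whenever $K$ is compact. By property G3, $\supp E_P^\pm f\subset J^\pm(\supp f)$, and since $D$ is a differential operator it does not enlarge supports, hence $\supp DE_P^\pm f\subset J^\pm(\supp f)$. The same argument applied to $E_Q^\pm Cf$, using $\supp Cf\subset\supp f$, shows $\supp E_Q^\pm Cf\subset J^\pm(\supp f)$. Thus both sections belong to $\Gamma^\infty_\pc(B)$ in the $+$ case, and to $\Gamma^\infty_\fc(B)$ in the $-$ case, where $B$ denotes the codomain bundle of $D$ (which must coincide with the domain bundle of $Q$ since $QD$ is defined).

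Next, by G1 we have $PE_P^\pm f=f$, so
\begin{equation*}
	Q(DE_P^\pm f)=(QD)E_P^\pm f=(CP)E_P^\pm f=Cf=Q(E_Q^\pm Cf),
\end{equation*}
using G1 for $E_Q^\pm$ in the last step. Therefore $DE_P^\pm f-E_Q^\pm Cf$ is a past/future-compactly supported section annihilated by $Q$. By the extension and uniqueness statement recalled above (Theorem~3.8 and Corollary~3.9 of~\cite{Baer:2015}), $Q$ is injective on $\Gamma^\infty_{\pc/\fc}$, so this difference vanishes and $DE_P^\pm f=E_Q^\pm Cf$, as required.

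I do not expect any genuine obstacle: the only non-routine ingredient is the use of injectivity of $Q$ on past/future-compactly supported sections, which has already been stated in the excerpt, and the minor bookkeeping step of checking that the support of $DE_P^\pm f$ really is past/future-compact rather than merely causally confined. If desired, the result could be sharpened to hold on $\Gamma^\infty_{\pc/\fc}(B_2)$ by applying the unique extensions $\overline{E}_P^\pm$ and $\overline{E}_Q^\pm$ in place of $E_P^\pm$ and $E_Q^\pm$, but the stated lemma only requires the compactly supported case.
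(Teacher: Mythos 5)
Your proof is correct and rests on the same ingredient as the paper's: Theorem~3.8/Corollary~3.9 of~\cite{Baer:2015}, i.e., that the extended Green operators invert $Q$ on $\Gamma^\infty_{\pc/\fc}$. The paper simply applies $\overline{E}_Q^\pm$ to the identity $CP=QD$ and evaluates on $E_P^\pm f$, whereas you phrase the same fact as injectivity of $Q$ on past/future-compactly supported sections and a uniqueness argument; the two are equivalent and your support bookkeeping is accurate.
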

\begin{proof}
	We have $\overline{E}_Q^\pm CP = D$ on $\Gamma^\infty_{\pc/\fc}(B)$ and hence $DE_P^\pm = \overline{E}_Q^\pm C = E_Q^\pm C$ on $\Gamma_0^\infty(B_2)$.  
\end{proof}

We will also make use of the following key result, which is stated in~\cite{Baer:2015} for Green-hyperbolic operators, but in fact applies to semi-Green-hyperbolic operators.
\begin{thm}\label{thm:exactseq}
	If $P:\Gamma^\infty(B_1)\to\Gamma^\infty(B_2)$ is semi-Green-hyperbolic, with Green operators $E_P^\pm$, then 
	\begin{equation}
		\begin{tikzcd}
			0 \arrow[r] & \Gamma_0^\infty(B_1) \arrow[r, "P"]  &\Gamma_0^\infty(B_2) \arrow[r, "E_P"]  & \Gamma^\infty_{\sc}(B_1) \arrow[r,"P"]  & \Gamma^\infty_{\sc}(B_2) \arrow[r]  & 0 
		\end{tikzcd}
	\end{equation}
	is an exact sequence, where $E_P=E_P^--E_P^+$.
\end{thm}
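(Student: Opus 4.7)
The exact sequence has four positions to verify, corresponding to: (i) injectivity of $P$ on compactly supported sections; (ii) $\ker E_P\cap \Gamma_0^\infty(B_2) = P\Gamma_0^\infty(B_1)$; (iii) $\ker(P|_{\Gamma_\sc^\infty(B_1)}) = E_P\Gamma_0^\infty(B_2)$; and (iv) surjectivity of $P$ on $\Gamma_\sc^\infty$. The non-trivial geometric input comes from global hyperbolicity in the form of compactness of $J^+(K_1)\cap J^-(K_2)$ for compact $K_1,K_2$, and of $J^\pm(\Sigma)\cap J(K)$ for a Cauchy surface $\Sigma$ and compact $K$. The plan is to verify (i) and (ii) as essentially direct consequences of the defining properties G1--G3 together with the injectivity statement recalled just before the theorem, and then to perform (iii) and (iv) by a partition-of-unity argument based on two Cauchy surfaces.

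For (i), a compactly supported section is both past- and future-compact, so injectivity of $P$ on $\Gamma_\pc^\infty(B_1)$ (noted just before the theorem, citing~\cite{Baer:2015}) applies. For (ii), one inclusion is immediate from G2. For the reverse, if $f\in\Gamma_0^\infty(B_2)$ satisfies $E_P f=0$, then $h:=E_P^+ f = E_P^- f$ has support in $J^+(\supp f)\cap J^-(\supp f)$, which is compact by global hyperbolicity, and $Ph=f$ by G1. Hence $f\in P\Gamma_0^\infty(B_1)$.

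For (iii) and (iv), the key device is a smooth partition of unity $\{\chi, 1-\chi\}$ where $\chi\equiv 0$ in the past of a Cauchy surface $\Sigma_-$ and $\chi\equiv 1$ in the future of a Cauchy surface $\Sigma_+\subset I^+(\Sigma_-)$. For (iii), given $h\in\Gamma_\sc^\infty(B_1)$ with $Ph=0$, set $h_+:=\chi h$ and $h_-:=(1-\chi)h$. A short argument using $\supp h\subset J(K)$ and compactness of $J(K)\cap J^\mp(\Sigma_\pm)$ shows that $h_+$ is past-compact and $h_-$ is future-compact. Then $f:=Ph_+=-Ph_-$ has support that is both past- and future-compact, hence compact. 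Applying the extended Green operators $\overline E_P^\pm$ (which, on the appropriate support classes, invert $P$ and agree with $E_P^\pm$ on $\Gamma_0^\infty(B_2)$) gives $E_P^+ f = h_+$ and $E_P^- f = -h_-$, whence $E_P(-f)=h$.

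For (iv), given $g\in\Gamma_\sc^\infty(B_2)$, set $g_\pm$ analogously so that $g_+$ is past-compact and $g_-$ is future-compact, and define $h:=\overline E_P^+ g_+ + \overline E_P^- g_-\in\Gamma^\infty(B_1)$; then $Ph=g$. The remaining issue is to check that $h$ is spatially compact. Using $\supp g\subset J(K)$ for compact $K$, the identity $J(K)\cap J^+(\Sigma_-) = J^+(K)\cup\bigl(J^-(K)\cap J^+(\Sigma_-)\bigr)$ expresses the support of $g_+$ as contained in $J^+(K)\cup C$ for a compact set $C$, and hence $\supp \overline E_P^+ g_+\subset J^+(K\cup C)$; the mirror argument handles $g_-$. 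Combining gives $\supp h\subset J(K')$ for a compact $K'$, as required. The main technical obstacle in the proof is this spatial-compactness bookkeeping for the partition-induced decomposition; everything else is direct from G1--G3 and the properties of the extended Green operators cited from~\cite{Baer:2015}.
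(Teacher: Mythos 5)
Your proof is correct and follows essentially the same route as the paper: exactness at the two compactly supported positions by the standard direct arguments, and exactness at the spatially compact positions via a $\chi$-induced splitting into past- and future-compact pieces inverted by the extended Green operators $\overline{E}_P^\pm$ of $P$ alone (avoiding any appeal to Green operators of the adjoint). The only differences are cosmetic: you prove directly the first two positions that the paper cites from B\"ar--Ginoux, and you spell out the spatial-compactness bookkeeping for the surjectivity step that the paper leaves implicit.
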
 
\begin{proof}
	The sequence is a complex simply because $E_P$ is a difference of Green operators for $P$. It was shown to be exact at $\Gamma_0^\infty(B_1)$ and $\Gamma_0^\infty(B_2)$ in Theorem~3.5 of~\cite{BaerGinoux:2012} using properties of $P$ and $E_P^\pm$ only (an inessential difference is that~\cite{BaerGinoux:2012} take $B_2=B_1$). Exactness at $\Gamma^\infty_{\sc}(B_1)$ was proved in~\cite{BaerGinoux:2012} using Green operators of the adjoint (in the Green-hyperbolic case), but this can be avoided as follows. Suppose $F\in\Gamma^\infty_{\sc}(B_1)$ obeys $PF=0$. 
	Decompose $F=F^++F^-$ where $F_\pm\in \Gamma^\infty_{\pc/\fc}(B_1)$. 
	Then $F^\pm = \overline{E}_P^\pm P F^\pm$ and $P F^+=-P F^-$ must have compact support as the support is both future- and past-compact, so 
	$F = (\overline{E}_P^- - \overline{E}_P^+) PF^- = (E_P^- - E_P^+) PF^-\in \ran E_P$. Finally, any $G\in\Gamma_{\sc}^\infty(B_2)$ may be decomposed as
	$G=G^++G^-$ with $G^\pm\in \Gamma^\infty_{\pc/\fc}(B_2)$, whereupon $P(\overline{E}_P^+ G^+ + \overline{E}_P^- G^-)= G$. Thus the second $P$ in the complex is a surjection and exactness is proved.
\end{proof}

For later use, we note the following simple result. 
\begin{lemma}\label{lem:LU}
	Let $Q$ be a linear differential operator on $\Gamma^\infty(B_1\oplus B_2)$ with block decomposition
	\begin{equation}
		Q = \begin{pmatrix} P & R \\S & T \end{pmatrix},
	\end{equation}
	with respect to the decomposition $\Gamma^\infty(B_1\oplus B_2)\cong \Gamma^\infty(B_1)\oplus \Gamma^\infty(B_2)$.
	If $P$ is invertible, with $P^{-1}\Gamma_0^\infty(B_1)\subset\Gamma_0^\infty(B_1)$, and
	$W=T-SP^{-1}R$ has Green operators $E_W^\pm:\Gamma_0^\infty(B_2)\to \Gamma^\infty(B_2)$, then
	$Q$ has Green operators 
	\begin{equation}\label{eq:ET}
		E_Q^\pm = \begin{pmatrix}
			P^{-1} & -P^{-1}R E_W^\pm \\ 0 & E_W^\pm 
		\end{pmatrix}
		\begin{pmatrix}
			\id & 0 \\ -SP^{-1} & \id
		\end{pmatrix}.
	\end{equation}
	In particular, we note that 
	\begin{equation}\label{eq:ET2}
		\pi_2 E_Q^\pm = E_W^\pm \begin{pmatrix}
			-SP^{-1} & \id
		\end{pmatrix},
	\end{equation}
	where $\pi_2:\Gamma^\infty(B_1)\oplus \Gamma^\infty(B_2)\to \Gamma^\infty(B_2)$ is the canonical projection.
\end{lemma}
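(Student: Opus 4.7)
The plan is to exploit the LU-style block factorization
\[
Q = \begin{pmatrix} \id & 0 \\ SP^{-1} & \id \end{pmatrix}\begin{pmatrix} P & R \\ 0 & W \end{pmatrix},
\]
which one verifies by direct block multiplication using the defining identity $W = T - SP^{-1}R$. The left factor is algebraic and lower triangular, with obvious inverse $\begin{pmatrix} \id & 0 \\ -SP^{-1} & \id \end{pmatrix}$, so the problem of inverting $Q$ reduces to inverting the right factor. The latter can be inverted from the bottom up: given source $(a,b)^\top$, the bottom equation $Wv = b$ is solved by $v = E_W^\pm b$, and then the top equation $Pu + Rv = a$ is solved by $u = P^{-1}(a - Rv)$. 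Composing with the inverse of the left factor and regrouping yields the formula~\eqref{eq:ET} as an ansatz for $E_Q^\pm$.

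First I would verify the algebraic Green-operator identities G1 and G2 by direct block calculation. For $f = (a,b)^\top \in \Gamma_0^\infty(B_1) \oplus \Gamma_0^\infty(B_2)$, the top entry of $QE_Q^\pm f$ collapses to $a$ by immediate cancellation, while the bottom entry becomes $SP^{-1}a + WE_W^\pm(b - SP^{-1}a) = b$ on using $WE_W^\pm=\id$ and the definition of $W$. Conversely, for $g = (u,v)^\top \in \Gamma_0^\infty(B_1) \oplus \Gamma_0^\infty(B_2)$ with $(a,b)^\top=Qg$, the inner combination $b - SP^{-1}a$ reduces at once to $Wv$, so $E_W^\pm Wv = v$ and the remaining entry collapses to $P^{-1}(Pu) = u$. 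The hypothesis $P^{-1}\Gamma_0^\infty(B_1)\subset\Gamma_0^\infty(B_1)$, combined with the fact that differential operators preserve compactness of supports, ensures that every intermediate expression is compactly supported, so that $E_W^\pm$ may legitimately be applied.

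The main subtlety is the support condition G3. Writing $f = (a,b)^\top$ with supports contained in a compact set $K$, the lower component of $E_Q^\pm f$ is $E_W^\pm(b - SP^{-1}a)$, which lies in $J^\pm(\supp(b - SP^{-1}a))$ by the Green-operator property of $E_W^\pm$. To bound this by $J^\pm(K)$ one needs $\supp SP^{-1}a \subset K$; since $S$ is a differential operator and hence does not enlarge supports, the crucial point is that $\supp P^{-1}a \subset \supp a$. This is not literally a consequence of the stated hypothesis, but will hold in all the intended applications in which $P$ is a zeroth-order (algebraic) operator with pointwise-invertible principal part, so that $P^{-1}$ acts by pointwise multiplication and preserves supports exactly. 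Granted this, the top entry $P^{-1}(a - RE_W^\pm(\cdot))$ is controlled in the same way, completing G3. Finally,~\eqref{eq:ET2} is read off from the second row of~\eqref{eq:ET}: direct multiplication gives $\pi_2 E_Q^\pm = \bigl(\begin{smallmatrix} -E_W^\pm SP^{-1} & E_W^\pm \end{smallmatrix}\bigr) = E_W^\pm \bigl(\begin{smallmatrix} -SP^{-1} & \id \end{smallmatrix}\bigr)$.
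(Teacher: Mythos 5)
Your proof follows exactly the paper's route: the same $LU$ factorization $Q=LU$ with $L$ the unipotent lower-triangular factor and $U=\bigl(\begin{smallmatrix}P&R\\0&W\end{smallmatrix}\bigr)$, verification of G1--G2 by block computation, and the support property G3 inherited from $E_W^\pm$ and $P^{-1}$. The caveat you raise for G3 is a fair observation --- the stated hypothesis only guarantees that $P^{-1}$ preserves compactness of supports, not that $\supp P^{-1}a\subset J^\pm(\supp a)$ --- but the paper's own proof passes over this point with ``one checks directly'', implicitly relying on the fact that in every application $P$ is a zeroth-order invertible bundle morphism (e.g.\ $m^2$ or the mass matrix $\rho$), whose inverse is multiplication and hence does not enlarge supports, exactly as you note.
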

\begin{proof}
	As $P$ is invertible, we may perform an $LU$-decomposition $Q=LU$ where
	\begin{equation}
		L=\begin{pmatrix}
			\id & 0 \\ SP^{-1} & \id
		\end{pmatrix},\qquad
		U = \begin{pmatrix}
			P & R \\ 0 & W
		\end{pmatrix},
	\end{equation}	
	and $L$ is invertible on $\Gamma^\infty(B_1\oplus B_2)$ with inverse given by the second factor on the right-hand side of~\eqref{eq:ET}. Note that
	$L^{-1}\Gamma_0^\infty(B_1\oplus B_2)\subset\Gamma_0^\infty(B_1\oplus B_2)$. Define $E_U^\pm$ to be the first factor on the right-hand side of~\eqref{eq:ET}. One checks directly that $UE_U^\pm F=F=E_U^\pm UF$ and  $\supp E_U^\pm F\subset J^\pm(\supp F)$ for all $F\in \Gamma_0^\infty(B_1\oplus B_2)$, thus showing that $E_U^\pm:\Gamma_0^\infty(B_1\oplus B_2)\to \Gamma^\infty(B_1\oplus B_2)$ are retarded/advanced Green operators for $U$. It follows that $E_U^\pm L^{-1}Q F=F = Q E_U^\pm L^{-1} F$, $\supp E_U^\pm L^{-1}F\subset J^\pm(\supp F)$, so $E_Q^\pm$ are Green operators for $Q$.
\end{proof}

\begin{rem}
    Note that if $Q$ in Lemma~\ref{lem:LU} depends smoothly on a parameter $\lambda$ in such a way that the lemma applies for each $\lambda$, $E^\pm_{W_\lambda}$ depends smoothly on the parameter in the strong sense that $\lambda\mapsto E^\pm_{W_\lambda} f$ is smooth for each $f\in \Gamma_0^\infty(B_2)$, and $R$, $S$ and $P^{-1}$ are polynomial in $\lambda$, then it is apparent that $E^\pm_{Q_\lambda}$ depend smoothly on $\lambda$ in the same sense. The condition that $R$, $S$, and $P^{1}$ depend polynomially on $\lambda$ is chosen for simplicity; one could likely extend this result to any case where $S$, $R$, and $P^{-1}$ depend smoothly (in the strong sense) on $\lambda$ by applying the theory of topological vector spaces, see for example \cite{Fewster:2023} for the sort of convergence and Leibniz rules needed fr this extension.
\end{rem}

\subsection{Warm-up: the neutral Proca field}\label{sec:warmup}
 
As a warm-up exercise, let $(M,g)$ be an oriented $4$-dimensional globally hyperbolic spacetime, 
and consider the uncharged Proca field $W\in \Gamma^\infty(\Lambda^1M)$ obeying the   inhomogeneous equation
\begin{equation}\label{eq:P1}
	-\delta \dd W + m^2 W = J
\end{equation}
for $J\in \Gamma^\infty(\Lambda^1M)$. We seek
solutions $W^\pm$ that obey retarded/advanced boundary conditions, i.e., $W^\pm\in\Gamma^\infty_{\pc/\fc}(\Lambda^1M)$, for compactly supported source $J$.

The system~\eqref{eq:P1} is not normally hyperbolic,
but we can rewrite it as a partial differential equation supplemented by a side condition
\begin{equation}\label{eq:P2}
	(\Box+ m^2)W +\dd\phi = J, \qquad \phi = \delta W.
\end{equation}
in which $W$ is acted on by the normally hyperbolic $p$-form Klein--Gordon operator $K=\Box+m^2$.
The auxiliary field $\phi$ is a smooth $0$-form, i.e., $\phi\in \Gamma^\infty(\Lambda^0M)$. 
Taking the codifferential of~\eqref{eq:P1} gives 
\begin{equation}\label{eq:P3}
	m^2\phi = \delta J
\end{equation}
on reimposing the side condition.
Equations~\eqref{eq:P3} and~\eqref{eq:P2}, including side conditions, constitute the system
\begin{equation}
\label{eq:D}
	\begin{pmatrix}
		m^2 & 0 \\ \dd & K 
	\end{pmatrix}
	DW = DJ, \qquad D = \begin{pmatrix} \delta \\ \id \end{pmatrix}:\Gamma^\infty(\Lambda^1 M)\to \Gamma^\infty(\Lambda^0 M\oplus \Lambda^1 M),
\end{equation} 
and we note that the side condition can be written as
\begin{align}
\label{eq:const operator C}
    C\begin{pmatrix}
        \phi\\ W
    \end{pmatrix}=0\, ,\quad 
	C = \begin{pmatrix} \id & -\delta\end{pmatrix} : \Gamma^\infty(\Lambda^0 M\oplus \Lambda^1 M)\to \Gamma^\infty(\Lambda^0 M)\, ,
\end{align}
and is now satisfied automatically as $CD=0$.

An important point is that
\begin{equation}\label{eq:Q}
	Q = \begin{pmatrix}
		m^2 & 0 \\ \dd & K 
	\end{pmatrix}
\end{equation}
on $\Gamma^\infty(\Lambda^0 M\oplus\Lambda^1 M)$ is semi-Green-hyperbolic by Lemma~\ref{lem:LU}, because
the inverse of $m^2$ exists and does not change supports, and $K$ is normally hyperbolic. Indeed, 
\begin{equation}
	E_Q^\pm = \begin{pmatrix}
		m^{-2} & 0 \\ -m^{-2} E_K^\pm \dd & E_K^\pm
	\end{pmatrix}.
\end{equation}
Hence, if they exist, the unique solutions $W^\pm$ to~\eqref{eq:P1} satisfying retarded/advanced boundary conditions obey
\begin{equation}\label{eq:DWpmfromDJ}
	DW^\pm = E_Q^\pm DJ
\end{equation}
and can be recovered as
\begin{equation}\label{eq:WpmfromDWpm}
	W^\pm = \pi E_Q^\pm DJ,
\end{equation}
where 
\begin{equation}
	\pi = \begin{pmatrix} 0 & \id \end{pmatrix}: \Gamma^\infty(\Lambda^0 M\oplus \Lambda^1 M)\to \Gamma^\infty(\Lambda^1 M)
\end{equation}
is a left-inverse to $D$, i.e., $\pi D = \id$. A simple calculation shows that the Proca operator
$P=-\delta\dd+m^2$ can be written in terms of the operators just introduced in the form
\begin{equation}\label{eq:piQD}
	P = \pi Q D.
\end{equation}

Equation~\eqref{eq:WpmfromDWpm} shows that the only possible candidates for advanced/retarded Green operators of $P$ are 
\begin{equation}
	E_P^\pm = \pi E_Q^\pm D,
\end{equation}
which indeed reproduces the standard expressions as a result of the calculation
\begin{equation}
	E_P^\pm = -m^{-2}E_K^\pm \dd \delta +E_K^\pm =E_K^\pm (\id-m^{-2}\dd\delta).
\end{equation}

Although this calculation has resulted in the correct answer, it is worth reflecting on why it works. While equation~\eqref{eq:DWpmfromDJ} implies~\eqref{eq:WpmfromDWpm}, the converse implication is not immediate because $\pi$ is not a right-inverse of $D$. Instead, one has 
\begin{equation}
    D\pi  = \id - \iota_\aux C, 
\end{equation}
where $\iota_\aux:\Gamma^\infty(\Lambda^0M)\to \Gamma^\infty(\Lambda^0M\oplus\Lambda^1M)$ is the canonical embedding. The consistency of our procedure therefore depends on showing that 
\begin{equation}\label{eq:constraint}
	CE_Q^\pm DJ =0,
\end{equation} 
for all $J\in\Gamma_0^\infty(\Lambda^1M)$.  
Due to the support of $E^\pm_QDJ$, the constraint equation~\eqref{eq:constraint} holds in the far past/future. Now
we may note that
\begin{equation}\label{eq:KCCQ}
	KC = \begin{pmatrix}
		K & -K \delta
	\end{pmatrix} = \begin{pmatrix} \id &-\delta\end{pmatrix} \begin{pmatrix}
		m^2 & 0 \\ \dd & K 
	\end{pmatrix}
	= CQ,	
\end{equation}
where we use the same symbol $K$ for the Klein--Gordon operator on $0$- and $1$-forms.
Thus
\begin{equation}
	KC E_Q^\pm DJ = CQ E_Q^\pm DJ = CDJ = 0.
\end{equation}
 As $K$ is normally hyperbolic and~\eqref{eq:constraint} holds in the far past/future,
it follows that~\eqref{eq:constraint} holds globally, as required.

The purpose of taking this rather pedestrian route is that it illustrates an approach that can be used in other situations. Collecting together the main ingredients, we have an operator $P$ on $\Gamma^\infty(B)$, an auxiliary bundle $B_\aux$ and a map $D:\Gamma^\infty(B)\to \Gamma^\infty(B\oplus B_\aux)$ with a left-inverse $\pi$, so that $P=\pi Q D$ where $Q$ is semi-Green-hyperbolic on $\Gamma^\infty(B\oplus B_\aux)$. An important role was played by the constraint operator $C$ satisfying $CD=0$ and the identity $KC=CQ$. 

\subsection{Abstract existence result}

The essential features of the foregoing discussion can be summarized (in a slightly generalized form)
as follows.
\begin{thm}\label{thm:abs}
	Let $B$ and $B_\aux$ be finite-dimensional complex vector bundles. Let
	$P$, $D$, $\pi$, $C$, $Q$ and $N$ be linear differential operators 
	with domains and codomains shown in the following diagram:
	\begin{equation}\label{eq:cd1}
		\begin{tikzcd}
			  \Gamma^\infty(B) \arrow[r, "D"]\arrow[d, "P"] &\Gamma^\infty(B\oplus B_\aux) \arrow[r, "C"]\arrow[d,"Q"] & \Gamma^\infty(B_\aux) \arrow[d,"N"]     \\
				\Gamma^\infty(B)  & \Gamma^\infty(B\oplus B_\aux)
			\arrow[l, "\pi"]\arrow[r,"C"]  & \Gamma^\infty(B_\aux) . 
		\end{tikzcd}
	\end{equation}
	Suppose that (a) $P=\pi QD$ and $CQ=NC$, i.e., the diagram~\eqref{eq:cd1} commutes;  
	(b) $\pi D=\id$, and $D\pi+\iota_\aux C =\id$, where 
	$\iota_\aux:\Gamma^\infty(B_\aux)\to \Gamma^\infty(B\oplus B_\aux)=\Gamma^\infty(B)\oplus \Gamma^\infty(B_\aux)$ is  
	the canonical embedding; 
	and (c) $Q$ is semi-Green-hyperbolic with Green operators $E_Q^\pm$, 
	while $\Gamma_{\pc/\fc}^\infty(B_\aux)\cap \ker N=\{0\}$ (which holds in particular if $N$ is semi-Green-hyperbolic). Then $P$ is semi-Green-hyperbolic, with Green operators 
	\begin{equation}
    \label{eq:Green op P abstr}
		E_P^\pm=\pi E_Q^\pm D.
	\end{equation}
    Moreover, $D$ is injective and $\ker C=\ran D$, i.e., the top line of~\eqref{eq:cd1} extends to an exact sequence
    $0\to\Gamma^\infty(B)\to \Gamma^\infty(B\oplus B_\aux)\to  \Gamma^\infty(B_\aux)$.
    In cases where $\pi\iota_\aux=0$, it also holds that $C$ is surjective and so the top line 
    ~\eqref{eq:cd1} extends to a short exact sequence. 
    If, additionally, $N$ is semi-Green-hyperbolic, then the diagram displayed in Figure~\ref{fig:chainPQ} commutes, and its rows and columns are all exact sequences, with the columns extending to short exact sequences if $\pi\iota_\aux=0$.
 
	\end{thm}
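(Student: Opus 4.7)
My strategy is to distill hypotheses (a) and (b) into a short algebraic dictionary, use it to verify the Green-operator axioms G1--G3 for the proposed formula $E_P^\pm=\pi E_Q^\pm D$, and then read off the exactness statements and the commutativity of Figure~\ref{fig:chainPQ} as essentially formal consequences.

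First I extract the key identities. Applying $\pi$ on the left of $D\pi+\iota_\aux C=\id$ and using $\pi D=\id$ gives $\pi\iota_\aux C=0$; applying the same identity to $Df$ and using $\pi D=\id$ yields $\iota_\aux CD=0$, and since $\iota_\aux$ is the canonical bundle embedding (hence pointwise injective) this produces the crucial relation $CD=0$, whence also $NCD=CQD=0$ by (a). With these in hand, G3 is immediate since differential operators do not enlarge supports and $E_Q^\pm$ respects causal supports, while G2 is the one-line computation
\[
E_P^\pm Pf = \pi E_Q^\pm D\pi QDf = \pi E_Q^\pm(\id-\iota_\aux C)QDf = \pi E_Q^\pm QDf - \pi E_Q^\pm\iota_\aux NCDf = \pi Df = f
\]
on $f\in\Gamma_0^\infty(B)$, using $CQ=NC$ and $CD=0$.

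The main obstacle is G1, which asks for $PE_P^\pm f=f$ and reduces, via $P=\pi QD$ and $D\pi=\id-\iota_\aux C$, to showing $CE_Q^\pm Df=0$. Here hypothesis (c) enters decisively: the section $CE_Q^\pm Df$ lies in $\Gamma^\infty_{\pc/\fc}(B_\aux)$ because $E_Q^\pm$ produces past/future-compactly supported sections, and the constraint-propagation computation $NCE_Q^\pm Df = CQE_Q^\pm Df = CDf = 0$ places it in $\ker N$; by (c) it must vanish. Thus $E_P^\pm$ are Green operators for $P$, and $P$ is semi-Green-hyperbolic.

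Exactness of the top row of~\eqref{eq:cd1} then falls out: injectivity of $D$ is $f=\pi Df$, the inclusion $\ran D\subset\ker C$ is $CD=0$, and conversely $Cg=0$ implies $g=D\pi g+\iota_\aux Cg=D\pi g\in\ran D$. When $\pi\iota_\aux=0$, applying $D\pi+\iota_\aux C=\id$ to $\iota_\aux h$ and using pointwise injectivity of $\iota_\aux$ yields $C\iota_\aux=\id$, so $C$ is surjective. Finally, for Figure~\ref{fig:chainPQ}, whose rows I take to be the five-term sequences produced by Theorem~\ref{thm:exactseq} applied to $P$, $Q$, $N$ and whose columns are the ``$D$ then $C$'' sequences in each support class, commutativity of every square reduces to four intertwining identities: $DP=QD$ (from $P=\pi QD$, $D\pi=\id-\iota_\aux C$ and $CD=0$), $CQ=NC$ (given), and $DE_P^\pm=E_Q^\pm D$ and $CE_Q^\pm=E_N^\pm C$ (both by Lemma~\ref{lem:intertwine} applied to the previous two). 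Exactness of the rows is Theorem~\ref{thm:exactseq}, and exactness of the columns is the argument just given restricted to $\Gamma_0^\infty$ and $\Gamma^\infty_\sc$, with the short-exact extension supplied by $\pi\iota_\aux=0$ as before.
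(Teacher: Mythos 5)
Your proposal is correct and follows essentially the same route as the paper: derive $CD=0$ from (b), verify G2 via $DP=QD$, establish G1 by propagating the constraint ($NCE_Q^\pm Df=CQE_Q^\pm Df=CDf=0$ together with the triviality of $\ker N$ on past/future-compactly supported sections), and obtain the exactness and commutativity claims from the same algebraic identities plus Lemma~\ref{lem:intertwine} and Theorem~\ref{thm:exactseq}. The only differences are cosmetic, e.g.\ your direct computation of $NC\Psi$ in place of the paper's subtraction of the two identities $D\pi Q\Psi+\iota_\aux NC\Psi=DJ$ and $D\pi Q\Psi=DJ$.
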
  
	\begin{figure}
		\begin{center}
			\[
				\begin{tikzcd}
				& 0\arrow[d] &  0\arrow[d] & 0\arrow[d] & 0\arrow[d] & \\
				0 \arrow[r] & \Gamma_0^\infty(B) \arrow[r, "P"]\arrow[d,"D"] &\Gamma_0^\infty(B) \arrow[r, "E_P"]\arrow[d,"D"] & \Gamma^\infty_{\sc}(B) \arrow[r,"P"]\arrow[d,"D"] & \Gamma^\infty_{\sc}(B) \arrow[r]\arrow[d,"D"] & 0 \\
				0 \arrow[r] & \Gamma_0^\infty(B\oplus B_\aux)  \arrow[r, "Q"]\arrow[d,"C"] 
				& \Gamma_0^\infty(B\oplus B_\aux) \arrow[r, "E_Q"]\arrow[d,"C"] 
				& \Gamma_{\sc}^\infty(B\oplus B_\aux)  \arrow[r,"Q"]\arrow[d,"C"] 
				& \Gamma_{\sc}^\infty(B\oplus B_\aux) \arrow[r]\arrow[d,"C"] 
				& 0\\
                0 \arrow[r] & \Gamma_0^\infty(B_\aux) \arrow[r, "N"]  &\Gamma_0^\infty(B_\aux) \arrow[r, "E_N"]  & \Gamma^\infty_{\sc}(B_\aux) \arrow[r,"N"] & \Gamma^\infty_{\sc}(B_\aux) \arrow[r]  & 0
			\end{tikzcd}
			\]
		\end{center}
		\caption{Commutative diagram for Theorem~\ref{thm:abs}.}\label{fig:chainPQ}
	\end{figure}
In the neutral Proca field example, $B=\Lambda^1M$, $B_\aux=\Lambda^0M$, while $N=K$ on $0$-forms. 
Equations~\eqref{eq:piQD} and~\eqref{eq:KCCQ} show that~(a) holds; $D\pi +\iota_\aux C=\id$ and
$\pi D=\id$ are simple calculations; and we have already noted that both $Q$ and $N$ are semi-Green-hyperbolic.
Thus Theorem~\ref{thm:abs} applies  with $\pi\iota_\aux=0$.  
\begin{proof}
We prove the statements about $C$ and $D$ first.
Injectivity of $D$ follows from the existence of a left inverse assumed in~(b). 
Next, the identity $D\pi D + \iota_\aux CD = D$ together with $\pi D=\id$ and injectivity of $\iota_\aux$ implies that $CD=0$, so $\ran D\subset \ker C$. 
Conversely, if $\psi\in\ker C$, then $D\pi\psi  = (D\pi+\iota_\aux C)\psi = \psi$, so $\psi\in\ran D$. 
Thus $\ker C=\ran D$, completing the proof that the top line of~\eqref{eq:cd1} extends to the exact sequence
\begin{equation}\label{eq:cd2}
		\begin{tikzcd}
			0\arrow[r] & \Gamma^\infty(B) \arrow[r, "D"]  &\Gamma^\infty(B\oplus B_\aux) \arrow[r, "C"]  & \Gamma^\infty(B_\aux) .
		\end{tikzcd}
\end{equation}
In cases where $\pi\iota_\aux=0$, surjectivity of $C$ holds because $\iota_\aux C\iota_\aux = \iota_\aux - D\pi\iota_\aux =\iota_\aux$ and injectivity of $\iota_\aux$, so~\eqref{eq:cd2} extends to a short exact sequence.
    
Next, let $\Sc$ stand for either compact/spacelike compact support (in fact, $\Sc$ could be any support system~\cite{Baer:2015}). As condition~(b) holds with $\Gamma^\infty(B)$ and $\Gamma^\infty(B\oplus B_\aux)$ replaced by $\Gamma^\infty_\Sc(B)$ and $\Gamma^\infty_\Sc(B\oplus B_\aux)$, it follows that we also have exact sequences when $\Gamma^\infty$ is replaced by $\Gamma^\infty_\Sc$ in~\eqref{eq:cd2}. Thus all the columns in Figure~\ref{fig:chainPQ} are exact and the same argument as above shows that they extend to short exact sequences if $\pi\iota_\aux=0$.
	 
The operators $E_P^\pm=\pi E_Q^\pm D$ are well-defined due to~(c) and the fact that $D\Gamma^\infty_0(B)\subset \Gamma_0^\infty(B\oplus B_\aux)$, and inherit the support property G3 from $E_Q^\pm$. 
Next, we check that $E_P^\pm$ obey G1 and G2. 
For G2, we first note that
\begin{equation}\label{eq:DPeqQD}
		DP = D\pi QD = QD - \iota_\aux CQD = QD - \iota_\aux NCD=QD
\end{equation}
on $\Gamma^\infty(B)$, using assumptions~(a), (b), (a) and the identity $CD=0$. Then we compute
\begin{equation}
		E_P^\pm P = \pi E_Q^\pm DP= \pi E_Q^\pm QD=\pi D =\id
\end{equation}
on $\Gamma_0^\infty(B)$, using~\eqref{eq:DPeqQD} together with assumptions~(c) and then~(b).
For G1, let $J\in\Gamma_0^\infty(B)$ and set $\Psi=E_Q^\pm DJ\in\Gamma^\infty_{\pc/\fc}(B\oplus B_\aux)$. 
Then we have $Q\Psi=DJ$, which implies the two equations $D\pi Q\Psi + \iota_\aux NC\Psi = DJ$, by~(b) and~(a), and 	$D\pi Q\Psi = D\pi D J=DJ$ by~(b). Subtracting, we find $\iota_\aux NC\Psi=0$ and therefore $C\Psi=0$ using injectivity of $\iota_\aux$ and injectivity of $N$ on $\Gamma^\infty_{\pc/\fc}(B_\aux)$. As $J$ is arbitrary, one has $CE_Q^\pm D=0$ on $\Gamma_0^\infty(B)$, allowing us to calculate
\begin{equation}
		PE_P^\pm = \pi QD\pi E_Q^\pm D=  
		\pi QE_Q^\pm D - \pi Q\iota_\aux C E_Q^\pm D = \pi D = \id  
\end{equation}
on $\Gamma_0^\infty(B)$, also using~(a),~(c) and~(b), thus completing the proof that $E^\pm_P$ are Green operators for $P$.

Finally, all the rows in Figure~\ref{fig:chainPQ} are	exact by semi-Green-hyperbolicity of $P$, $Q$ and $N$, and Theorem~\ref{thm:exactseq}; the exactness of the columns  (or their extensions if $\pi\iota_\aux=0$) has already been shown. 
The left-hand and right-hand squares commute by restricting the identities from~(a) and~\eqref{eq:DPeqQD}. 
Lemma~\ref{lem:intertwine} then shows that $E_Q^\pm D=DE_P^\pm$ and $E_N^\pm C=CE_Q^\pm$ and hence that the central squares commute. 
Therefore the diagram in Figure~\ref{fig:chainPQ} commutes. 
\end{proof}

\begin{rem}
If $P$ depends on a parameter $\lambda$ in such a way that each $P_\lambda$ may be shown to be semi-Green-hyperbolic by Theorem~\ref{thm:abs} with $\pi_\lambda$, $D_\lambda$ depending polynomially on $\lambda$, and $E^\pm_{Q_\lambda}$ depending smoothly on $\lambda$ in the strong sense, then $E^\pm_{P_\lambda}$ will also be smooth in the strong sense. 
\end{rem}

The structure of the Green operators in \eqref{eq:Green op P abstr} allows one to draw conclusions about the $\Vc^\pm$-decomposability of the operator $P$, which is an important property for the definition of Hadamard states in the quantized theory discussed later on.

\begin{defn}{\cite[Def. 5.2]{Fewster:2025a}}\label{def:decomposable_sGHO}
    We say that a semi-Green-hyperbolic operator $P:\Gamma^\infty(B)\to \Gamma^\infty(B)$ is  $\Vc^\pm$-decomposable if the distributional kernel of the advanced-minus-retarded operator $E_P=E^-_P-E^+_P$ satisfies
    \begin{align}
        \WF(E_P)\subset (\Vc^+\times \Vc^-)\cup (\Vc^-\times\Vc^+)
    \end{align}
    for conic, relatively closed sets $\Vc^\pm$  in $\dot{T}^*M$ so that $\Vc^-=-\Vc^+$ and $\Vc^+\cap\Vc^-=\emptyset$.
\end{defn}

An important example for a pair $\Vc^\pm$ are the future and past light cones
\begin{equation}
    \Nc^\pm=\{(x,k)\in \dot{T}^*M: g^{-1}_x(k,k)=0\, , \, \, \pm k(u)>0\}\, ,
\end{equation}
where $u\in \Gamma^\infty(TM)$ is any future-directed vector field.

It follows as an immediate corollary of Theorem~\ref{thm:abs} that the operator $P$ is $\Vc^\pm$-decomposable if $Q$ is. 
The following result is useful if the semi-Green-hyperbolicity of $Q$ was obtained by using Lemma~~\ref{lem:LU}.
\begin{prop}
\label{prop:decomp}
    Let $P$, $\pi$, $D$, $C$, $Q$, $N$ satisfy the conditions of  Theorem~\ref{thm:abs} (but not yet assuming that $Q$ is semi-Green-hyperbolic). Assume that one can split $B_\aux=B_1\oplus B'_2$, so that $B_\aux\oplus B=B_1\oplus B'_2\oplus B=:B_1\oplus B_2$. Let $\pi: \Gamma^\infty(B\oplus B_\aux)\to\Gamma^\infty(B)$ be such that $\pi \iota_1=0$, where $\iota_1:\Gamma^\infty(B_1)\to \Gamma^\infty(B\oplus B_\aux)$ is the natural injection.
    Assume that $Q$ is of the form
     \begin{equation}
        Q=\begin{pmatrix}
            J & R\\ S & T
        \end{pmatrix}
    \end{equation}
   with respect to the decomposition $\Gamma^\infty(B\oplus B_\aux)\cong \Gamma^\infty(B_1)\oplus \Gamma^\infty(B_2)$.
    Let $J$ be invertible with inverse $J^{-1}:\Gamma_0^\infty(B_1)\to \Gamma_0^\infty(B_1)$ so that $SJ^{-1}$ is a differential operator. Then if
   \begin{equation}
       W=T-SJ^{-1}R
   \end{equation}
   is a $\Vc^\pm$-decomposable semi-Green-hyperbolic operator, $P$ is semi-Green-hyperbolic and $\Vc^\pm$-decomposable.
\end{prop}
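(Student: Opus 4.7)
The plan is to chain Lemma~\ref{lem:LU} and Theorem~\ref{thm:abs} to produce $E_P^\pm$ in explicit form, and then transfer the microlocal bound from $E_W$ to $E_P$.

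First, I would invoke Lemma~\ref{lem:LU} to establish the semi-Green-hyperbolicity of $Q$. The three hypotheses of that lemma are exactly what is supplied by the proposition: $J$ is invertible with $J^{-1}$ preserving compact supports, $SJ^{-1}$ is a differential operator so that the $LU$-factorisation of $Q$ makes sense as an identity of differential operators, and $W=T-SJ^{-1}R$ admits Green operators $E_W^\pm$. Lemma~\ref{lem:LU} then yields $E_Q^\pm$ in the block form~\eqref{eq:ET}, with the second-component projection given by~\eqref{eq:ET2}. With $Q$ semi-Green-hyperbolic, the remaining hypotheses of Theorem~\ref{thm:abs} are already in force by assumption, and we conclude that $P$ is semi-Green-hyperbolic with $E_P^\pm=\pi E_Q^\pm D$.

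Next, I would exploit the hypothesis $\pi\iota_1=0$. Since $\iota_1:\Gamma^\infty(B_1)\to \Gamma^\infty(B\oplus B_\aux)$ is the injection into the $B_1$-summand of the decomposition $B\oplus B_\aux\cong B_1\oplus B_2$, this condition means $\pi$ vanishes on the $B_1$-component and therefore factors as $\pi=\tilde\pi\circ \pi_2$ for some differential operator $\tilde\pi:\Gamma^\infty(B_2)\to \Gamma^\infty(B)$, where $\pi_2$ is the canonical projection. Substituting~\eqref{eq:ET2} gives
\begin{equation}
E_P^\pm=\tilde\pi\, E_W^\pm \begin{pmatrix} -SJ^{-1} & \id \end{pmatrix} D,
\end{equation}
and subtracting then yields $E_P=\tilde\pi\, E_W \begin{pmatrix} -SJ^{-1} & \id \end{pmatrix} D$, where $E_W=E_W^- -E_W^+$.

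Finally, I would read off the wavefront bound. The operators $\tilde\pi$, $SJ^{-1}$ and $D$ are all differential and therefore pseudolocal with Schwartz kernels supported on the diagonal of $M\times M$, so their composition with a bidistribution on either side cannot enlarge its wavefront set. Applied to $E_W$, whose wavefront set lies in $(\Vc^+\times \Vc^-)\cup(\Vc^-\times\Vc^+)$ by the assumed $\Vc^\pm$-decomposability of $W$, this immediately produces the same containment for $E_P$, establishing the $\Vc^\pm$-decomposability of $P$. The main delicate point is precisely this microlocal inheritance step, which is why the proposition explicitly requires $SJ^{-1}$ to be a genuine differential operator rather than some more general pseudodifferential or nonlocal object; with that hypothesis in hand, the remainder of the argument reduces to diagram-chasing through Lemma~\ref{lem:LU} and Theorem~\ref{thm:abs}.
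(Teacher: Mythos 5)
Your proposal is correct and follows essentially the same route as the paper: semi-Green-hyperbolicity of $Q$ via Lemma~\ref{lem:LU}, then Theorem~\ref{thm:abs} for $P$, then the explicit formula $E_P^\pm=\pi_2 E_W^\pm(D_2-SJ^{-1}D_1)$ (your $\tilde\pi E_W^\pm\begin{pmatrix}-SJ^{-1}&\id\end{pmatrix}D$ is the same expression in different notation), and finally the observation that composing $E_W$ with differential operators on either side cannot enlarge the wavefront set.
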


\begin{proof}
    The semi-Green-hyperbolicity of $P$ follows immediately from Lemma~\ref{lem:LU} and Theorem~\ref{thm:abs}. To show the decomposability, we note that under the identification $\Gamma^\infty(B\oplus B_\aux)\cong \Gamma^\infty(B_1)\oplus \Gamma^\infty(B_2)$  we can decompose
    \begin{equation}
        \pi=\begin{pmatrix}
           0 & \pi_2
        \end{pmatrix}\, ,
    \end{equation}
    for some differential operator $\pi_2:\Gamma^\infty(B_2)\to \Gamma^\infty(B)$, and
    \begin{equation}
        D=\begin{pmatrix}
            D_1 \\ D_2
        \end{pmatrix}\, ,
    \end{equation}
    where $D_1:\Gamma^\infty(B)\to \Gamma^\infty(B_1)$ and $D_2:\Gamma^\infty(B)\to \Gamma^\infty(B_2)$ are differential operators. Combining Theorem~\ref{thm:abs} with Lemma~\ref{lem:LU}, one can then compute, using the splitting into $B_1$ and $B_2$,
    \begin{align}
        E^\pm_P=&\pi E^\pm_Q D= \begin{pmatrix}
            0 & \pi_2
        \end{pmatrix}\begin{pmatrix}
            J^{-1} & -J^{-1} R E^\pm_W \\ 0 & E^\pm_W
        \end{pmatrix}\begin{pmatrix}
            \id & 0\\ -SJ^{-1} & \id
        \end{pmatrix}\begin{pmatrix}
            D_1 \\ D_2
        \end{pmatrix}\nonumber\\
        =&\pi_2 E^\pm_W \left(D_2
            -SJ^{-1}D_1\right)\, ,
    \end{align}
   By assumption, $SJ^{-1}$ is a differential operator. Thus, denoting the distributional kernels of $E_P$ and $E_W$ by $E_P$ and $E_W$ as well, one has
    \begin{align}
        \WF(E_P)=\WF((\pi_2\otimes {}^\star(D_2-SJ^{-1}D_1)) E_W) \subset \WF(E_W) \subset (\Vc^+\times \Vc^-)\cup (\Vc^-\times \Vc^+)\, ,
    \end{align}
    where ${}^\star(D_2-SJ^{-1}D_1):\Gamma^\infty(B_2^\star)\to \Gamma^\infty(B^\star)$ is the formal dual of $(D_2-SJ^{-1}D_1)$.
\end{proof}

To conclude this section, we compare our approach with Khavkine's detailed study of hyperbolizable operators arising from variational principles~\cite{Khavkine:2014} whose aims partly overlap with ours. In the terms of~\cite{Khavkine:2014}, our hypotheses assert that the operator $P$ is equivalent to a \emph{constrained hyperbolic system} formed by $Q$ and $C$ (in the sense that their solution spaces may be put in one-to-one correspondence) and furthermore, that the constraints are \emph{hyperbolically integrable}. In the case where $N$ is semi-Green-hyperbolic, the conclusions of Theorem~\ref{thm:abs} imply that the constraints are also \emph{parametrizable}, i.e., the diagram
\begin{equation} \label{eq:cd3}
	\begin{tikzcd}
		\Gamma^\infty(B) \arrow[r, "D"]\arrow[d, "P"] &\Gamma^\infty(B\oplus B_\aux) \arrow[r, "C"]\arrow[d,"Q"] & \Gamma^\infty(B_\aux) \arrow[d,"N"]     \\
		\Gamma^\infty(B)\arrow[r, "D"]  & \Gamma^\infty(B\oplus B_\aux)
		\arrow[r,"C"]  & \Gamma^\infty(B_\aux) . 
	\end{tikzcd}
\end{equation}	
commutes, and $P$, $Q$ and $N$ are semi-Green-hyperbolic.\footnote{This remains true modulo slight generalization of definitions if $N$ obeys the weaker hypotheses in Theorem~\ref{thm:abs}.}
However, our method differs from that of~\cite{Khavkine:2014} in an important respect, namely that we typically expect $P$ and $Q$ to act on different bundles (i.e., $B_\aux$ is nontrivial)
whereas the definition of hyperbolizability and the notion of equivalence used from Section~3 onwards in~\cite{Khavkine:2014} require that the analogous operators would act on the same bundle. This is a key aspect of our approach and one which is particularly important in analysing systems such as the charged Proca and Proca-scalar systems below. In particular, one cannot directly use~\cite{Khavkine:2014} to deduce semi-Green-hyperbolicity of these systems. Note also that~\cite{Khavkine:2014} allows for gauge symmetries, which we do not address here.

\section{Applications}\label{sec:appli}

 In this section, we apply the abstract result Theorem~\ref{thm:abs} to a number of specific examples.

\subsection{Proca multiplets with variable mass matrix}

Let $B=(\Lambda^1M)\otimes \CC^k$ with $k\ge 1$ and consider a multiplet of $k$ Proca fields 
$W\in\Gamma^\infty(B)$ obeying 
\begin{equation}\label{eq:multiProca}
	(-\delta \dd \otimes 1_k + \rho )W=J,
\end{equation}
where  $J\in \Gamma_0^\infty(B)$, $\rho\in C^\infty(M)\otimes M_k(\CC)$ is a smooth field of strictly positive matrices -- the square of the mass matrix,
and $1_k$ denotes the $k\times k$ identity matrix. 
Let $B_\aux= (\Lambda^0M)\otimes\CC^k$, which embeds in $B\oplus B_\aux$ by $\iota_\aux\otimes 1_k$ where $\iota_\aux$ is as in Section~\ref{sec:warmup}. Introducing the auxiliary field $\phi=(\delta\otimes 1_k)W\in \Gamma^\infty(B_\aux)$, Eq.~\eqref{eq:multiProca} becomes
\begin{equation}\label{eq:multiProca2}
	(\Box\otimes 1_k+\rho)W - (\dd\otimes 1_k)\phi = 0,\qquad
	\phi=(\delta\otimes 1_k)W
\end{equation}
and applying $\delta\otimes 1_k$ and Leibniz' rule to~\eqref{eq:multiProca} we find that 
\begin{equation}\label{eq:multiProca3}
\rho \phi - (\dd\rho)^\bullet W = \delta\otimes 1_k J,
\end{equation}  
where $((\dd\rho)^\bullet W)^i= (\dd\rho^i_{\phantom{i}j})^\mu W^j_\mu$, using an index notation for the component fields of $W$ and $\phi$. 
Then~\eqref{eq:multiProca3} and~\eqref{eq:multiProca2} can be written in the form $Q(D\otimes 1_k)W= (D\otimes 1_k)J$, where
\begin{equation}
\label{eq:Q multiProca}
		Q = \begin{pmatrix}
		\rho & -(\dd\rho)^\bullet \\ \dd\otimes 1_k & \Box\otimes 1_k+\rho
	\end{pmatrix}, 
\end{equation}
and $D$ is as in Section~\ref{sec:warmup}. Furthermore, 
 $P=-\delta \dd \otimes 1_k + \rho$ can be written as  
\begin{equation}
P = (\pi\otimes 1_k) Q (D\otimes 1_k),  
\end{equation}
where $\pi$ is as in Section~\ref{sec:warmup}. Note that $(\pi\otimes 1_k)\circ (\iota_\aux\otimes 1_k)=0$. A short calculation gives 
\begin{equation}
(C\otimes 1_k) Q =  N (C\otimes 1_k), \qquad N = \Box\otimes1_k + \rho =-\delta\dd\otimes 1_k + \rho,
\end{equation}
where $C$ is as in \eqref{eq:const operator C}. 
Using Lemma~\ref{lem:LU}, $Q$ is seen to be semi-Green-hyperbolic with Green operators obeying
\begin{equation}
\label{eq:R multiProca}
	(\pi\otimes 1_k) E_Q^\pm = E_{R}^\pm \begin{pmatrix}
		-(\dd\otimes 1_k)\rho^{-1} & \id
	\end{pmatrix}, \qquad \textnormal{where}\quad R =  \Box\otimes 1_k+\rho +(\dd\otimes 1_k) \rho^{-1} (\dd\rho)^\bullet
\end{equation}
is normally hyperbolic. We now apply Theorem~\ref{thm:abs} to the operators $P$, $\pi\otimes 1_k$, $D\otimes 1_k$, $C\otimes 1_k$, $Q$ and $N$. Conditions (a) and (c) have already been verified;
condition~(b) is inherited from the $k=1$ case, so we deduce that $P$ is semi-Green-hyperbolic with Green operators 
\begin{equation}
\label{eq:GreenOp multiProca}
	E_P^\pm = (\pi\otimes 1_k)E_Q^\pm(D\otimes 1_k)= E_R^\pm \left( \id - (\dd\otimes 1_k) \rho^{-1} (\delta\otimes 1_k)\right),
\end{equation}
with the neutral Proca field as an obvious special case where $k=1$, $\rho=m^2$. 
This expression can be verified directly, by first proving the identities (now suppressing the tensor product with the unit)
\begin{equation}
P -R  = \dd(\delta-\rho^{-1}(\dd\rho)^\bullet) 
= \dd\rho^{-1}\delta\rho,\qquad	\delta R = N\rho^{-1}\delta \rho,  
\end{equation}
the second of which implies $\rho^{-1}\delta \rho E_R^\pm = E_N^\pm \delta$  by Lemma~\ref{lem:intertwine}, and then computing
\begin{align}
	PE_P^\pm &= RE_R^\pm (\id-\dd\rho^{-1}\delta)  + (P-R)E_R^\pm(\id-\dd\rho^{-1}\delta)
	=\id - \dd\rho^{-1}\delta + \dd \rho^{-1}\delta\rho E_R^\pm(\id-\dd\rho^{-1}\delta)\nonumber\\
	&= 
	\id - \dd\rho^{-1}\delta + \dd E_N^\pm(\rho-\delta \dd)\rho^{-1}\delta  = \id - \dd\rho^{-1}\delta + \dd E_N^\pm N\rho^{-1}\delta
	= \id
\end{align}
as well as $E_P^\pm P = E_R^\pm (1-\dd\rho^{-1}\delta)(-\delta \dd + \rho)= E_R^\pm R = \id$.

Using the preceding results on the Proca operator and the fact that the mass matrix $\rho$ is strictly positive  with respect to the standard inner product on $\CC^k$, and hence in particular hermitian, it is straightforward to see that the operator $(-\delta \dd \otimes 1_k + \rho )$ is formally hermitian with respect to the pairing
 $(W,V)\mapsto \delta_{ij}\IP{W^i,V^j}_1$
and is therefore Green-hyperbolic. Moreover, the fact that $R$ is normally hyperbolic, and hence by \cite[Theorem 5.3]{Fewster:2025a} $\Nc^\pm$-decomposable, entails that $(-\delta \dd \otimes 1_k + \rho )$ is a $\Nc^\pm$-decomposable operator. Overall, we have shown
\begin{thm}
    The Proca multiplet operator with variable mass matrix, $P=(-\delta\dd\otimes 1_k+\rho)$ acting on $\Gamma^\infty(\Lambda^1M\otimes \CC^k)$, is formally hermitian, Green-hyperbolic and $\Nc^\pm$-decomposable on any globally hyperbolic spacetime $M$ for any $k\geq 1$ and any strictly positive $\rho \in C^\infty(M)\otimes M_k(\CC)$. Its Green operators $E^\pm_{P}$ are given by \eqref{eq:GreenOp multiProca}, where the operator $Q$ acting on sections of $(\Lambda^0M\otimes \CC^k)\oplus(\Lambda^1M\otimes\CC^k)$ is given by \eqref{eq:Q multiProca}, $D$ and $\pi$ are the same as for the single Proca field, and $R$ is given by \eqref{eq:R multiProca}.
\end{thm}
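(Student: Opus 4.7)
The plan is to apply Theorem~\ref{thm:abs}, Proposition~\ref{prop:decomp}, and a short direct calculation to the data $(P,\pi\otimes 1_k,D\otimes 1_k,C\otimes 1_k,Q,N)$ that has been assembled in the paragraphs immediately preceding the statement.

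First I would verify the three hypotheses of Theorem~\ref{thm:abs}. Condition~(a), consisting of the factorisation $P=(\pi\otimes 1_k)Q(D\otimes 1_k)$ and the intertwining $(C\otimes 1_k)Q=N(C\otimes 1_k)$, is an immediate block computation from~\eqref{eq:Q multiProca} together with Leibniz for $\delta$. Condition~(b) is inherited from the $k=1$ identities $\pi D=\id$ and $D\pi+\iota_\aux C=\id$ of Section~\ref{sec:warmup} by tensoring with $1_k$. For condition~(c), $N=\Box\otimes 1_k+\rho$ on $0$-forms is normally hyperbolic and hence semi-Green-hyperbolic, while $Q$ is semi-Green-hyperbolic by Lemma~\ref{lem:LU}: strict positivity of $\rho$ ensures a smooth pointwise inverse that preserves compact supports, and the Schur complement of the $(1,1)$-block is precisely $R$ from~\eqref{eq:R multiProca}, whose principal symbol is that of $\Box\otimes 1_k$, making $R$ normally hyperbolic. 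Theorem~\ref{thm:abs} then delivers semi-Green-hyperbolicity of $P$ with $E^\pm_P=(\pi\otimes 1_k)E^\pm_Q(D\otimes 1_k)$, and substituting the expression~\eqref{eq:ET2} for $\pi E^\pm_Q$ collapses this to~\eqref{eq:GreenOp multiProca}.

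For formal hermiticity with respect to $(W,V)\mapsto \delta_{ij}\IP{W^i,V^j}_1$, I would combine two observations. The operator $-\delta\dd$ is formally hermitian with respect to $\langle\cdot,\cdot\rangle_1$ by the calculation in Section~\ref{sec:charged_Proca} specialised to $A=0$, so $-\delta\dd\otimes 1_k$ is formally hermitian with respect to the diagonal pairing. Strict positivity of $\rho$ forces pointwise hermiticity of the matrix $\rho$, so the zeroth-order contribution $\rho_{ij}\IP{W^i,V^j}_1$ is also hermitian. Therefore the formal hermitian adjoint of $P$ coincides with $P$ and is in particular semi-Green-hyperbolic, giving Green-hyperbolicity by definition.

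For $\Nc^\pm$-decomposability I would invoke Proposition~\ref{prop:decomp} with the splitting $B_1=\Lambda^0 M\otimes\CC^k$ and $B_2=\Lambda^1 M\otimes\CC^k$, so that $B'_2=0$ and the $(1,1)$-block of $Q$ is $\rho$. The hypotheses required by the proposition are already in place: $(\pi\otimes 1_k)$ annihilates the $B_1$ factor, $\rho$ has a smooth pointwise inverse preserving compact supports, $SJ^{-1}=(\dd\otimes 1_k)\rho^{-1}$ is a first-order differential operator, and the Schur complement is the normally hyperbolic $R$, which is $\Nc^\pm$-decomposable by Theorem~5.3 of~\cite{Fewster:2025a}. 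The proposition then transfers $\Nc^\pm$-decomposability to $P$. The main obstacle throughout is purely bookkeeping: one must check that the $k=1$ identities of Section~\ref{sec:warmup} and the functional-analytic properties of $\rho^{-1}$ lift cleanly to the tensor-product setting, after which everything follows from the abstract machinery of Section~\ref{sec:aux field}.
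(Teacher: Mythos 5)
Your proposal is correct and follows essentially the same route as the paper: the same auxiliary-field data $(\pi\otimes 1_k, D\otimes 1_k, C\otimes 1_k, Q, N)$, semi-Green-hyperbolicity of $Q$ via Lemma~\ref{lem:LU} with Schur complement $R$, Theorem~\ref{thm:abs} for the Green operators, hermiticity of the strictly positive $\rho$ for formal hermiticity, and $\Nc^\pm$-decomposability transferred from the normally hyperbolic $R$ (the paper states this last step more tersely, but Proposition~\ref{prop:decomp} is indeed the mechanism, as you make explicit). The only element of the paper's treatment you omit is the optional direct verification of the Green operator formula via the identities $P-R=\dd\rho^{-1}\delta\rho$ and $\delta R=N\rho^{-1}\delta\rho$, which is not needed for the proof.
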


\subsection{Charged Proca field: general set-up and special case $\kappa=1$} 

We now return to the charged Proca field~\eqref{eq:fProca}, starting by establishing some notation and useful identities. First, we introduce various linear operators describing different ways in which $F$ can act on $p$-forms:
\begin{align}
	F^{\bullet}:\Gamma^\infty(\Lambda^1M) & \to \Gamma^\infty(\Lambda^1 M) \nonumber \\
	W&\mapsto F^{\bullet}W, \qquad (F^\bullet W)_\alpha= F_\alpha^{\phantom{\alpha}\beta}W_\beta \\
	F^\wedge:\Gamma^\infty(\Lambda^pM) &\to \Gamma^\infty(\Lambda^{p+2} M) \nonumber \\
	W &\mapsto F^\wedge W=F\wedge W,\\
	F^{\bullet\bullet}:\Gamma^\infty(\Lambda^2M) & \to \Gamma^\infty(\Lambda^0 M) \nonumber \\
	H&\mapsto F^{\bullet\bullet}H=\star^{-1}(F\wedge \star H)= \tfrac{1}{2}F^{\alpha\beta}H_{\alpha\beta}.
\end{align}
When $F^\wedge$ acts on $0$-forms, we will just write $F$. Next, the modified $\Box$-operator is defined by $\Box_A = -(\dd_A\delta_A+ \delta_A\dd_A)$, and is normally hyperbolic because
it has the same principal symbol as $\Box$. Note that
\begin{equation}
	\dd_A \Box_A = \Box_A\dd_A -\dd_A^2 \delta_A + \delta_A \dd_A^2, 
\end{equation}
so,  by~\eqref{eq:dAsquared},
\begin{equation}\label{eq:dABoxA}
	\dd_A \Box_A\omega = \Box_A\dd_A\omega + \ii q \left(\delta_A (F\wedge \omega) - F\wedge\delta_A\omega\right).
\end{equation}
Similarly,
\begin{equation}\label{eq:deltaABoxA}
	\delta_A \Box_A = \Box_A\delta_A -\delta_A^2 \dd_A + \dd_A \delta_A^2, 
\end{equation}
and so,  by~\eqref{eq:deltaAsquared},
\begin{equation}
	\delta_A \Box_A\omega = \Box_A\delta_A\omega + \ii q \left(\star^{-1}(F\wedge \star\dd_A \omega) - \dd_A \star^{-1} (F\wedge\star\omega)\right).
\end{equation}
The formulae~\eqref{eq:dABoxA} and~\eqref{eq:deltaABoxA} also apply when $\Box_A$ is replaced by the modified $p$-form Klein--Gordon operator $K_A=\Box_A+m^2$. In particular, we have
\begin{equation}\label{eq:deltaAKA1form}
	\delta_A K_A \omega = K_A \delta_A\omega +  \ii q F^{\bullet\bullet}\dd_A \omega
\end{equation}
for any $1$-form $\omega$ because $F\wedge\star\omega=0$.

Now we attempt to follow the method used for the uncharged Proca field, starting by  rewriting~\eqref{eq:fProca} as
\begin{equation}\label{eq:fProca2}
	(\Box_A +m^2)W + \ii q \kappa F^\bullet W  + \dd_A \phi= J
\end{equation}
with the side condition that $\phi=\delta_A W$. Equation~\eqref{eq:prenonhyp}, obtained by 
applying $\delta_A$ to~\eqref{eq:fProca}, may be written 
\begin{equation}
	m^2\phi - \ii q(\kappa-1) F^{\bullet\bullet}\dd_A W - \ii q \kappa j^\bullet W   = \delta_A J  \label{eq:Proca3c}
\end{equation}
where $j^\bullet W=j^\mu W_\mu$ defines $j^\bullet:\Gamma^\infty(\Lambda^1M)\to\Gamma^\infty(\Lambda^0M)$.
Equations~\eqref{eq:Proca3c} and~\eqref{eq:fProca2} with the side conditions can be written $Q_AD_AW = D_AJ$, where 
\begin{equation}
	Q_A = 	\begin{pmatrix}
		m^2 & - \ii q(\kappa-1) F^{\bullet\bullet}\circ\dd_A  - \ii q \kappa j^\bullet  \\ \dd_A & K_A + \ii q \kappa F^\bullet 
	\end{pmatrix}, \qquad
	D_A = \begin{pmatrix}
		\delta_A \\ \id
	\end{pmatrix}.
\end{equation}
In general, however, we cannot apply Lemma~\ref{lem:LU} to $Q_A$: although $K_A+\ii q\kappa F^{\bullet}$ is normally hyperbolic, the operator 
\begin{equation}\label{eq:KA1}
	K_A + \ii q \kappa F^\bullet + \ii q m^{-2}\dd_A \circ ((\kappa-1) F^{\bullet\bullet}\circ\dd_A  +\kappa j^\bullet)
\end{equation}
is not normally hyperbolic except in the special case $\kappa=1$ or for pure gauge $A$ (in which case one could also obtain Green-hyperbolicity directly from the uncharged case) and so it is not clear whether it possesses Green operators in general.

Before moving to the general situation, it is worth proceeding with the case $\kappa=1$ for a moment.
In this case, $Q_A$ is semi-Green-hyperbolic 
and the analysis can proceed much as for the uncharged field, replacing $\dd$ and $\delta$ in $C$ and $D$ by $\dd_A$ and $\delta_A$, to give $C_A$ and $D_A$, which still yields a short exact sequence, and satisfies $\pi D_A=\id$,  with $\pi$ as before.
The $\kappa=1$ Proca operator $P_A = -\delta_A\dd_A +\ii q F^\bullet + m^2$ is again obtained as
$P_A = \pi Q_A D_A$, and the analogue of $N$ is $N_A=K_A$ (acting on $0$-forms), which is normally hyperbolic. By Theorem~\ref{thm:abs}, $P_A$ is semi-Green-hyperbolic, with Green operators
\begin{equation}
	E_{P_A}^\pm =  E_{\widetilde{K}_A}^\pm \left(\id-m^{-2}\dd_A\delta_A\right),
\end{equation}
where the normally hyperbolic operator $\widetilde{K}_A = K_A + \ii q  F^\bullet + \ii q m^{-2}\dd_A \circ  j^\bullet$ is obtained by setting $\kappa=1$ in~\eqref{eq:KA1}. In other words, the charged Proca equation with $\kappa=1$ can be solved as a straightforward generalization of the usual trick for the neutral field.

Since $\tilde{K}_A$ is normally hyperbolic, and therefore $\Nc^\pm$-decomposable by \cite[Theorem 5.3]{Fewster:2025a}, it follows from Proposition~\ref{prop:decomp} that $P_A$ is $\Nc^\pm$-decomposable in the case $\kappa=1$. 
Moreover, the discussion in Section~\ref{sec:charged_Proca} shows that $P_A$ is formally hermitian with respect to the pairing \eqref{eq:hermpair}, and consequently Green-hyperbolic.

\subsection{The charged Proca field for general $\kappa\in\RR$} \label{sec:chargedProca_sGH}

The obstruction to constructing Green operators for $Q_A$ arose from the presence of a second order operator in the off-diagonal component, which resulted in the operator~\eqref{eq:KA1} failing to be normally hyperbolic. This can be circumvented by replacing $\dd_AW$ in~\eqref{eq:Proca3c} by a new $2$-form field $H$, and adding a side condition that $H=\dd_A W$. Equation~\eqref{eq:Proca3c} becomes
\begin{equation}
	m^2\phi - \ii q(\kappa-1) F^{\bullet\bullet}H- \ii q \kappa j^\bullet W   = \delta_A J  \label{eq:Proca4c}
\end{equation}
while we may leave~\eqref{eq:fProca2} unchanged. 
To complete the picture, we require some dynamics for $H$. 
Applying $\dd_A$ to~\eqref{eq:fProca2} and treating $\phi$ and $\delta_A W$ as independent, one finds
\begin{equation} \label{eq:Proca3b}
	(\Box_A+ m^2)\dd_A W + \ii q\left( \delta_A(F\wedge W)-(\delta_AW) F\right) +\ii q \kappa \dd_A(F^\bullet W)+\ii q \phi F
	= \dd_A J .
\end{equation}
Imposing $H=\dd_A W$, and $\phi=\delta_A W$, equations~\eqref{eq:Proca4c},~\eqref{eq:fProca2} and~\eqref{eq:Proca3b} constitute the system
\begin{align}
	m^2\phi - \ii q(\kappa-1) F^{\bullet\bullet}H- \ii q \kappa j^\bullet W   &= \delta_A J \nonumber\\
	(\Box_A +m^2)W + \ii q \kappa F\cdot W  + \dd_A \phi &= J \label{eq:Proca3a}\\
	(\Box_A+ m^2)H+ \ii q  \delta_A(F\wedge W)  + \ii q\kappa \dd_A(F\cdot W) 
	&= \dd_A J   .   \nonumber
\end{align}
The appropriate auxiliary bundle is now $B_\aux=\Lambda^0M\oplus \Lambda^2M$. 
Presenting $B\oplus B_\aux$ as $\Lambda^0M\oplus \Lambda^1M\oplus\Lambda^2M$ for convenience,
the above system may be written
\begin{equation}\label{eq:ProcaQdef}
	Q_A D_A W = D_A J, 
\end{equation}
where $D_A:\Gamma^\infty(B)\to\Gamma^\infty(B\oplus B_\aux)$ and $Q_A:\Gamma^\infty(B\oplus B_\aux)\to \Gamma^\infty(B\oplus B_\aux)$ are given in the same block matrix form as
\begin{equation}\label{eq:DA}
	D_A = \begin{pmatrix}
	\delta_A \\ \id \\ \dd_A
\end{pmatrix},\qquad Q_A = \begin{pmatrix}
m^2 & -\ii q\kappa j^\bullet & \ii q (1-\kappa) F^{\bullet\bullet}  \\
\dd_A & \Box_A+m^2+ \ii q\kappa F^\bullet & 0 \\
0 &  \ii q \delta_A\circ F^\wedge+ \ii q\kappa \dd_A\circ F^\bullet & \Box_A+m^2 
\end{pmatrix} 
=: \begin{pmatrix}  Q_0\\ Q_1\\ Q_2 \end{pmatrix}.
\end{equation}
 Let $C_A:\Gamma^\infty(B\oplus B_\aux)\to \Gamma^\infty(B_\aux)$ and $\pi:\Gamma^\infty(B\oplus B_\aux)\to \Gamma^\infty(B)$ be given by
\begin{equation} \label{eq:CApi} 
	C_A = \begin{pmatrix}
		\id & -\delta_A & 0 \\ 
		0 & -\dd_A & \id
	\end{pmatrix},
	\qquad 
	\pi = \begin{pmatrix}
		0 & \id & 0
	\end{pmatrix},
\end{equation} 
 noting that $\pi\circ\iota_\aux=0$, and let 
\begin{equation}
	N_A = \begin{pmatrix}
		\Box_A+m^2  & \ii q (1-\kappa)F^{\bullet\bullet}  \\
		-\ii q F & \Box_A+m^2 
	\end{pmatrix}.
\end{equation}
Now we can begin to verify that the conditions of Theorem~\ref{thm:abs} are met  for $P_A$, $D_A$, $\pi$, $C_A$, $Q_A$ and $N_A$. Condition~(b) is checked by a simple calculation. Next, we calculate
\begin{equation}
	\pi Q_A D_A = \dd_A \delta_A + \Box_A+m^2+ \ii q\kappa F^\bullet = P_A.
\end{equation}
To  see that $	C_A Q_A =   N_AC_A$, we compute
\begin{equation}
	 N_AC_A = \begin{pmatrix}
		K_A & -K_A  \delta_A-\ii q(1-\kappa) F^{\bullet\bullet} \dd_A & \ii q (1-\kappa)F^{\bullet\bullet} \\
		-\ii q F & \ii q F\delta_A- K_A  \dd_A  & K_A	
	\end{pmatrix},
\end{equation}
where $K_A=\Box_A + m^2$. The first and third column match those of 
\begin{equation}
	C_A Q_A = \begin{pmatrix}
		K_A & -\ii q\kappa j^\bullet -\delta_A\circ (K_A + \ii q \kappa F^\bullet) & \ii q (1-\kappa)F^{\bullet\bullet} \\
		-\ii q F & \ii q \delta_A \circ F^\wedge -\dd_A K_A    & K_A	
	\end{pmatrix},
\end{equation}
recalling that $\dd_A^2=\ii qF$ on scalars. Thus only two $1$-form identities are to be verified, namely
\begin{align}
	-K_A  \delta_A-\ii q(1-\kappa) F^{\bullet\bullet} \dd_A &=   -\ii q\kappa j^\bullet -\delta_A\circ (K_A + \ii q \kappa F^\bullet)
	\nonumber \\
	\ii q F\delta_A- K_A\dd_A &=\ii q \delta_A \circ F^\wedge -\dd_A K_A,
\end{align}
the second of which follows immediately from~\eqref{eq:dABoxA} on $1$-forms,
while the first is a consequence of~\eqref{eq:deltaAKA1form} and the identity~\eqref{eq:usefulidentity}. Thus we have $C_A Q_A=N_A C_A$ and condition~(a) is verified.

Turning to condition~(c), $N_A$ is evidently normally hyperbolic, and Lemma~\ref{lem:LU} shows that $Q_A$ is semi-Green-hyperbolic, because the top right-hand entry in the block form of $Q_A$ is invertible (with inverse $m^{-2}$ preserving compact support) and the operator
\begin{equation}\label{eq:MA}
	M_A:=	\begin{pmatrix} K_A+\ii q\kappa F^\bullet & 0 \\ 
		\ii q \delta_A\circ F^\wedge+ \ii q\kappa \dd_A\circ F^\bullet & K_A
	\end{pmatrix}	+ \frac{\ii q}{m^2}\begin{pmatrix} \kappa \dd_A \circ j^\bullet & (\kappa-1)\dd_A\circ F^{\bullet\bullet} \\ 0 & 0
	\end{pmatrix}
\end{equation}
on $\Gamma^\infty(\Lambda^1M\oplus \Lambda^2 M)$ is normally hyperbolic and possesses Green operators. By \cite[Theorem 5.3]{Fewster:2025a}, this also implies that $M_A$ is $\Nc^\pm$-decomposable. In block form, 
on $\Lambda^0M \oplus (\Lambda^1M\oplus \Lambda^2M)$,
Lemma~\ref{lem:LU} gives 
\begin{equation}\label{eq:EpmQA}
	E_{Q_A}^\pm = \begin{pmatrix} m^{-2} & -m^{-2}R E_{M_A}^\pm \\ 0 & E_{M_A}^\pm \end{pmatrix}
	\begin{pmatrix}
		\id_{\Lambda^0M} & 0 \\ \begin{pmatrix} -m^{-2}\dd_A\\ 0 \end{pmatrix} & \id_{\Lambda^1M\oplus\Lambda^2M}
	\end{pmatrix},
\end{equation} 
where $R = \begin{pmatrix} -\ii q\kappa j^\bullet & \ii q (1-\kappa) F^{\bullet\bullet} \end{pmatrix}$.  Thus condition~(c) also holds, and $P_A$ is therefore semi-Green-hyperbolic by Theorem~\ref{thm:abs},  in the case with $\pi\iota_\aux=0$.

We now come to the main theorem on the charged Proca operator with general magnetic dipole moment.
\begin{thm}
	The charged Proca operator $P_A=-\delta_A \dd_A  + \ii q \kappa F^\bullet + m^2$ is Green-hyperbolic and $\Nc^\pm$-decomposable on any globally hyperbolic spacetime $M$, for any $\kappa\in\RR$, $m>0$, and with any background electromagnetic potential $A\in \Gamma^\infty(T^*M)$. The operator $P_A$ is also formally hermitian with respect to the pairing~\eqref{eq:hermpair}. The Green operators of $P_A$ are 
	\begin{equation}\label{eq:EpmPA}
		E_{P_A}^\pm = 
		\pi E_{Q_A}^\pm D_A = \pi_1 E_{M_A}^\pm \widetilde{D}_A, \qquad \widetilde{D}_A:=\begin{pmatrix}
			\id-m^{-2}\dd_A \delta_A \\ \dd_A
		\end{pmatrix}
	\end{equation} 
	where $E_{Q_A}^\pm$ and $E_{M_A}^\pm$ are Green operators for the operators $Q_A$ on $\Gamma^\infty(\Lambda^0M\oplus\Lambda^1M\oplus\Lambda^2M)$ and 
	$M_A$ on $\Gamma^\infty(\Lambda^1M\oplus \Lambda^2 M)$
	given by~\eqref{eq:ProcaQdef} and~\eqref{eq:MA}, and $\pi_1:\Lambda^1M\oplus \Lambda^2 M\to \Lambda^1M$ is the canonical projection.
\end{thm}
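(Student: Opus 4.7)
The plan is to collect the pieces already developed in the preceding discussion of Section~\ref{sec:chargedProca_sGH}, since essentially all the hypotheses of Theorem~\ref{thm:abs} have been checked there. First, I would state explicitly that with $P_A$, $\pi$, $D_A$, $C_A$, $Q_A$, $N_A$ defined in~\eqref{eq:DA}--\eqref{eq:CApi}, condition~(a) was verified by computing $\pi Q_A D_A = P_A$ and $C_A Q_A = N_A C_A$ (the latter reducing to two $1$-form identities that follow from~\eqref{eq:dABoxA},~\eqref{eq:deltaAKA1form} and~\eqref{eq:usefulidentity}); condition~(b), i.e.\ $\pi D_A = \id$ and $D_A\pi + \iota_\aux C_A = \id$, is immediate; and condition~(c) holds because $N_A$ is manifestly normally hyperbolic, while $Q_A$ is semi-Green-hyperbolic by Lemma~\ref{lem:LU} applied to the block decomposition with the invertible scalar entry $m^2$ in the top-left, whose Schur complement is the normally hyperbolic operator $M_A$ of~\eqref{eq:MA}. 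Since $\pi\iota_\aux=0$, Theorem~\ref{thm:abs} directly yields that $P_A$ is semi-Green-hyperbolic with $E_{P_A}^\pm = \pi E_{Q_A}^\pm D_A$.

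Next I would derive the compact form $E_{P_A}^\pm = \pi_1 E_{M_A}^\pm \widetilde{D}_A$ by a straightforward block multiplication using~\eqref{eq:EpmQA}. Splitting the auxiliary-plus-ambient bundle as $\Lambda^0M\oplus(\Lambda^1M\oplus\Lambda^2M)$, the lower-triangular factor in~\eqref{eq:EpmQA} sends $D_A$ to $(\delta_A,\widetilde{D}_A)^\top$ because the $(\Lambda^1\oplus\Lambda^2)$-component picks up precisely $(\id-m^{-2}\dd_A\delta_A,\dd_A)^\top$. The upper-triangular factor then acts by $E_{M_A}^\pm$ on this lower block; since $\pi\iota_\aux=0$, the scalar component is discarded when applying $\pi$, leaving $\pi_1 E_{M_A}^\pm \widetilde{D}_A$.

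For $\Nc^\pm$-decomposability, my plan is to invoke Proposition~\ref{prop:decomp} with the block decomposition above: the Schur complement is exactly the normally hyperbolic operator $M_A$, which is $\Nc^\pm$-decomposable by \cite[Theorem~5.3]{Fewster:2025a}. All remaining hypotheses of the proposition have already been checked, so the conclusion transfers to $P_A$.

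Finally, formal hermiticity of $P_A$ with respect to the pairing~\eqref{eq:hermpair} was already established in Section~\ref{sec:charged_Proca}. Combined with semi-Green-hyperbolicity, this promotes $P_A$ to a Green-hyperbolic operator, since the formal hermitian adjoint is $P_A$ itself and therefore automatically admits Green operators. The only step that requires genuine care, rather than assembly, is the matrix computation leading to the simplified Green operator formula; the potential obstacle is ensuring that the block decompositions used in Lemma~\ref{lem:LU}, Theorem~\ref{thm:abs} and Proposition~\ref{prop:decomp} are applied consistently, and in particular that the Schur complement arising from eliminating the $\Lambda^0$-component of $Q_A$ really is $M_A$ as defined in~\eqref{eq:MA}.
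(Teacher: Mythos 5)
Your proposal is correct and follows essentially the same route as the paper: it assembles the verifications of conditions (a)--(c) of Theorem~\ref{thm:abs} already carried out in Section~\ref{sec:chargedProca_sGH}, obtains Green-hyperbolicity from semi-Green-hyperbolicity plus the formal hermiticity established in Section~\ref{sec:charged_Proca}, gets $\Nc^\pm$-decomposability from Proposition~\ref{prop:decomp} via the normally hyperbolic Schur complement $M_A$, and derives the simplified Green operator formula by the same block multiplication of~\eqref{eq:EpmQA} against $D_A$ (the paper phrases this as $\pi_{1\oplus 2}E_{Q_A}^\pm D_A = E_{M_A}^\pm\widetilde{D}_A$ together with $\pi=\pi_1\circ\pi_{1\oplus 2}$, which is exactly your computation). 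The only cosmetic slip is attributing the discarding of the $\Lambda^0M$-component to $\pi\iota_\aux=0$ rather than to the explicit form $\pi=\pi_1\circ\pi_{1\oplus2}$, but this does not affect the argument.
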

\begin{proof}
	We have shown that $P_A$ is semi-Green-hyperbolic, and it was shown in Section~\ref{sec:charged_Proca} that
	$P_A$ is formally hermitian with respect to the pairing~\eqref{eq:hermpair}. Thus $P_A$ is Green-hyperbolic. The $\Nc^\pm$-decomposability follows immediately from the corresponding decomposability of the operator $M_A$ above, the form of $Q_A$, and Proposition~\ref{prop:decomp}.
	For the final formula in~\eqref{eq:EpmPA}, first write $
	\pi_{1\oplus 2}:\Lambda^0M\oplus \Lambda^1M\oplus \Lambda^2M\to \Lambda^1M\oplus \Lambda^2M$ for the projection, whereupon~\eqref{eq:EpmQA} and the formula for $D_A$ in~\eqref{eq:DA} give
	\begin{equation}\label{eq:EpmQAMA}
		\pi_{1\oplus 2} E_{Q_A}^\pm D_A = E_{M_A}^\pm \widetilde{D}_A.
	\end{equation}
	The required formula follows because $\pi=\pi_1\circ \pi_{1\oplus 2}$. 
\end{proof} 

\subsection{An alternative treatment of the neutral Proca field}

Here, we give an alternative treatment of the Green operators for the neutral Proca field, using an auxiliary $2$-form field $H$ rather than a $0$-form, so $B_\aux = \Lambda^2M$. 
The purpose is simply to show that there can be more than one way of using Theorem~\ref{thm:abs} to conclude semi-Green-hyperbolicity of a given operator.

In fact, consider the $p$-form Proca operator $P:=-\delta\dd +m^2$ on $\Gamma^\infty(B)$, $B=\Lambda^p M$ with $p\ge 1$. Let $B_\aux=\Lambda^{p+1} M$ and define
\begin{equation}
	D = \begin{pmatrix} \id \\ \dd\end{pmatrix}, \qquad Q = \begin{pmatrix}
		m^2 & - \delta \\ 0 & K
	\end{pmatrix}, \qquad \pi =\begin{pmatrix} \id & 0 \end{pmatrix}, \qquad C = \begin{pmatrix} -\dd & \id\end{pmatrix},
\end{equation}
where $K$ is the $(p+1)$-form Klein--Gordon operator. It is straightforwardly seen that $P=\pi Q D$, 
and 
\begin{equation}
	CQ = \begin{pmatrix}
	  -m^2\dd & \dd\delta+K 
	\end{pmatrix}= NC
\end{equation}
where $N=-\delta \dd+m^2$ on $(p+1)$-forms. Furthermore, $\pi D=\id$ and $D\pi + \iota_\aux C = \id$, 
and $Q$ is semi-Green-hyperbolic by Lemma~\ref{lem:LU}. To satisfy the conditions of Theorem~\ref{thm:abs} it suffices to show that $N$ is semi-Green-hyperbolic. Thus the semi-Green-hyperbolicity of the $p$-form Proca operator follows from that of the $p+1$-form operator for $1\le p\le 3$. Iterating, the semi-Green-hyperbolicity of the $1$-form Proca operator follows from the semi-Green-hyperbolicity of the $4$-form Proca operator, 
which is simply multiplication by $m^2$ with Green operators $m^{-2}$. In this way we obtain an amusing, albeit slightly eccentric, alternative proof that the Proca operator in $p\ge 1$ is semi-Green-hyperbolic. 

\subsection{Coupled Proca--scalar system}
\label{sec:Proca--scalar}
Finally, we consider the coupled Proca--scalar system. We begin by considering a real Proca field $W\in \Gamma^\infty(\Lambda^1M)$ of mass $m$ coupled linearly to a real scalar field $\phi\in \Gamma^\infty(\Lambda^0M)$ of mass $\fm$.

First, keeping in line with the notation for the charged scalar,  $v\in \Gamma^\infty(\Lambda^1M)$ can act on $p$-forms as:
\begin{align}
    v^{\bullet}:\Gamma^\infty(\Lambda^pM)&\to \Gamma^\infty(\Lambda^{p-1}M)\\\nonumber
    W&\mapsto v^{\bullet}W=v^\alpha W_{\alpha\beta_1\dots\beta_{p-1}}\, ,\\
    v^\wedge:\Gamma^\infty(\Lambda^pM)&\to \Gamma^\infty(\Lambda^{p+1}M)\\\nonumber
    f&\mapsto v^\wedge f= v\wedge f\, ,
\end{align}
and for $v^\wedge$ acting on $0$-forms we will just write $v$.

The dynamics of the system are described by the Lagrangian $\mathcal{L}=\mathcal{L}_W+\mathcal{L}_\phi+\mathcal{L}_I$ with
\begin{align}
    \mathcal{L}_W&=-\frac{1}{4}H^{\mu\nu}H_{\mu\nu}+\frac{1}{2}m^2W_\mu W^\mu \\
    \mathcal{L}_\phi&=\frac{1}{2}(\nabla_\mu \phi\nabla^\mu \phi -\fm^2\phi^2)\\
    \mathcal{L}_I&= v^\mu \phi W_\mu\, .
\end{align}
Here, $H=\dd W$, and $m$ and $\fm$ are the masses of the Proca and scalar field, respectively.  $v\in \Gamma^\infty(\Lambda^1M)$ is a  real $1$-form that determines the polarization of $W$ the scalar field couples to, as well as the coupling strength. 

From this, we obtain the inhomogeneous differential equations for $W$ and $\phi$:
\begin{align}
\label{eq:diff eq compl}
    -\delta\dd W+m^2 W+v\phi&=J\\
    -\delta\dd \phi+\fm^2\phi- v\cdot W&=K_\fm^{(0)}\phi- v^{\bullet} W=\rho\, ,
\end{align}
where $\rho\in \Gamma^\infty(\Lambda^0M)$ and $K_m^{(p)}:=-(\delta\dd+\dd \delta)+m^2$ denotes the $p$-form Klein--Gordon operator with mass $m$. 

In the notation of Theorem~\ref{thm:abs}, we have $B=\Lambda^1M\oplus\Lambda^0M$ and
\begin{align}
\label{eq:P real Proca scalar}
    P=\begin{pmatrix}
        -\delta\dd+m^2 &  v \\
        - v^{\bullet} & K_\fm^{(0)}
    \end{pmatrix}\, .
\end{align}

Equip $\Gamma^\infty(B)$ with the hermitian pairing
\begin{align}
    \IP{(W,\phi),(Z,\psi)}_B=\IP{\phi,\psi}_0+\IP{W,Z}_1= \int\limits_M(\bar{\phi}\psi-\bar{W}_\alpha Z^\alpha)\vol\, 
\end{align}
for sections with compactly overlapping supports, 
where $\IP{\cdot,\cdot}_p$ is the hermitian pairing for $p$-forms introduced earlier in Section~\ref{sec:charged_Proca}. Since $v$ is a real $1$-form, and the Klein--Gordon and Proca operators are formally hermitian with respect to $\IP{\cdot,\cdot}_0$ and $\IP{\cdot,\cdot}_1$, it follows immediately that the coupled Proca-scalar operator $P$ is formally hermitian. 

To show semi-Green-hyperbolicity, the naive choice for $Q$ and $B_\aux$, following the scheme for the real Proca field, would be to set $B_\aux=\Lambda^0M$ and 
\begin{align}
    Q=\begin{pmatrix}
       m^2 & 0 & \delta v \\
       \dd & K^{(1)}_m &  v\\
       0 & - v^{\bullet} & K_\fm^{(0)}
    \end{pmatrix}\, .
\end{align}
by rewriting $\delta W=\psi$, and adding the dynamics for $\psi$.
However, $Q$ does not satisfy the conditions of Lemma~\ref{lem:LU}, and therefore it is not immediately clear whether it is semi-Green-hyperbolic.

The problem is caused by the term $\delta v\phi$ in the equation for $\psi$. To remedy this, we observe that
\begin{align}
    \delta( v \phi)=(\delta v)\phi- v^{\bullet} \dd \phi\, .
\end{align}
We then introduce a new $1$-form field $V$, satisfying the side condition $\dd \phi=V$, and obtain dynamics for $V$ by applying $\dd$ to the equation for $\phi$. This enlarges the auxiliary space to $B_\aux=\Lambda^0M\oplus \Lambda^1M$.
 We will represent $B_\aux\oplus B$ as the fourfold direct sum
$\Lambda^0M\oplus \Lambda^1M\oplus \Lambda^1M\oplus \Lambda^0M$,
indicating the partition into $B_\aux\oplus B$ by dotted lines in block matrices. We will also repartition $B_\aux\oplus B$ as  $\Lambda^0M \oplus (\Lambda^1M\oplus \Lambda^1M\oplus \Lambda^0M)$, indicated by dashed lines in block matrices. This allows us to regard operators
on $B_\aux\oplus B$ and related spaces as $2\times 2$ block matrices in two ways, using dotted or dashed lines to indicate what partition is to be understood.
We set $D:\Gamma^\infty(B)\to\Gamma^\infty(B_\aux\oplus B)$, $\pi:\Gamma^\infty(B_\aux\oplus B)\to \Gamma^\infty(B)$, and $Q:\Gamma^\infty(B_\aux\oplus B)\to \Gamma^\infty(B_\aux\oplus B)$ to be
\begin{equation} 
		D= \begin{pnmatrix}{cc}
		\delta & 0\\ 
		\hdashedline
		0 & \dd\\ 
		\hddottedline
		\id & 0\\
		0 & \id\\
	\end{pnmatrix}\,,\quad 
	\pi=\begin{pnmatrix}{cIcocc}
	0 & 0 & \id & 0\\
	0 & 0 & 0 & \id
	\end{pnmatrix} \,,
	\quad Q=\begin{pnmatrix}{cIcocc}
		m^2 & - v^{\bullet} &  0 & (\delta v)\\
		\hdashedline
		0 & K_\fm^{(1)} & -\dd v^\bullet & 0\\
		\hddottedline
		\dd & 0 & K^{(1)}_m & v\\
		0 & 0 & - v^\bullet & K_\fm^{(0)}
	\end{pnmatrix}\, .
\end{equation}
Now, $m^2$ is invertible, and $T: \Gamma^\infty(\Lambda^1M\oplus B)\to \Gamma^\infty(\Lambda^1M\oplus B)$,
\begin{align}
	\label{eq:sP-norm hyp operator}
	T&=\begin{pnmatrix}{cocc}
		K_\fm^{(1)} & -\dd v^\bullet & 0\\ 
		\hddottedline
		0 & K_m^{(1)} & v \\
		0 & - v^\bullet  & K_\fm^{(0)}
	\end{pnmatrix}
	-m^{-2}\begin{pnmatrix}{c} 0\hphantom{,} \\ \hddottedline \dd \\ \,0\hphantom{,}
	\end{pnmatrix} 
	\begin{pnmatrix}{cocc}
		- v^\bullet & 0 & (\delta v)
	\end{pnmatrix}\nonumber\\
	&=\begin{pnmatrix}{cocc}
		K_\fm^{(1)} & -\dd v^\bullet & 0\\ \hddottedline
		m^{-2}\dd v^\bullet & K_m^{(1)} & v-m^{-2}\dd(\delta v) \\
		0 & - v^\bullet & K_\fm^{(0)} 
	\end{pnmatrix}\, ,
\end{align}
 is normally hyperbolic and hence semi-Green-hyperbolic. Moreover, by \cite[Theorem 5.3]{Fewster:2025a}, $T$ is also $\Nc^\pm$-decomposable. One can now use Lemma~\ref{lem:LU} to show that $Q$ is semi-Green-hyperbolic, with Green operators given by
  \begin{align}
 	\label{eq:sP-Q_Green op}
 	E_Q^\pm=\begin{pnmatrix}{cIc}
 		m^{-2} & -m^{-2} R E^\pm_T \\ 
 		\hdashedline
 		0 & E^\pm_T
 	\end{pnmatrix}
 	\begin{pnmatrix}{cIc}
 		\id & 0 \\
 		\hdashedline
 		-m^{-2}S & \id
 	\end{pnmatrix}\, .
 \end{align}
Here, $R:\Gamma^\infty(\Lambda^1M\oplus B)\to \Gamma^\infty(\Lambda^0M)$ and $ S:\Gamma^\infty(\Lambda^0M)\to \Gamma^\infty(\Lambda^1M\oplus B)$ are given by
\begin{align}
	R&= \begin{pnmatrix}{cIcc}
		- v^{\bullet} &  0 & (\delta v)
	\end{pnmatrix}\, , \quad S=\begin{pnmatrix}{c}
		0\hphantom{,}\\ \hdashedline \dd \\ \,0\hphantom{,}\,
	\end{pnmatrix}\, .
\end{align}

To apply Theorem~\ref{thm:abs}, we further define $C:\Gamma^\infty(B_\aux\oplus B)\to \Gamma^\infty(B_\aux)$ as
\begin{equation}
	C=\begin{pnmatrix}{cIcocc}
		\id & 0 & -\delta & 0\\
		0 & \id & 0 & -\dd
	\end{pnmatrix} \, .
\end{equation}
It is then a straightforward calculation using the canonical injection $\iota_\aux: \Gamma^\infty(B_\aux)\to \Gamma^\infty(B_\aux\oplus B)$  written as
\begin{equation}
    \iota_\aux=\begin{pnmatrix}{cc}
        \id & 0\\ \hdashedline
        0 & \id\\ \hddottedline
        0 & 0\\
        0 & 0
    \end{pnmatrix}
\end{equation}    
to verify that $\pi D=\id$, $D\pi+\iota_\aux C=\id$, $CD=0$, and $\pi Q D=P$. Setting also 
\begin{align}
    &N=\begin{pmatrix}
       K^{(0)}_m & - v^\bullet\\
       0 & K^{(1)}_\fm
    \end{pmatrix}: \Gamma^\infty(B_\aux)\to \Gamma^\infty(B_\aux)\, 
\end{align}
one can check that $CQ=NC$, so conditions (a) and (b) of Theorem~\ref{thm:abs} are satisfied. Moreover, $N$ is normally hyperbolic and therefore semi-Green-hyperbolic, so that in particular $\Gamma^\infty_{pc/fc}(B_\aux)\cap \ker N=\{0\}$. With this, also requirement (c) of Theorem~\ref{thm:abs} is satisfied. Since $Q$ also satisfies the requirement of Proposition~\ref{prop:decomp}, and $T$ is $\Nc^\pm$-decomposable, we obtain

\begin{thm}
   The operator $P$ in \eqref{eq:P real Proca scalar} describing the dynamics of the coupled Proca-scalar system is Green-hyperbolic and $\Nc^\pm$-decomposable for any real coupling $1$-form $v\in \Gamma^\infty(\Lambda^1M)$, with Green operators
\begin{align}
\label{eq:sP- Green op formula}
    E^\pm_P=\pi E^\pm_QD\, .
\end{align} 
\end{thm}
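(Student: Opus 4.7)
The plan is to observe that nearly all of the work has already been done in the paragraph immediately preceding the theorem, and to simply assemble the pieces by appealing to Theorem~\ref{thm:abs} and Proposition~\ref{prop:decomp}. First I would note that the operators $P$, $D$, $\pi$, $C$, $Q$, $N$ satisfy conditions (a) and (b) of Theorem~\ref{thm:abs}: the identities $\pi D=\id$, $D\pi+\iota_\aux C=\id$, $CD=0$, $\pi QD=P$, and $CQ=NC$ are all verified by direct block-matrix computation as indicated in the text. Condition (c) holds because $T$ in~\eqref{eq:sP-norm hyp operator} is normally hyperbolic (so Lemma~\ref{lem:LU} gives $Q$ semi-Green-hyperbolic with Green operators~\eqref{eq:sP-Q_Green op}) and because $N$ is block upper-triangular with normally hyperbolic diagonal entries $K^{(0)}_m$ and $K^{(1)}_\fm$, hence itself normally hyperbolic and in particular semi-Green-hyperbolic. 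Since also $\pi\iota_\aux=0$, Theorem~\ref{thm:abs} applies and yields the semi-Green-hyperbolicity of $P$ together with the formula~\eqref{eq:sP- Green op formula}.

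Next I would upgrade semi-Green-hyperbolicity to Green-hyperbolicity. For this I would invoke the formal hermiticity of $P$ with respect to the pairing $\IP{\cdot,\cdot}_B$ noted earlier in Section~\ref{sec:Proca--scalar}: the Proca and Klein--Gordon diagonal blocks are formally hermitian with respect to $\IP{\cdot,\cdot}_1$ and $\IP{\cdot,\cdot}_0$, and the off-diagonal blocks $v$ and $-v^\bullet$ are formal hermitian adjoints because $v$ is a real $1$-form. Since a formally hermitian semi-Green-hyperbolic operator has a semi-Green-hyperbolic formal hermitian conjugate (namely itself), it is Green-hyperbolic in the sense of \cite{Baer:2015}.

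Finally, to establish $\Nc^\pm$-decomposability, I would apply Proposition~\ref{prop:decomp} with the splitting $B_\aux\oplus B=\Lambda^0M\oplus(\Lambda^1M\oplus\Lambda^1M\oplus\Lambda^0M)$ indicated by dashed lines, so that $B_1=\Lambda^0M$ and the Schur complement of the top-left block $J=m^2$ of $Q$ is exactly the operator $T$ of~\eqref{eq:sP-norm hyp operator}. Since $T$ is normally hyperbolic, \cite[Theorem~5.3]{Fewster:2025a} gives that $T$ is $\Nc^\pm$-decomposable, and all other hypotheses of Proposition~\ref{prop:decomp} (including $\pi\iota_1=0$, which is immediate from the block form of $\pi$) have been verified. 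Proposition~\ref{prop:decomp} then yields the $\Nc^\pm$-decomposability of $P$, completing the proof.

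In this setup there is no real obstacle: the main conceptual work, namely the introduction of the auxiliary $1$-form $V=\dd\phi$ that reduces the off-diagonal second-order term $\delta v\phi$ to something handled by Lemma~\ref{lem:LU}, has already been carried out before the statement. The only mild care needed is in bookkeeping between the two partitions (dotted for $B_\aux\oplus B$, dashed for the Schur decomposition), and in noticing that formal hermiticity is what delivers Green-hyperbolicity from semi-Green-hyperbolicity with no further computation.
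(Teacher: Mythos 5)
Your proposal is correct and follows essentially the same route as the paper: the preceding discussion in Section~\ref{sec:Proca--scalar} verifies conditions (a)--(c) of Theorem~\ref{thm:abs} (with $T$ normally hyperbolic giving semi-Green-hyperbolicity of $Q$ via Lemma~\ref{lem:LU} and $N$ normally hyperbolic), formal hermiticity of $P$ with respect to $\IP{\cdot,\cdot}_B$ upgrades semi-Green-hyperbolicity to Green-hyperbolicity, and Proposition~\ref{prop:decomp} applied to the dashed splitting with Schur complement $T$ gives $\Nc^\pm$-decomposability. Nothing is missing.
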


Using the form and properties of the intermediate operators $R$, $S$, $Q$, and $T$, one can deduce that $E_P^\pm$ takes the form
\begin{align}
\label{eq:scalar-Proca Green op}
     E^\pm_P=\begin{pmatrix}
        E^\pm_{WW} & -E^\pm_{WW}v E^\pm_{K^{(0)}_\fm}\\
        E^\pm_{K^{(0)}_\fm} v^\bullet  E^\pm_{WW} & E^\pm_{K^{(0)}_\fm}-E^\pm_{K^{(0)}_\fm} v^\bullet E^\pm_{WW} v  E^\pm_{K^{(0)}_\fm}
    \end{pmatrix}\, ,
\end{align}
where 
\begin{align}
\label{eq:E^pm_WW}
    E^\pm_{WW}=\tilde{E}^\pm_{WW}(\id-m^{-2}\dd\delta)\, .
\end{align}
 and $\tilde{E}^\pm_{WW}:\Gamma_0^\infty(\Lambda^1M)\to \Gamma_0^\infty(\Lambda^1M)$ is a component of $E^\pm_T$ satisfying (G3) and 
 \begin{align}
 \label{eq:sP-coupled Green op-WW}
    \tilde{E}^\pm_{WW}O^\pm&=O^\pm \tilde{E}^\pm_{WW}=\id\, ,\\
    O^\pm&=K^{(1)}_m+(\id-m^{-2}\dd\delta) v E^\pm_{K^{(0)}_\fm} v^\bullet\, .
\end{align}
A more detailed account of the derivation can be found in Appendix~\ref{sec:App A}.

Instead, we could consider the charged Proca field coupled to a charged scalar field. In the absence of a background electromagnetic field, this theory is described by the  Lagrangian 
\begin{align}
    \mathcal{L}&=-\frac{1}{2}\overline{H}^{\mu\nu}H_{\mu\nu}+m^2\overline{W}_\mu W^\mu +\nabla_\mu \overline{\phi}\nabla^\mu \phi -\fm^2\phi\overline{\phi}+ v^\mu (\phi\overline{W}_\mu+ \overline{\phi}W_\mu )\, ,
\end{align}
where we still assume $v\in \Gamma^\infty(\Lambda^1M)$ to be real. We then find the inhomogeneous equations
\begin{align}
    -\delta\dd W+m^2 W+v \phi&=J\\
    -\delta\dd \overline{W}+m^2 \overline{W}+v \overline{\phi}&=\overline{J}\\
    K_\fm^{(0)}\phi- v^\bullet W=\rho\\
    K_\fm^{(0)}\overline{\phi}-v^\bullet \overline{W}=\overline{\rho}\, .
\end{align}
Hence, the equations for $(W,\phi)$ and $(\overline{W}, \overline{\phi})$ decouple, and the complex theory can be treated as two copies of the real theory.

\section{Quantized theory and Hadamard states}
\label{sec:quantum}
\subsection{Algebras of observables -- general theory}
\label{sec:gen framework} 

In this section, we will discuss the quantization of theories described by Green-hyperbolic operators, in particular the construction of the algebra of observables.  We largely follow~\cite{Fewster:2025a} but see also e.g.,~\cite{Baer:2015,BaerGinoux:2012}.

In many cases of interest, such as the real scalar field  obeying the Klein-Gordon equation or the neutral Proca field, the operator is not only Green-hyperbolic, but also real and formally hermitian (it is a RFHGHO in the nomenclature of \cite{Fewster:2025a}). In more detail, the bundle $B$ on whose sections the differential operator $P$ acts is equipped with a hermitian fibre metric $\IP{\cdot,\cdot}_x$ and a complex conjugation $\Cc$ satisfying $\IP{\Cc u,\Cc v}_x=\overline{\IP{u,v}_x}$ for any $u,v\in B_x$ and the operator $P$ satisfies $P=\Cc P \Cc$ and $\IP{f, Ph}_B=\IP{Pf, h}_B$ for all $f,h\in \Gamma^\infty(B)$ so that $\supp f\cap \supp h $ is compact. Here, 
\begin{equation}
   \IP{f,h}_B= \int \IP{f(x), h(x)}_x\dvol \quad \forall f,h\in \Gamma^\infty(B): \, \supp f\cap \supp h \text{ compact}
\end{equation}
denotes the pairing of sections with compact overlap.  A bundle $B$ equipped with hermitian fibre metric and complex conjugation as above is called a \emph{hermitian vector bundle} in~\cite{Fewster:2025a}.

 We consider the theory with quadratic action functional
\begin{equation}
    S[\Ac] = -\frac{1}{2}\IP{\Cc \Ac,P\Ac},
\end{equation}
up to total divergences and ignoring cutoffs needed for convergence purposes, which has Euler--Lagrange equation $P\Ac=0$.
In this case, the algebra of observables of this theory can be taken to be the free unital $*$-algebra $\Af_R (P,B)$
generated by $\Ac(f)$, $f\in\Gamma_0^\infty(B)$, and subject to the relations
    \begin{itemize}
    \item[R1] $f\mapsto \Ac(f)$ is complex linear,
    \item[R2] $(\Ac(f))^*=\Ac(\Cc f)$ (hermiticity),
    \item[R3] $\Ac(P f)=0$ (field equations),
    \item[R4] $\left[\Ac(f), \Ac(h)\right]=\ii \IP{\Cc f, E_P h}_B$ (commutation relations),
\end{itemize}
for all $f, h\in \Gamma^\infty_0(B)$. Here $E_P=E^-_P-E^+_P$ is the advanced-minus-retarded Green operator.
The interpretation is that the generators $\Ac(f)$ correspond to the classical smeared fields $\Ac(f)=\IP{\Cc f, \Ac}_B$, and the commutation relations are those obtained from Peierls' method applied to the action stated above.

It is not difficult to find examples where this scheme does not apply. One example is the charged Proca field in a non-trivial background electromagnetic field discussed in  Section~\ref{sec:chargedProca_sGH}. In this example the operators are Green-hyperbolic and formally hermitian but not real; more generally, there may not necessarily be a complex conjugation $\Cc$ on the bundle $B$. Such operators, called FHGHOs in \cite{Fewster:2025a}, can nonetheless be quantized following a well-known scheme.

One possibility\footnote{Another is to generate the algebra using one linear and one antilinear field.} for the quantization in this case is to set the algebra of observables $\Af_F(P,B)$ to be the free unital $*$-algebra generated by $\Ac(\lambda)$, with $\lambda\in \Gamma_0^\infty(B^*)$, and $\Ac^\star(f)$, with $f\in \Gamma_0^\infty(B)$, and subject to the relations
\begin{itemize}
    \item[F1] $\lambda\mapsto \Ac(\lambda)$ and $f\mapsto \Ac^\star(f)$ are complex linear,
    \item[F2] $\Ac^\star(f)=\Ac(f^\star)^*$,
    \item[F3] $\Ac(\sadj{P}\lambda)=0=\Ac^\star(P f)$ (field equations),
    \item[F4] $\left[\Ac(\lambda),   \Ac^\star(f)\right]=\ii\lambda( E_{P} f) \1$, and 
    $\left[\Ac(\lambda), \Ac(\mu)\right]=\left[\Ac^\star(f), \Ac^\star(f)\right]=0$ (commutation relations).
\end{itemize}
Here, $\star: B\to B^*$, $B_x\ni f\mapsto \IP{f,\cdot}_x\ni B^*_x$ is an antilinear bundle isomorphism induced by the fibre metric and $\sadj{P}:\Gamma^\infty(B^*)\to \Gamma^\infty(B^*)$ is the formal dual of $P$ defined by the condition $\lambda(Pf)=(\sadj{P}\lambda)(f)$ for all $\lambda\in \Gamma^\infty(B^*)$, $f\in \Gamma^\infty(B)$ with compactly overlapping support.
The interpretation is that $\Ac(\lambda)$ and $\Ac^\star(f)$ correspond to the classical smeared fields $\lambda(\Ac)$  and  $\Ac^\star(f)$ (recall that $\lambda$ is valued in $B^*$, while $f$ is valued in $B$). Here, one has in mind a theory with action functional
\begin{equation}\label{eq:complex_action}
    S[\Ac] = -\Ac^\star(P\Ac),
\end{equation}
modulo cutoffs, with Euler--Lagrange equations $P\Ac=0$ and $\sadj{P}\Ac^\star=0$. Again the commutation relations can be obtained by Peierls' method.
Note that the nontrivial commutation relation in F4 has an equivalent form
\begin{equation}\label{eq:CCR_equiv} 
\left[ \Ac^\star(h)^*,   \Ac^\star(f)\right]=\ii
\IP{h,E_{ P}f}  \1
\end{equation}
for $f,h\in \Gamma_0^\infty(B)$.

If the bundle $B$ is equipped with a linear  bundle isomorphism  of $B$ to $B^*$ written $f\mapsto f^\sharp$,  and therefore 
with a complex conjugation $\Cc$ such that $f^\sharp=(\Cc f)^\star$, 
such as in the examples of the various Proca theories treated in this work,
one can replace the generators $\Ac(\lambda)$ by generators $\tilde{\Ac}(f):=\Ac( f^\sharp) =\Ac((\Cc f)^\star)$, also writing $\bar{\Ac}(f)$ in place of $\Ac^\star(f)$.
In terms of these, the algebra $\Af_F(P, B)$ can be described as the free unital $*$-algebra generated by $\tilde{\Ac}(f)$ and $\bar{\Ac}(f)$, $f\in \Gamma_0^\infty(B)$, and subject to the relations 
\begin{itemize}
    \item[F1a] $f\mapsto \tilde{\Ac}(f)$ and $f\mapsto \bar{\Ac}(f)$ are complex linear,
    \item[F2a] $(\tilde{\Ac}(f))^*=\bar{\Ac}(\Cc f)$, $(\bar{\Ac}(f))^*=\tilde{\Ac}(\Cc f)$,
    \item[F3a] $\tilde{\Ac}(\bar{P}f)=0=\bar{\Ac}(P f)$, where $\bar{P}=\Cc P \Cc :\Gamma^\infty(B)\to \Gamma^\infty(B)$ (field equations)
    \item[F4a] $\left[\tilde{\Ac}(f), \bar{\Ac}(h)\right]=\ii\IP{ \Cc f, E_P h}_B\1$, and
    $\left[\tilde{\Ac}(f), \tilde{\Ac}(h)\right]=\left[\bar{\Ac}(f), \bar{\Ac}(h)\right]=0$ (commutation relations),
\end{itemize}
for all $f,h\in \Gamma_0^\infty(B)$.

A standard argument shows that $\Af(P,B)$ constructed in any one of the ways above satisfies the time-slice property:
If $N\subset M$ is a causally convex set containing a Cauchy surface of $(M,g)$, then
\begin{align}
    \Af(P\vert_N, B\vert_N)=\Af(P,B)\, .
\end{align}

In particular, if $N\subset M$ is a causally convex set containing a Cauchy surface of $(M,g)$, and $\chi\in C^\infty(M;\RR)$ satisfies $\chi=1$ in a neighbourhood of $M\setminus J^+(N)$ and $\chi=0$ on a neighbourhood of $M\setminus J^-(N)$, then for any $F\in \Gamma^\infty_0(B)$, one has $P \chi E_P F\in \Gamma^\infty_0(B\vert_N)$ and $(\id-P\chi E_B)f\in P\Gamma_0^\infty(B)$, so that one can split $\Gamma^\infty_0(B)=\Gamma^\infty_0(B\vert_N)+ P\Gamma_0^\infty(B)$, and in the same way $\Gamma_0^\infty(\bar{B})=\Gamma_0^\infty(\bar{B}\vert_N)+ \bar{P}\Gamma_0^\infty(\bar{B})$, see e.g.~\cite{Baer:2015} or \cite[Theorem 3.1g]{Fewster:2025a}. By the relation F3(a) or R3, this implies the desired statement.

Moreover, by purely symbolic changes to the arguments presented in \cite[Appendix C]{FewVer_QFLM:2018}, and \cite[Lemma 3.1(i)]{FewVer:dynloc2}, it follows that if $B$  is a hermitian vector bundle, $\Af(P, B)$ constructed in one of the ways above satisfies the  \emph{Haag property}. Namely, let $K\subset M$ be compact, and let $A\in \Af(P, B)$ commute with all elements of $\Af(P\vert_N, B\vert_N)$ for all regions $N\subset M$ lying in the causal complement of $K$, i.e., $N\subset (M\setminus J(K))$. Then for any connected open causally convex set $L\subset M$ containing $K$, one has $A\in \Af(P\vert_L, B\vert_L)$.

\subsection{A general framework for Green-hyperbolic operators}

 Most operators appearing in this paper will either be given as RFHGHOs or FHGHOs. However, one may wonder if there are cases where the FHGHO (or RFHGHO)-scheme does not apply. 
In fact, up to  the previous subsection, we had not made any additional assumptions on the vector bundles $B$, except that they have finite-dimensional fibres. In particular, we neither assumed the existence of a hermitian fibre metric, nor of a complex conjugation. Yet, there are in fact cases where the operator is Green-hyperbolic, but where the bundle it acts on lacks a canonical fibre metric and complex conjugation. One example of this case is the Teukolsky equation on the bundle of spin-weighted scalars \cite{Klein:2025, KleinHafner:2025}. 

In such a case, one can use a framework for the quantization that applies to all Green-hyperbolic operators, irrespective of the presence of additional structures in the bundle or additional properties satisfied by the operator (such as formal hermiticity). Since this framework does not seem to have been discussed in the literature before, we present it in some detail.

In the following, let $B\xrightarrow{\pi}M$ be a complex finite-rank vector bundle over a globally hyperbolic spacetime $(M,g)$. Let $B^*$ be the dual bundle, i.e., $B^*_x$ consists of all linear maps from $B_x$ to $\CC$. Denote by $B^\bstar$ the anti-dual bundle whose fibres $B^\bstar_x$ consist of all antilinear maps from $B_x$ to $\CC$, and write the dual bundle of $B^\bstar$ as $\bar{B}$.
Then there is a natural linear isomorphism (in the case of finite rank) between $B$ and $B^{\bstar\bstar}$, the double anti-dual of $B$, given by the map
\begin{equation}
    \iota_{B,x}: B_x\ni v\mapsto \left(B^{\bstar}_x\ni \bar\lambda \mapsto \overline{\bar{\lambda}(v)}\right)\, 
\end{equation}
for any $x\in M$ (here, $\bar\lambda$ is used as notation for a typical element of the antidual).  We also have an antiduality pairing
\begin{equation}
    \bar\lambda(v)= \int_M \bar\lambda_x(v_x)\,\dvol_x
\end{equation}
for $v\in\Gamma^\infty(B)$, $\bar\lambda\in\Gamma^\infty(B^\bstar)$ with compactly intersecting supports. Every linear partial differential operator $P$ acting on $\Gamma^\infty(B)$ has a 
{\it formal anti-dual} $\dhadj{P}$, which is a linear partial differential operator on $\Gamma^\infty(B^\bstar)$ obeying
\begin{equation}
    (\dhadj{P} \bar\lambda)(v) = \bar\lambda(Pv)
\end{equation}
for all $\bar\lambda \in \Gamma^\infty(B^\bstar)$ and $v\in\Gamma^\infty(B)$ with compactly intersecting supports. If $P$ is a general linear operator on $\Gamma^\infty(B)$, the formal anti-dual is defined in the same way, if it exists.

One can then show
\begin{prop}\label{prop:extend}
    Let $\BB=B\oplus B^\bstar$ and consider fibre metrics on $\BB$ such that $\PP=P\oplus\dhadj{P}$ is formally hermitian for every linear partial differential operator $P$. Every such fibre metric takes the form
    \begin{equation}\label{eq:unique fibre metric}
         \IP{(v,\bar\lambda),(w,\bar\mu)}=\alpha\bar{\mu}(v)+\iota(w)(\alpha\bar{\lambda})=\alpha\bar{\mu}(v)+\overline{\alpha\bar{\lambda}(w)}
    \end{equation}
    for some constant $\alpha\in\CC\setminus \{0\}$. Conversely, Eq.~\eqref{eq:unique fibre metric} defines
    such a fibre metric for every constant $\alpha\in\CC\setminus \{0\}$.
\end{prop}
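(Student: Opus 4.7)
The plan is to verify sufficiency by direct computation and then establish uniqueness through a block decomposition of the fibre metric according to the splitting $\BB = B \oplus B^\bstar$, constraining each block by testing formal hermiticity against increasingly general choices of $P$.

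For sufficiency, the expression~\eqref{eq:unique fibre metric} is manifestly hermitian for each $\alpha \in \CC\setminus\{0\}$, since its two summands become complex conjugates of one another under $(v,\bar\lambda)\leftrightarrow(w,\bar\mu)$, and non-degeneracy in each factor is immediate. Formal hermiticity of $\PP$ reduces, after integration of pairings with compactly overlapping supports, to $\int \bar\mu(Pv)\,\dvol = \int (\dhadj{P}\bar\mu)(v)\,\dvol$ and its complex conjugate, which is just the defining property of the formal anti-dual applied to each summand separately.

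For necessity, I would decompose an arbitrary sesquilinear fibre metric as
\[
\IP{(v,\bar\lambda),(w,\bar\mu)} = g_1(v,w) + g_2(v,\bar\mu) + g_3(\bar\lambda,w) + g_4(\bar\lambda,\bar\mu),
\]
where each $g_i$ is the restriction to one of the four summands. Hermitian symmetry forces $g_1$ and $g_4$ to be hermitian in their own slots and $g_3(\bar\lambda,w) = \overline{g_2(w,\bar\lambda)}$, so only $g_1$, $g_2$ and $g_4$ are independent. Testing formal hermiticity against the zeroth-order operator $P = i\,\id_B$, whose anti-dual is $-i\,\id_{B^\bstar}$ because elements of $B^\bstar$ are antilinear functionals, immediately yields $-i g_1 = i g_1$ and $i g_4 = -i g_4$, hence $g_1 \equiv 0 \equiv g_4$.

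The main obstacle lies in the cross term. At each fibre the form $g_2|_x$ is uniquely of the form $\bar\mu_x(A_x v_x)$ for some $A_x \in \End B_x$, via the canonical antilinear isomorphism between $B_x$ and $B^{\bstar *}_x$ given by $v \mapsto \bar\mu(v)$; smoothness assembles $A$ into a section of $\End B$. Applying formal hermiticity with $\Phi = (v,0)$, $\Psi = (0,\bar\mu)$ then gives
\[
\int \bar\mu(APv)\,\dvol = \int (\dhadj{P}\bar\mu)(Av)\,\dvol = \int \bar\mu(PAv)\,\dvol
\]
for every compactly supported $v, \bar\mu$ and every linear partial differential operator $P$, forcing $[A,P] = 0$ on $\Gamma^\infty(B)$. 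I would exploit this in two stages: taking $P$ to range over arbitrary smooth sections of $\End B$ forces $A(x)$ to commute with every endomorphism pointwise, so $A = \alpha\,\id_B$ for some $\alpha \in C^\infty(M;\CC)$; then taking $P$ to be a first-order derivative in each coordinate direction of a local chart gives $[A,P]v = -(P\alpha)v$, forcing $\alpha$ to be constant on each connected component of $M$. Non-degeneracy of the fibre metric rules out $\alpha \equiv 0$, and the hermitian symmetry relation then reconstructs $g_3$ to match~\eqref{eq:unique fibre metric} exactly.
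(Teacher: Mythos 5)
Your proposal is correct and follows essentially the same route as the paper's proof: decompose the fibre metric into four blocks, kill the diagonal blocks by testing against $P=\ii\,\id$ (the paper uses $P=\id$ and $P=\ii\,\id$ where you invoke pointwise hermiticity plus $P=\ii\,\id$, which is equivalent), reduce the cross term to the commutation condition $[A,P]=0$ for all linear partial differential operators, and then specialize first to zeroth-order operators to get $A=\alpha\,\id_B$ and to higher-order scalar operators to force $\alpha$ constant, with non-degeneracy excluding $\alpha=0$. Your remark that $\alpha$ is a priori only constant on each connected component is a slightly more careful reading than the paper's, but otherwise the arguments coincide.
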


\begin{proof}
   To prove~\eqref{eq:unique fibre metric},
    let $v,w\in \Gamma^\infty(B)$ and $\bar\lambda,\, \bar\mu\in \Gamma^\infty(B^\bstar)$ so that the supports of any pair of them have compact intersection. Then any fibre metric on $\BB$, acting on these sections, can be written as
\begin{equation}
 \IP{(v,\bar\lambda),(w,\bar\mu)}=
 (A_1w)(v)+(A_2\bar{\mu})(v)+\overline{\bar{\lambda}(A_3w)}+\overline{\bar{\lambda}(A_4\bar\mu)}
\end{equation}
where $A_1\in\Gamma^\infty(\Hom(B,B^\bstar))$, 
$A_2\in \Gamma^\infty(\End(B^\bstar))$, $A_3\in\Gamma^\infty(\End(B))$, and $A_4\in\Gamma^\infty(\Hom(B^\bstar,B))$.
The hermiticity of the fibre metric and $\PP$  implies the condition
\begin{align}
\label{eq: fib metric cond}
    (A_1 P w)(v)+(A_2\dhadj{P}\bar\mu)(v)+\overline{\bar\lambda(A_3 Pw)}+\overline{\bar\lambda(A_4\dhadj{P}\bar\mu)}=\overline{(A_1 Pv)(w)}+\overline{(A_2\dhadj{P} \bar\lambda)(w)}+\bar\mu(A_3Pv)+\bar\mu(A_4\dhadj{P}\bar\lambda)\, 
\end{align}
  which has to hold for any $v,w\in \Gamma^\infty(B)$, $\bar\lambda,\bar\mu\in \Gamma^\infty(B^\bstar)$ as described above and any linear differential operator $P:\Gamma^\infty(B)\to\Gamma^\infty(B)$.

  In particular, setting $P=\id_\BB$, one obtains
    \begin{align}
        (A_2\bar\lambda)(w)&=\bar\lambda(A_3w)\quad \forall w\in \Gamma^\infty(B), \, \bar\lambda\in \Gamma^\infty(B^\bstar):\supp w\cap\supp \bar\lambda \text{ compact}        
    \end{align}
    which asserts that $A_2=\dhadj{A_3}$,  
    and
    \begin{align}
         (A_1v)(w)&=\overline{(A_1w)(v)} \quad \forall v,\, w\in \Gamma^\infty(B): \supp w\cap\supp v \text{ compact}\\
         \bar\mu(A_4\bar\lambda)&=\overline{\bar\lambda(A_4\bar\mu)} \quad \forall \bar\lambda,\, \bar\mu\in \Gamma^\infty(B^\bstar):\supp \bar\mu\cap\supp \bar\lambda \text{ compact}.
    \end{align}
    Combining the last two with the analogous conditions arising from $P= \ii \, \id$, one obtains
     \begin{equation}
        \ii(A_1w)(v)=(A_1\ii\,\id w)(v)=\overline{(A_1\ii \, \id v)( w)}=-\ii\overline{(A_1v)(w)}=-\ii (A_1w)(v) 
    \end{equation}
    for all $v,w\in \Gamma^\infty(B)$ with compactly overlapping support, implying $A_1=0$. In the same way, one asserts that $A_4=0$.
    In the next step, $\bar\lambda=0$ and $w=0$ entails the condition
    \begin{equation}
        \bar\mu(A_3Pv)=\bar\mu(PA_3v)\quad \forall v\in \Gamma^\infty(B) , \, \, \bar\mu\in \Gamma^\infty(B^\bstar):\supp v\cap\supp \bar\mu \text{ compact} 
    \end{equation}
     or equivalently
\begin{equation}
\label{eq:fib metric comm}
    [A_3, P]=0 
\end{equation}
for all linear differential operators $P:\Gamma^\infty(B)\to \Gamma^\infty(B)$.
Choosing $P$ of $0$-th order, i.e., $P\in \Gamma^\infty(\End(B))$, this necessitates $A_3=\alpha \id_B$, where $\alpha\in C^\infty(M)$. Plugging the result into \eqref{eq:fib metric comm} and assuming $P=P_0\id_B$, where $P_0\in {\rm PDO}(M)$ is a linear differential operator of arbitrary order acting on scalar functions, \eqref{eq:fib metric comm} reduces to
\begin{equation}
    [\alpha, P_0]=0 \quad \forall P_0\in {\rm PDO}(M)\, ,
\end{equation}
asserting that $\alpha\in \CC$ is a constant, and since the fibre metric is weakly non-degenerate the case $\alpha=0$ is excluded. Thus~\eqref{eq:unique fibre metric} is proved. The converse is a straightforward computation.
\end{proof}

This construction allows us to obtain a formally hermitian operator even in cases where a priori we do not even have a fibre metric to define a notion of (formal) hermiticity. Moreover, if $P$ is Green-hyperbolic, then so is $\PP$, with retarded and advanced Green operators given by $E^\pm_\PP=E^\pm_P \oplus E^\pm_{\dhadj{P}}$, thus constituting a formally hermitian Green-hyperbolic operator (FHGHO) which can then be quantized  following the FHGHO scheme discussed above.

Proposition~\ref{prop:extend} demonstrates that any Green-hyperbolic operator on a finite-rank vector bundle can be enlarged to a FHGHO. Physically, the method presented here amounts to introducing additional fields valued in the anti-dual bundle. The constant $\alpha$ that remained as a freedom in the choice of the fibre metric of $\BB$ can be absorbed into the definition of the additional fields, so one can restrict to $\alpha=1$. The enlarged theory has the action of the form~\eqref{eq:complex_action} for $\PP$, namely 
\begin{equation}
    S[v, \bar\lambda]=
    -\IP{(v,\bar\lambda),\PP(v,\bar\lambda)}= - (\dhadj{P}\bar\lambda)(v)-\overline{\bar\lambda(Pv)}=-2\Re \bar\lambda(Pv) \, ,
\end{equation}
modulo cutoffs, see \cite{Iuliano} for this action (up to an overall factor) in the example of the Teukolsky scalar \cite{Toth}.
In the case $\alpha=1$, using the FHGHO quantisation scheme F1--F4, one has smeared fields $\Ac^\star(v,\bar\lambda)$ labelled by $v\in\Gamma_0^\infty(B)$ and $\bar\lambda\in \Gamma_0^\infty(B^\bstar)$, obeying the commutation relation 
\begin{equation}
    \left[ \Ac^\star(v,\bar\lambda)^*,  \Ac^\star(w,\bar\mu)\right]=\ii 
\IP{(v,\bar\lambda),E_{\PP} (w,\bar\mu)}\1 =\ii ( (E_{\dhadj{P}}\bar\mu)( v) + \overline{\bar{\lambda}(E_P w)})\1  
\end{equation}
(see~\eqref{eq:CCR_equiv}). In this sense we see that the additional fields are conjugate to those labelled by test functions in $B$.

 To conclude this subsection, we  briefly consider circumstances in which $\PP$ or related operators are \rfhgho{}s, leaving most details to the reader. We will employ several natural maps. Apart from the natural isomorphism $\iota_B: B\to B^{\bstar\bstar}$   already mentioned, and the usual natural isomorphism between $B$ and $B^{**}$, which will be denoted $\tilde\iota_B$, there are natural anti-linear, invertible maps between the dual- and anti-dual spaces given by 
    \begin{align}
        C_B:B^*\to B^\bstar\, , \quad B^*_x\ni\lambda\mapsto\left(B_x\ni v\mapsto \overline{\lambda(v)}\right)\, .
    \end{align} 
There are two additional structures that can turn $\PP$ or related operators into \rfhgho{}s. 

First, assume that there is a linear bundle isomorphism $I:B\to B^{\bstar *}$, so that $\Cc_B = \iota_B^{-1}(C_{B^\bstar}) I$ satisfies $\Cc_B^2=\id_B$. Then $\Cc_B$ is a complex conjugation on $B$, and
$\Cc_{B^\bstar}:=C_BI^*\tilde{\iota}_{B^\bstar}$ is a complex conjugation on $B^\bstar$  that satisfies $(\Cc_{B^\bstar}\bar\lambda)(v)=\overline{\bar\lambda(\Cc_B v)}$ for all $\bar\lambda\in\Gamma^\infty(B^\bstar)$ and $v\in\Gamma^\infty(B)$.
As a result, $\Cc_\BB = \Cc_B\oplus \Cc_{B^\bstar}$ is a complex conjugation on $\BB$, giving $\BB$ a hermitian vector bundle structure. Moreover, any Green-hyperbolic operator $P$ that is $\Cc_B$-real, i.e., satisfies $\Cc_BP\Cc_B=P$, determines a \rfhgho\ $\PP$.

In this setup, one can alternatively consider the enlarged bundle $B\oplus B^*$. It can be equipped with the bilinear symmetric form
\begin{equation}
    \left((v,\lambda),(w,\mu)\right)=\lambda(w)+\mu(v)\, ,
\end{equation}
which is uniquely fixed up to rescaling (in dimension $>1$) by requiring that $P\oplus \sadj{P}$ is symmetric for any linear differential operator $P:\Gamma^\infty(B)\to\Gamma^\infty(B)$. If one is given a bundle isomorphism $I$ as above, $\Cc_{B^*}=I^*\tilde{\iota}_{B^\bstar}C_B$ is a complex conjugation on $B^*$ and $\Cc_{B\oplus B^*}=\Cc_B\oplus \Cc_{B^*}$ is a complex conjugation on $B\oplus B^*$, equipping it with a Hermitian vector bundle structure with hermitian fibre metric $\IP{\cdot,\cdot}=(\Cc_{B\oplus B^*}\cdot,\cdot)$. Then any $\Cc_B$-real, Green-hyperbolic $P$ determines a \rfhgho\ $P\oplus \sadj{P}$.

In fact, in this scenario there is a linear isomorphism of hermitian vector bundles (i.e., the isomorphism is unitary and intertwines the complex conjugations) from $\BB$ to $B\oplus B^*$ given by $\id_B\oplus I^*\circ \tilde{\iota}_{B^\bstar}$ and we have $I^*\tilde{\iota}_{B^\bstar}\dhadj{P}=\sadj{P}I^*\tilde{\iota}_{B^\bstar}$ for any $\Cc_B$-real operator $P$.

Second, assume that there is a linear bundle isomorphism  $J:B\to B^*$ satisfying $J=\sadj{J}\tilde{\iota}_B$. Let $c:=C_BJ:B\to B^\bstar$. Then $\Cc_\BB:\BB\ni (f,\bar\lambda)=(c^{-1}\bar\lambda, c f)\in \BB $ is a complex conjugation on $\BB$ that equips $\BB$ with a hermitian vector bundle structure. Moreover, any  Green-hyperbolic linear differential operator $P:\Gamma^\infty(B)\to \Gamma^\infty(B)$ that satisfies $cP=\dhadj{P}c$ determines a \rfhgho\ $\PP$.

Alternatively, one may consider $B\oplus B^{\bstar*}=B\oplus \bar B$. On this enlarged bundle, there is a complex conjugation given by $\Cc(v,\bar w)=(\tilde{c}^{-1}\bar w, \tilde{c} v)$, where $\tilde{c}:=C_{B^\bstar}^{-1}\iota_B: B\to \bar B$. The bundle isomorphism $J$ induces a hermitian fibre metric $\IP{(v,\bar w), (x,\bar y)}=\bar y(cv)+\overline{\bar w(cx)}$ compatible with the complex conjugation, giving $B\oplus \bar B$ a hermitian vector bundle structure. Moreover, any  Green-hyperbolic $P$ that satisfies $cP=\dhadj{P}c$ determines a \rfhgho\ $P\oplus \bar P$, where $\bar P=\tilde{c}P\tilde{c}^{-1}$.

As in the previous scenario, there is a linear isomorphism of hermitian vector bundles between $\BB$ and $B\oplus \bar B$ given by $\id_B\oplus \tilde{c}c^{-1}$ and we have $\bar{P} \tilde{c}c^{-1}=\tilde{c}c^{-1}\dhadj{P}$ for all $P$ satisfying $cP=\dhadj{P}c$.

\subsection{Algebras of observables for Proca theories}\label{sec:algobs}

In what follows we will be interested in the theories of the real, uncharged, Proca field and the charged Proca field with a general magnetic moment, on a general globally hyperbolic spacetime $(M,g)$ with background electromagnetic field $A\in\Gamma^\infty(T^*M)$, as well as in the coupled and uncoupled theories of the complex Proca--scalar system in a vanishing background electromagnetic field. 

$\Lambda^pM$ is a hermitian vector bundle with complex conjugation given by the usual complex conjugation, and the hermitian bundle metric corresponding to the pairing \eqref{eq:hermpair}. Further, all operators of interest are formally hermitian and Green-hyperbolic, so that no augmentation of the theory is necessary to construct the algebras of observables.

It is useful to adopt a more formal notation for the background spacetime and vector field,
representing an orientation on a Lorentzian spacetime by an equivalence class $\ogth$ of non-vanishing $4$-forms, and a time-orientation $\tgth$ by an equivalence class of non-vanishing timelike $1$-forms, where in both cases $\omega\sim \omega'$ if and only if $\omega=f\omega'$ for some strictly positive $f\in C^\infty(M)$. A tuple $(M,g,\ogth,\tgth)$ is an \emph{oriented globally hyperbolic spacetime} if $(M,g)$ is a ($4$-dimensional) globally hyperbolic spacetime equipped with time-orientation $\tgth$ and orientation $\ogth$. We typically denote such tuples by $\Mb$ and write $M$ for the underlying manifold of $\Mb$ when there is no ambiguity. If an oriented globally hyperbolic spacetime $\Mb$ is equipped with a background
electromagnetic field $A\in \Gamma^\infty(T^*M)$, then the pair $(\Mb,A)$ constitutes a background, typically denoted $\Mbb$. Again, we write $M$ for the underlying manifold of $\Mbb$ when this does not create ambiguity.
 
Throughout, we keep the real parameters $m>0$, $q\neq 0$ and $\kappa$ fixed but arbitrary and write
\begin{equation}
\label{eq:charged Proca on M operator}
	P_{\Mbb}=-\delta_A \dd_A  + \ii q \kappa F^\bullet + m^2
\end{equation}	
for the charged Proca operator on background $\Mbb$. The uncharged Proca operator can be defined on any oriented globally hyperbolic spacetime $\Mb$ by $P_\Mb=P_{(\Mb,0)}$.
The complex conjugate operator $\overline{P}_{\Mbb}$ is given by $\overline{P}_{\Mbb}f = \overline{P_{M,A}\overline{f}}$, so $\overline{P}_{(\Mb,A)}=P_{(\Mb,-A)}$ and in particular,
$P_{\Mb}=\overline{P}_{\Mb}$.
The Green operators and advanced-minus-retarded solution operator for $P_{\Mbb}$ (rep., $P_\Mb$) will be written $E^\pm_{\Mbb}$ and $E_{\Mbb}$ (resp., $E^\pm_\Mb$ and $E_\Mb$).
Recalling the hermitian pairing $\IP{\cdot,\cdot}_p$ for $p$-form fields in \eqref{eq:hermpair}, note that 
\begin{align}
\IP{P_\Mbb f, E_{\Mbb}h}_1 &= 0 = \IP{f,E_\Mbb P_{\Mbb}h}_1\nonumber\\
\IP{P_\Mb f, E_{\Mb}h}_1 &= 0 = \IP{f,E_\Mb P_{\Mb}h}_1
\end{align}
for all $f,h\in \Gamma_0^\infty(\Lambda^1M)$.

Starting with the uncharged theory (representing $Z^0$-bosons, for example): on oriented globally hyperbolic spacetime $\Mb$, the unital $*$-algebra of observables $\Zf(\Mb)$ is generated by symbols $\Zc_\Mb(f)$ labelled by $f\in\Gamma_0^\infty(\Lambda^1M)$ and subject to the relations R1-R4, where $\Zc_\Mb(f)$ plays the role of $\Ac(f)$.
One may check that the sign in R4 agrees with Peierls' method applied to the Lagrangian $-\tfrac{1}{2}(\dd Z)\wedge\star \dd Z+\tfrac{1}{2}m^2 Z\wedge\star Z$ and the Dirac quantization prescription $[\hat{X},\hat{Y}]=\ii\widehat{\{X,Y\}}$. 

Meanwhile, the theory of the charged field (representing $W^\pm$-bosons,  for example) on background $\Mbb$ is described by the unital $*$-algebra $\Wf(\Mbb)$ generated by symbols $\Wc_{\Mbb}(f)$ and $\overline{\Wc}_{\Mbb}(f)$ labelled by $f\in\Gamma_0^\infty(\Lambda^1M)$ and subject to the relations F1a-F4a, with $\Wc_\Mbb(f)$ replacing $\tilde{\Ac}(f)$ and $\overline{\Wc}_\Mbb(f)$ replacing $\bar\Ac(f)$.

An important fact is that $\Wf((\Mb,0))\cong \Zf(\Mb)\otimes\Zf(\Mb)$, i.e., the charged field on a vanishing background is isomorphic to two copies of the uncharged field. The correspondence is given by
\begin{align}
\label{eq:0 bckgnd corresp}
 \Wc_{(\Mb,0)}(f) &\mapsto \frac{1}{\sqrt{2}}\left( \Zc_{\Mb}(f)\otimes\1 + \ii \1\otimes  \Zc_{\Mb}(f) \right) \nonumber\\
  \overline{\Wc}_{(\Mb,0)}(f) &\mapsto \frac{1}{\sqrt{2}}\left( \Zc_{\Mb}(f)\otimes\1 - \ii \1\otimes  \Zc_{\Mb}(f) \right) 
\end{align}
for $f\in \Gamma_0^\infty(\Lambda^1 M)$. 

Before moving on to the Proca--scalar system, we briefly consider the theory of the complex scalar field governed by the operator 
\begin{align}
    K_\Mbb= -\delta_A \dd_A+\fm^2:\,\Gamma^\infty(\Lambda^0M)\to\Gamma^\infty(\Lambda^0M)\, ,
\end{align}
for some fixed real parameter $\fm>0$. There is a strong analogy to the theory of the Proca field. In particular, the uncharged Klein--Gordon operator is given by $K_\Mb=K_{(\Mb,0)}$ and the complex conjugate operator satisfies $\bar{K}_{(\Mb,A)}=K_{(\Mb, -A)}$, so $K_\Mb$ is real. The algebra of observables for the uncharged theory, $\Sf(\Mb)$, is the free unital $*$-algebra generated by $\phi_\Mb(f)$, $f\in \Gamma_0^\infty(\Lambda^0M)$, satisfying relations R1-R4 with $\phi_\Mb(f)$ in place of $\Ac(f)$, while the algebra of observables  $\Cf(\Mbb)$ of the charged scalar is the free unital $*$-algebra generated by $\Phi_\Mbb(f)$ and $\overline{\Phi}_\Mbb(f)$, $f\in \Gamma_0^\infty(\Lambda^0M)$, satisfying relations F1a-F4a with $\Phi_\Mbb(f)$ and $\overline{\Phi}_\Mbb(f)$ replacing $\tilde{\Ac}(f)$ and $\bar\Ac(f)$, correspondingly. On a vanishing background, one has $\Cf((\Mb,0))\cong \Sf(\Mb)\otimes\Sf(\Mb)$, and the correspondence is given by the analogue of \eqref{eq:0 bckgnd corresp}.

Lastly, we discuss the quantization of the complex Proca--scalar theory in the absence of a background electromagnetic field. Let us denote $(\Mb, v)=\Mc$, where $v\in \Gamma_0^\infty(\Lambda^1M)$ is a real-valued coupling $1$-form. Then we write
\begin{align}
    P_\Mc=\begin{pmatrix}
        P_\Mb & v \\ -v^\bullet & K_\Mb
    \end{pmatrix}
\end{align}
for the coupled Proca-scalar operator acting on $\Gamma^\infty(\Lambda^1M\oplus \Lambda^0M)$ in the background $\Mb$ with coupling $v$. The complex conjugate operator is given by $\bar{P}_\Mc=P_\Mc$, since $v$ is real. This theory is described by the $*$-algebra $\Uf(\Mc)$ generated by symbols $\Uc_\Mc(f\oplus h)$ and $\overline{\Uc}_\Mc(f\oplus h)$, with $f\in \Gamma_0^\infty(\Lambda^1M)$ and $h\in \Gamma^\infty(\Lambda^0M)$, and subject to the relations F1a-F4a with $\tilde{\Ac}(f\oplus h)$ given by $\Uc_\Mc(f\oplus h)$ and $\bar\Ac(f\oplus h)$ by $\overline{\Uc}_\Mc(f\oplus h)$.

For $v=0$, $P_\Mc=P_\Mb\oplus K_\fm$ is the uncoupled Proca--scalar theory. 
Similar to the identification of $\Wf((\Mb,0))$ with two copies of $\Zf(\Mb)$, $\Uf((\Mb,0))\cong \Wf((\Mb,0))\otimes \Cf((\Mb,0))$, i.e., $\Uf((\Mb,0))$ is isomorphic to the tensor product of the algebras $\Wf((\Mb,0))$ of the complex Proca field and $\Cf((\Mb,0))$ of the complex scalar field in vanishing background, so that the algebra of the uncoupled system is generated by the generators of the free Proca and scalar fields. In detail, the correspondence is given by 
\begin{align}
    \Uc_{(\Mb,0)}(f\oplus h)&\mapsto \Wc_{(\Mb,0)}(f)\otimes \1_\Cf + \1_\Wf\otimes \Phi_{(\Mb,0)}(h)\, , \\
    \overline{\Uc}_{(\Mb,0)}(f\oplus h)&\mapsto \overline{\Wc}_{(\Mb,0)}(f)\otimes \1_\Cf + \1_\Wf\otimes \overline{\Phi}_{(\Mb,0)}(h)\, . 
\end{align}

\subsection{Hadamard states}
\label{sec:states}
We briefly comment on the existence of Hadamard states for the QFTs discussed in this work,  following the definition of Hadamard states for general $\Vc^\pm$-decomposable RFHGHOs
which was developed recently in \cite{Fewster:2025a}. Given a $\Vc^\pm$-decomposable RFHGHO $P$ acting on sections of a hermitian vector bundle $B$ over some spacetime $(M,g)$, let $\Af(P,B)$ be the associated algebra as in Section~\ref{sec:gen framework}. Then a state $\omega:\Af(P,B)\to \CC$ is  $\Vc^+$-Hadamard if the two-point function $W\in \Dc'(B\boxtimes B)$ defined by
\begin{align}
    W(\rho f^\#, \rho h^\#)=\omega(\Ac(f) \Ac(h))\, , \quad f,h\in \Gamma_0^\infty(B)
\end{align}
satisfies
\begin{align}
    \WF(W)\subset \Vc^+\times \Vc^-\, .
\end{align}
Here, $\rho$ is the metric volume density of the spacetime, and $\#:B\to B^*$ is the linear isomorphism induced by the fibre metric and complex conjugation of $B$. See \cite{Fewster:2025a} for more details. 

In this paper, we will only be interested in the case $\Vc^\pm=\Nc^\pm$, the bundles of nonzero future/past-directed null covectors, and refer simply to `Hadamard' rather than `$\Nc^+$-Hadamard'.
For scalar fields satisfying the Klein--Gordon equation, this is equivalent to earlier formulations of the Hadamard property \cite{Radzikowski_ulocal1996, SahlmannVerch:2000RMP}, and their existence on general globally hyperbolic spacetimes is well-established \cite{FullingSweenyWald:1978, FullingNarcowichWald}. This extends to normally hyperbolic formally hermitian operators on bundles with positive definite fibre metrics \cite{IslamStrohmaier:2020,FewsterStrohmaier:2025}.

For the real Proca field, it was shown in \cite[Theorem 7.2]{Fewster:2025a} that the theory has Hadamard states on any globally hyperbolic spacetime. A close inspection of the state constructed in \cite[Theorem 3]{MorettiMurroVolpe:2023} for Proca fields on ultrastatic spacetimes reveals that the Minkowski vacuum state for real Proca fields is Hadamard, as one would expect. 

In the absence of a background electromagnetic field, the equation for the charged Proca field is real, and the theory of the charged Proca field can be identified with two copies of the theory of the real Proca field. By \cite[Theorem 5.23]{Fewster:2025a}, this implies the existence of Hadamard states for the charged Proca theory in the case of vanishing electromagnetic background fields. The same result implies the existence of Hadamard states for the uncoupled Proca-scalar system and Proca multiplets with constant diagonal mass matrix.

To show the existence of Hadamard states in the case of non-vanishing electromagnetic background fields, variable mass matrix of the multiplet, or non-vanishing coupling of the Proca-scalar system, we note that all of these theories are governed by $\Nc^\pm$-decomposable operators. We focus on the charged Proca theory, as the proofs for the other two cases work in the same way. Let $\Mbb=(\Mb,A)$, let $P_{\Mbb}$ denote the charged Proca operator as in \eqref{eq:charged Proca on M operator}, and let $P_{\Mb}=P_{(\Mb,0)}$ be the operator of the charged Proca with vanishing background field.
 Let $\omega_0$ be any Hadamard state on $\Wf(\Mb,0)$ (such states exist, as noted above).
 Choose Cauchy surfaces $\Sigma_\pm\subset M$ and $\chi\in C^\infty(M;\RR)$ with $\chi=0$ on a neighbourhood of $M\setminus J^+(\Sigma_+)$ and $\chi=1$ on a neighbourhood of $M\setminus J^-(\Sigma_-)$.
We then consider the intermediate operator $P_{(\Mb, \chi A)}$. By construction, we can find a causally convex open set $N\subset J^-(\Sigma_-)$ containing a Cauchy surface of $M$, on which $P_{(\Mb,\chi A)}$ agrees with $P_{\Mb}$. 
 This agreement, in combination with Theorem 3.5e), Lemma 5.15c), and Theorem 5.16a) of \cite{Fewster:2025a} implies that we can  pull $\omega_0$ back to a Hadamard state of $\Wf(\Mb|_N,0)$, the theory on $N$, and then push this state forward to a Hadamard state $\omega_\chi$ on $\Wf(\Mb,\chi A)$. We can now choose a causally convex open set $N'\subset J^+(\Sigma_+)$. On this set, $P_\Mbb$, agrees with $P_{(\Mb, \chi A)}$, and repeating the above argument, one can obtain a Hadamard state of the fully interacting theory $\Wf(\Mbb)$ governed by $P_\Mbb$. Thus we have established the existence of Hadamard states for the theories with non-vanishing background electromagnetic field on any globally hyperbolic spacetime.

\section{Application:  a polarization-sensitive detector}\label{sec:Malus}

As an application of, and motivation for, our analysis of the Proca--scalar model, we show how it can be used to model a polarization-sensitive detector for measuring a complex Proca field. This can form a building block in a broader discussion of measurements involving polarization. The test of our detector will be that it replicates Malus' law  at leading order.

\subsection{Recall of measurement theory}
    The measurement framework of Fewster and Verch \cite{FewVer_QFLM:2018} describes how observables of a probe  field theory described by an algebra $\Bf$ can be used to extract information about a system theory with algebra $\Af$, given that the two interact in a (time-)compact region $K$ of spacetime.
    In the out-/in-regions $M^\pm =M\setminus J^\mp(K)$, one can then find unit-preserving, invertible $*$-homomorphisms $\tau^\pm$ between the uncoupled probe-system algebra $\Af\otimes\Bf$ and the algebra $\Cf$ of the coupled theory, given that both theories satisfy the time-slice axiom.
    For a measurement, system and probe are prepared in $M^-$ in state $((\tau^-)^{-1})^*(\omega\otimes \sigma)$, where $\omega$ and $\sigma$ are states for $\Af$ and $\Bf$. Then, in $M^+$,  the probe observable $B\in \Bf$, corresponding to the observable $\tau^+(\1\otimes B)$ in $\Cf$, is measured. 
    The resulting expected outcome can be expressed in terms of the system only, as the expectation in the original system state $\omega$, 
    \begin{align}
      (((\tau^-)^{-1})^*(\omega\otimes \sigma))( \tau^+(\1\otimes B))=\omega(\Ec_\sigma(B))\, ,
    \end{align}
     where $\Ec_\sigma(B)=\eta_\sigma(\Theta(\1\otimes B))$ is the induced system observable, with $\Theta=(\tau^-)^{-1}\circ\tau^+$ the scattering map and $\eta_\sigma(A\otimes B)= \sigma (B)A$ a partial trace over $\Bf$.

     Let $\Af$ and $\Bf$ be constructed from RFHGHOs (or FHGHOs) $P_a: \Gamma^\infty(E)\to \Gamma^\infty(E)$ and $P_b: \Gamma^\infty(F)\to \Gamma^\infty(F)$ as discussed in Section~\ref{sec:gen framework}, so that $\Af\otimes \Bf$ is the algebra for $P_0:=P_a\oplus P_b$.   
    Assume $\Cf$ is  also the algebra corresponding to a RFHGHO (or FHGHO) $P: \Gamma^\infty(E\oplus F)\to \Gamma^\infty(E\oplus F)$ so that $P=P_0$ on $M\setminus K$, and 
    \begin{align}
        P=\begin{pmatrix}
            P_a & R\\  \hadj{R} & P_b
        \end{pmatrix} =
        P_0 + \Rc, \qquad \Rc = \begin{pmatrix}
        	0 & R \\ \hadj{R}  &0
        \end{pmatrix}
    \end{align}
    for some differential operator $R: \Gamma^\infty(F)\to \Gamma^\infty(E)$ and its formal adjoint $\hadj{R}:\Gamma^{\infty}(E)\to\Gamma^\infty(F)$. 
     In this case, the scattering morphism takes the form \cite[Eq. (4.11)]{FewVer_QFLM:2018}
    \begin{align}
        \Theta(\Psi(e\oplus f))=\Psi(e\oplus f-(P-P_0) E_P^-(e\oplus f))\, .
    \end{align}


\subsection{ Minkowski vacuum representation}

We begin with some notation. If $(\fh, \IP{\cdot,\cdot})$ is a complex Hilbert space, then $\Fc_s(\fh)$ will denote the symmetric Fock space over $\fh$, 
with annihilation and creation operators $c_\fh(u)$, $c_\fh^*(u)$ 
labelled by $u\in\fh$ and satisfying the commutation relations
\begin{equation}
    [c_\fh(u),c_\fh(v)]=0=[c^*_\fh(u),c^*_\fh(v)] , \qquad 
    [c_\fh(u),c^*_\fh(v)]=
    \IP{u,v}_\fh \1_{\Bc(\Fc_s(\fh))}
\end{equation}
on suitable domains in $\Fc_s(\fh)$, together with $c^*_\fh(u)=c_\fh(u)^*$. In addition, the map $u\mapsto c^*_\fh(u)$ is linear, while $u\mapsto c_\fh(u)$ antilinear. The Fock vacuum vector is denoted $\Omega_\fh\in \Fc_s(\fh)$.

Now let $\Mb$ denote Minkowski spacetime with metric $\eta = \Diag(+1,-1,-1,-1)$ in inertial Cartesian coordinates. The Minkowski vacuum representations of the two algebras $\Wc((\Mb,0))$ and $\Cc((\Mb,0))$ of the charged Proca and scalar field are defined on the Hilbert spaces
\begin{align}
    \Hc_W&=\Fc_s(\fh_W)\otimes \Fc_s(\fh_W)\\
    \Hc_\Phi&=\Fc_s(\fh_\Phi)\otimes \Fc_s(\fh_\Phi)\, ,
\end{align}
where the one-particle Hilbert spaces are given by
\begin{align}
    \fh_W&=\left\{v\in L^2(\RR^3, (2\pi)^{-3}\dd^3\cv{k})\otimes \CC^4:\, \IP{v(\cv{k}), k}=0,\, k_\bullet=(\omega(\cv{k}),\cv{k})\right\}\\
    \fh_\Phi&=L^2(\RR^3, (2\pi)^{-3}\dd^3 \cv{k})\, ,
\end{align}
respectively, with $\omega(\cv{k})=\sqrt{m^2+\abs{\cv{k}}^2}$. The inner product on $\CC^4$  used in defining $\fh_W$ is given by 
\begin{align}\label{eq:C4IP}
    \IP{v,w}=-\eta^{-1}(\overline{v},w)\, ,
\end{align}
where $\eta$ is the Minkowski metric.\footnote{ Eq.~\eqref{eq:C4IP} is consistent with choosing the hermitian form on the bundle $\Lambda^pM$ as $ \IP{w,v}=(-1)^p\star^{-1}(\overline{w}\wedge \star v)$.} It is positive definite on $\{v\in\CC^4: \IP{v,k}=0\}$ for any timelike $k\in\RR^4$.

The representations $\pi_W$ and $\pi_{\Phi}$ act on the smeared field operators generating the algebras of observables $\Wf(\Mb,0)$ and $\Cf(\Mb,0)$ by
\begin{align}
    \pi_W (\Wc(f))&=a(\Kc\overline{f})+b^*(\Kc f)\, ,\quad f\in \Gamma_0^\infty(\Lambda^1M)\\
    \pi_\Phi (\Phi(h))&=a_\Phi(\Kc_\Phi\overline{ h})+b_\Phi^*(\Kc_\Phi h) \, , \quad h\in \Gamma_0^\infty(\Lambda^0M)=C_0^\infty(M)\, ,
\end{align}
where 
\begin{align}
a(u)& =c_{\fh_W}(u)\otimes\1_{\Bc(\Fc_s(\fh_W))},  \qquad  
b(u) =\1_{\Bc(\Fc_s(\fh_W))}\otimes c_{\fh_W}(u),\\
 a_\Phi(v) &=c_{\fh_\Phi}(v)\otimes\1_{\Bc(\Fc_s(\fh_\Phi))} \qquad  
b_\Phi(v) =\1_{\Bc(\Fc_s(\fh_\Phi))}\otimes c_{\fh_\Phi}(v), 
\end{align} 
for $u\in\fh_W$, $v\in\fh_\Phi$,
and
\begin{align}
    (\Kc f)_\mu(\cv{k})&=(2\omega(\cv{k}))^{-1/2}\Pi^{\phantom{\mu}\nu}_\mu\widehat{f}_\nu(k)\vert_{k_0=\omega(\cv{k})}\, ,\\
    \Pi^{\phantom{\mu}\nu}_\mu(k)&=\delta_\mu^{\phantom{\mu}\nu}-m^{-2}k_\mu k^\nu\, ,\\
    (\Kc_\Phi h)(\cv{k})&=(2\omega_\Phi(\cv{k}))^{-1/2}\widehat{h}(k)\vert_{k_0=\omega_\Phi(\cv{k})}\, ,
    \end{align}
with $\omega_\Phi(\cv{k})=\sqrt{\fm^2+\abs{\cv{k}}^2}$.
Here, the hat denotes the Fourier transform in $\RR^4$, defined using the nonstandard convention
\begin{align}
    \widehat{h}(k)=\int\limits_{\RR^4} \dd^4 x\, e^{\ii k\cdot x} h(x)\, ,
\end{align}
 where $k$ is a covector, and the same definition is used for elements of $\Gamma_0^\infty(\Lambda^1 M)$ by applying it component-wise in Cartesian coordinates.

The only non-trivial commutation relations are
\begin{align}
    \left[a(F), a^*(F')\right]&=\left[b(F), b^*(F')\right]=\IP{F, F'}_{\fh_W}\1_{\Bc(\Hc_W)}\, , \quad F,\, F'\in \fh_W\\
    \left[a_\Phi(F), a_\Phi^*(F')\right]&=\left[b_\Phi(F), b_\Phi^*(F')\right]=\IP{F,F'}_{\fh_\Phi}\1_{\Bc(\Hc_\Phi)}\, ,\quad F,\, F'\in \fh_\Phi
\end{align}

 Finally, the Minkowski vacuum states for the Proca and scalar theories are given by
\begin{align}
   \Omega_W = \Omega_{\fh_W}\otimes \Omega_{\fh_W}\in \Hc_W
\,,\qquad 
   \Omega_\Phi  = \Omega_{\fh_\Phi}\otimes \Omega_{\fh_\Phi}\in \Hc_\Phi.
\end{align}
Let $S\in \fh_W$ be chosen such that $\norm{S}_{\fh_W}=1$. Then an $n$-particle single-mode state for the Proca field is given by 
\begin{align}
\label{eq:one particle state}
   \omega_{S,n}(A)=\frac{1}{n!}\IP{(a^*(S))^n \Omega_W, \pi_W(A) (a^*(S))^n\Omega_W}_{\Hc_W}\, .
\end{align}

To model a mode in terms of a wave packet with sharply-peaked momentum and purely transversal polarization in the laboratory frame, we start by defining $\tilde{S}_\mu\in \fh_W$ with the form 
\begin{align}
\tilde{S}_\mu(\cv{k})=s(\cv{k}) \epsilon_\mu(\cv{k}).
\end{align}
 Here $s\in C^\infty(\RR^3;\CC)$  is supported in a small neighbourhood of some $\cv{k_0}\neq 0$, and satisfies 
\begin{equation}
\norm{s}_{L^2(\RR^3, (2\pi)^{-3} \dd^3\cv{k})}=1\, ,
\end{equation}
while $\epsilon_\mu(\cv{k})$ is a real covector field satisfying 
\begin{equation}\label{eq:k_perp_epsilon}
    \IP{\epsilon(\cv{k}),k}=0\, 
\end{equation}
on $\supp(s)$. Typically one chooses $\epsilon(\cv{k})$ to be (at least) approximately normalised $\IP{\epsilon(\cv{k}), \epsilon(\cv{k})}\approx 1$ on $\supp(s)$.

To specify $\epsilon$ further, one can choose Cartesian coordinates so that $\cv{k_0}=(0,0,k_0)$. Then in a small neighbourhood thereof, we can parametrize $k_\mu$ as 
\begin{align}
\label{eq:k param}
    k_\mu=(\omega(k), k\sin\alpha\cos\beta, k\sin\alpha\sin\beta, k\cos\alpha)\, ,
\end{align}
where $k=\abs{\cv{k}}$, $\alpha$ is the angle between $\cv{k}$ and the $z$-axis, and $\beta$ is the angle in the $xy$-plane (see Fig.~\ref{fig:coordinates}). The assumption that $s(\cv{k})$ is supported in a small neighbourhood of $\cv{k_0}$ can then be translated to the assumption that $s(k,\alpha,\beta)$ is supported in $I_k\times[0,\alpha_{\rm max})\times[0,2\pi)$, where $I_k\subset \RR$ is a small interval containing $k_0$, and $\alpha_{\rm max}<\tfrac{\pi}{2}$.
\begin{figure}
    \centering
    \begin{tikzpicture}[scale=1]
\draw(0,0) circle(3pt) node[above]{z};
\draw[fill=black] (0,0) circle(0.5pt);
\draw[->](0.1,0)--(1.45,0) node[midway, below]{$k\sin\alpha$};
\draw[->](0,1.5) arc(90: 440:1.5) node[midway, below]{$\beta$};
\end{tikzpicture}
    \caption{The parametrization of $k_\mu$ viewed from above the $xy$-plane}
    \label{fig:coordinates}
\end{figure}
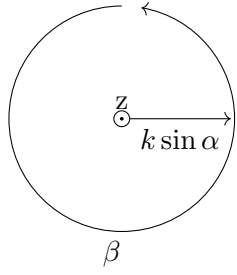
The polarization covector representing linear polarization in the $x$-direction, as measured in the laboratory frame, should satisfy~\eqref{eq:k_perp_epsilon} and have vanishing $t$ and $y$ components, and is therefore uniquely fixed up to scale by
\begin{align}
\label{eq:Proca x-polarization}
     \epsilon_\mu^x(\cv{k})=\left(0, 1, 0, -\tan\alpha \cos\beta\right);
\end{align}
similarly, the polarization covector representing linear polarization in the $y$-direction is 
\begin{align}
\label{eq:Proca y-polarization}
    \epsilon_\mu^y(\cv{k})=\left(0,0, 1, -\tan\alpha \sin\beta\right).
\end{align} 

These two covectors satisfy
\begin{align}
\label{eq:non-ONB}
    \IP{\epsilon^x, \epsilon^x}&=1+\tan^2\alpha\cos^2\beta=1+\mathcal{O}(\alpha^2)\, ,\nonumber\\
    \IP{\epsilon^y, \epsilon^y}&=1+\tan^2\alpha\sin^2\beta=1+\mathcal{O}(\alpha^2)\, ,\\
    \IP{\epsilon^x,\epsilon^y}&=\tan^2\alpha \cos\beta\sin\beta=\mathcal{O}(\alpha^2)\, ,\nonumber
\end{align}
hence they are orthogonal and normalized on the axis $\alpha=0$, and they form a basis of the two-dimensional space of transversal polarization in the laboratory frame in a small neighbourhood of $\cv{k_0}$.
We then choose
\begin{equation}
\label{eq:pol covector}
    \epsilon(\cv{k}) = \sigma_x \epsilon^x(\cv{k}) + \sigma_y \epsilon^y(\cv{k})
\end{equation}
where $\sigma=(\sigma_x,\sigma_y)\in \CC^2$ is now a `polarization qubit' state which we assume to satisfy $\abs{\sigma_{x}}^2+\abs{\sigma_{y}}^2=1$. 

It follows from \eqref{eq:non-ONB} that $\tilde{S}(\cv{k})=s(\cv{k})\epsilon(\cv{k})$ is not exactly normalized to one, but instead satisfies
\begin{equation}
    \norm{\tilde{S}}_{\fh_W}^2=\int\limits_{\RR^3}\frac{ \dd^3\cv{k}}{(2\pi)^3} \abs{s(\cv{k})}^2 \left(\abs{\sigma_x}^2+\abs{\sigma_y}^2+\tan^2\alpha\abs{\sigma_x\cos\beta+\sigma_y\sin\beta}^2\right)\, .
\end{equation}
One can use the Cauchy-Schwarz inequality to estimate 
\begin{equation}
  \abs{\sigma_x\cos\beta+\sigma_y\sin\beta}^2 \leq \abs{\sigma_x}^2+\abs{\sigma_y}^2 \, , 
\end{equation}
and combine this result with the properties of $\sigma$ and $s$ to control $\norm{\tilde{S}}_{\fh_W}$ by
\begin{equation}
\label{eq:bound norm correction}
    1\leq \norm{\tilde{S}}_{\fh_W}^2\leq 1+\tan^2 \alpha_{\rm max}= \sec^2  \alpha_{\rm max}\, .
\end{equation}
The state describing $n$ Proca particles with sharply peaked momentum carrying the polarization qubit $\sigma$ is then given by the state $\omega_{S,n}$ as defined in \eqref{eq:one particle state}, with $S=\norm{\tilde{S}}^{-1}_{\fh_W}\tilde{S}$.

\subsection{Malus' law} 

Malus' law describes the change in intensity of a polarized light beam after it passes through a linear polarizer  as a function of the orientation of the polarizer. 
Specifically, if the linear polarizer is oriented at an angle $\theta$ to the polarization of the beam then the transmitted intensity is equal to the
intensity of the incident beam multiplied by $\cos^2\theta$. We therefore expect the same form of modulation in the measurement of linearly polarized $n$-particle states with a polarization-sensitive detector, i.e., a polarization measurement.

In our model of a polarization measurement in quantum field theory, we replace the electromagnetic field by a complex Proca field. As a probe theory, we select a charged scalar field, linearly coupled to the Proca field as described in Section~\ref{sec:Proca--scalar}. The interaction region is determined by the support of the coupling $1$-form $v$. We assume that in an appropriate inertial Cartesian coordinate system, which we will call the laboratory frame, the coupling 1-form $v$ takes the form $v=\lambda \rho u_\theta$, where 
\begin{equation}
    u_\theta=(0, \cos\theta, \sin\theta, 0)\, 
\end{equation}
is constant and transversal to the $z$-axis in the laboratory frame and $\lambda\in\RR$ is the coupling constant. 
 Meanwhile, $\rho\in C_0^\infty(M;[0,1])$ is a test-function which is equal to one in the centre of the interaction region and vanishes outside of it.

This setup is motivated by wire-grid polarizers, a type of linear polarizer consisting of a number of parallel, thin wires. The free electrons in the wires are effectively restricted to move along the wires, allowing the polarizer to absorb or reflect, and hence effectively block, the component of light polarized parallel, but not transversal, to the wires \cite[Chapter 8.3.1]{Hecht}.

We assume that the Proca field is prepared in a state $\omega_{S,n}$ as in \eqref{eq:one particle state} that represents $n$ Proca particles in the same transversally polarized mode in the laboratory frame, so the particles travel approximately in the $z$-direction. The scalar field is prepared in its Minkowski vacuum state. The probe observable that will be measured is 
\begin{equation}
    B_f=\Phi(f)^*\Phi(f)-b_f\1_\Cf\, ,
\end{equation}
where $f\in C_0^\infty(M)$ is a test function supported in $M^+$, and $b_f\in \RR$ is a constant.
The observable $\Phi(f)^*\Phi(f)$ is chosen here as a proxy for a smearing of the renormalized Wick square, or vacuum polarization, ${:}\Phi^2{:}$. Much as in a classic paper of Fredenhagen \& Haag~\cite{FredenhagenHaag:1990}, this is done for simplicity, since it avoids the issue of renormalization.
The constant $b_f$ is a `calibration constant' that will be used to eliminate noise terms which are independent of the function $S\in L^2(\RR^3)\otimes \CC^4$ and particle number $n$ defining the state $\omega_{S,n}$ of the system, but  which can depend on the choice of $f$, i.e., the probe observable, and $\theta$, $\rho$, and $\lambda$, i.e., the detector setup.

Hence, the induced system observable is
\begin{align}
    \Ec_{\Omega_\Phi}( B_f)=\eta_{\Omega_\Phi}(\Theta(\1_\Wf\otimes B_f))\, ,
\end{align}
in terms of the scattering map $\Theta$, whose action we now determine. Since the theories are linearly coupled, one has
\begin{equation}
    \Theta (\1_\Wf\otimes \Phi(f)) = \Wc(h^-)\otimes \1_{\Cf}+ \1_{\Wf}\otimes \Phi(f^-),
\end{equation}
where 
\begin{align}
    \begin{pmatrix}  h^-\\ f^- \end{pmatrix}&=\begin{pmatrix} 0 \\ f
    \end{pmatrix}-\lambda \rho\begin{pmatrix}
    0 &  u_\theta \\ -u_\theta^\bullet & 0  \end{pmatrix} E^-_P\begin{pmatrix}
        0 \\ f
    \end{pmatrix}=\begin{pmatrix}
        -\lambda \rho u_\theta E^-_{K^{(0)}_\fm}(\id-\lambda^2\rho u_\theta^\bullet E^-_{WW}\rho u_\theta E^-_{K^{(0)}_\fm}) f\\
        (\id-\lambda^2 \rho u_\theta^\bullet E^-_{WW}\rho u_\theta E^-_{K^{(0)}_\fm}) f
    \end{pmatrix}\, 
\end{align}
by the result in \eqref{eq:scalar-Proca Green op}.
 As $\Theta$ is a homomorphism, it follows that
\begin{align}
    \Theta (\1_\Wf\otimes B_f)=& \Wc(h^-)^*\Wc(h^-)\otimes \1_\Cf + \Wc(h^-)^*\otimes \Phi(f^-)\\\nonumber
    &+\Wc(h^-)\otimes \Phi(f^-)^*+ \1_\Wf\otimes \Phi(f^-)^*\Phi(f^-)-b_f\1_\Wf\otimes \1_\Cf\, .
\end{align}
 
Taking into account that $\Omega_\Phi$ has vanishing one-point functions, we obtain
\begin{align}
    \mathcal{E}_{\Omega_\Phi}(B_f)=\Wc(h^-)^*\Wc(h^-)+\left[\Omega_\Phi(\Phi(f^-)^*\Phi(f^-))-b_f\right]\1_\Wf\, .
\end{align}
Thus 
 \begin{align}
     \omega_{S,n}( \mathcal{E}_{\Omega_\Phi}(B_f)) 
     &= \Omega_\Phi( \Phi(f^-)^*\Phi(f^-))-b_f +\frac{1}{n!}\IP{a^*(S)^n\Omega_W,\pi_W(\Wc(h^-)^*\Wc(h^-))a^*(S)^n\Omega_W}_{\Hc_W}\, .
 \end{align} 

Since the first two terms are independent of the state of the Proca field, our focus will be on the last term. Using the representation of the smeared field operators, one obtains
\begin{align}
&\frac{1}{n!}\IP{a^*(S)^n\Omega_W,\pi_W(\Wc(h^-)^*\Wc(h^-))a^*(S)^n\Omega_W}_{\Hc_W}\\\nonumber
&\qquad=\frac{1}{n!}\IP{\Omega_W, a(S)^n\left(a^*(K\overline{h^-})+b(Kh^-)\right)\left( a(K\overline{h^-})+b^*(Kh^-)\right)a^*(S)^n\Omega_W}_{\Hc_W} \, .
\end{align}
The only terms giving non-zero contributions are
\begin{align}
\label{eq:term 1}
    \frac{1}{n!} \IP{\Omega_W, a(S)^nb(Kh^-)b^*(Kh^-)a^*(S)^n\Omega_W}_{\Hc_W} 
\end{align}
and
\begin{align}
\label{eq:term 2}
    \frac{1}{n!} \IP{\Omega_W, a(S)^na^*(K\overline{h^-})a(K\overline{h^-})a^*(S)^n\Omega_W}_{\Hc_W} \, .
\end{align}
Let us first consider \eqref{eq:term 1}. Making use of the commutation relations, one finds by induction that
\begin{align}
    \frac{1}{n!}\IP{\Omega_W, a(S)^nb(Kh^-)b^*(Kh^-)a^*(S)^n\Omega_W}_{\Hc_W}=\norm{S}_{\fh_W}^{2n}\norm{Kh^-}_{\fh_W}^2=\norm{Kh^-}_{\fh_W}^2
\end{align}
is independent of $S$ and $n$. 

For \eqref{eq:term 2}, the commutation relations lead to the result
\begin{align}
   \frac{1}{n!}\IP{\Omega_W, a(S)^na^*(K\overline{h^-})a(K\overline{h^-})a^*(S)^n\Omega_W}_{\Hc_W}&=n\abs{\IP{K\overline{h^-},S}_{\fh_W}}^2\norm{S}_{\fh_W}^{2(n-1)}\, .
\end{align}

Altogether, recalling the form of $v$ and $E^-_P$ given in \eqref{eq:scalar-Proca Green op}, we find
\begin{align}
\label{eq:determine bf}
     \omega_{S,n}(\mathcal{E}_{\Omega_0}(B_f))&= n\lambda^2\abs{\IP{K\overline{\rho u_\theta E^-_{K_\fm^{(0)}}(\id-\lambda^2 \rho u_\theta^\bullet E^-_{WW}\rho u_\theta E^-_{K^{(0)}_\fm})f}, S}_{\fh_W}}^2+ \norm{Kh^-}_{\fh_W}^2\\\nonumber
    & +\norm{K_\Phi f^-}_{\fh_\Phi}^2
    -b_f\, .
\end{align}

We observe that only the first term on the right-hand side of \eqref{eq:determine bf} depends on $S$ and $n$. Therefore, by setting
\begin{equation}
    b_f=\norm{K h^-}^2_{\fh_W}+\norm{K_\Phi f^-}^2_{\fh_\Phi}
\end{equation}
we can eliminate the terms independent of the polarization and particle number. Note that, as expected, $b_f$  does indeed depend on the choice of observable and detector setup through $f$ and $v=\lambda\rho u_\theta$.  This choice gives the exact result
\begin{equation} 
\label{eq:exp value}
 \omega_{S,n}(\mathcal{E}_{\Omega_0}(B_f))= n\lambda^2\abs{\IP{K\overline{\rho u_\theta E^-_{K_\fm^{(0)}}(\id-\lambda^2 \rho u_\theta^\bullet E^-_{WW} \rho u_\theta E^-_{K^{(0)}_\fm})f}, S}_{\fh_W}}^2 . 
\end{equation}
Now, $\Kc \overline{\rho  u_\theta E^-_{K_\fm^{(0)}} \rho u_\theta^\bullet E^-_{WW}\rho u_\theta E^-_{K^{(0)}_\fm}f}$  converges to a  limit in $\fh_W$ as $\lambda\to 0$, as shown in Appendix~\ref{app:convergence}. Thus
\begin{equation}
\omega_{S,n}(\Ec_{\Omega_0}(B_k)) = R(n,S,\rho, u_\theta,f, \lambda) + O(\lambda^{4})
\end{equation} 
where
\begin{align}
    R(n,S,\rho, u_\theta,f, \lambda):=n\lambda^2\abs{\IP{\Kc \overline{ \rho u_\theta E^-_{K_\fm^{(0)}}f}, S}_{\fh_W}}^2\, .
\end{align} 

Recall that $S$ is of the form  $S=\norm{s\epsilon}_{\fh_W}^{-1}s\epsilon$, where $\epsilon$ is given by \eqref{eq:pol covector} in the detector frame and determined by a choice of normalized qubit state $\sigma\in \CC^2$.
 Then we find
\begin{align}
    -\eta^{-1}(\Pi(\cv{k})u_\theta, \epsilon(\cv{k}))=-\eta^{-1}(u_\theta,\epsilon(\cv{k})) =\sigma_x\cos\theta+\sigma_y\sin\theta\, .
    \end{align}
    This allows us to write
\begin{align}
\label{eq:Malus law}
R(n,S,\rho, u_\theta,f, \lambda)
    =&n\lambda^2\abs{\sigma_x\cos\theta+\sigma_y\sin\theta}^2\norm{s\epsilon}^{-2}_{\fh_W}\abs{A(s,\rho, f)}^2\, ,
\end{align}
where we have defined
\begin{align}
\label{eq:form factor}
    A(s,\rho, f):&=\int_{\RR^3}\frac{ \dd^3\cv{k}}{(2\pi)^3\sqrt{2\omega(\cv{k})}}\widehat{(\rho E^-_{K_\fm^{(0)}}f)}(-\omega(\cv{k}), -\cv{k})s(\cv{k})\,  .
\end{align} 
We see that the particle number $n$ acts like a luminosity, contributing an overall multiplicative factor.
Note further that $A(s, \rho,f)$ is independent of $\epsilon$, $u_\theta$, and $n$. In fact, if one replaces the Proca field by a second complex scalar field $\Psi$ of mass $m$, so that the two fields are coupled by the bilinear coupling term $2\lambda\Re(\bar{\Psi}\Phi)$, and considers the same induced observable in the scalar version of $\omega_{S,n}$, one obtains exactly $\lambda^2 n\abs{A(s,\rho,f)}^2$ as the expectation value at leading order. See Appendix~\ref{sec:scalar case} for more details. 

To illustrate the physical meaning of $A(s,\rho,f)$, we consider displacing the detector and observable by replacing $(\rho E^-_{K^{(0)}_\fm} f)(x)$ by $ (\rho E^-_{K^{(0)}_\fm} f)(x-\delta e)$, where $0\neq e=(e^0, \Vec{e})\in \RR^4$ is a tangent vector chosen such that $\eta(e,e)\in \{\pm 1, 0\}$, and $\delta>0$ is a constant. In what follows we write $3$-vectors as $\Vec{e}$ and $3$-covectors as $\cv{k}$ and so on,  using minus the (inverse) Euclidean metric to lower (raise) indices.
Under this displacement, we find that 
\begin{align}
    A(s,\rho,f)\to A_\delta(s,\rho,f):=\int_{\RR^3}\frac{ \dd^3\cv{k}}{(2\pi)^3\sqrt{2\omega(\cv{k})}}e^{-\ii \delta k\cdot e}\widehat{(\rho E^-_{K_\fm^{(0)}}f)}(-\omega(\cv{k}), -\cv{k})s(\cv{k})\,  .
\end{align} 
We note that the integrand is a smooth, compactly supported function on $\RR^3$, and that $\phi_e(\cv{k})=-(\omega(\cv{k})e^0+\cv{k} \Vec{e})$ is a real-valued, smooth function. Moreover, $\phi_e(\cv{k})$ satisfies
\begin{align}
   \dd\phi_e(\cv{k})=\frac{e^0 \Vec{k}}{\omega(\cv{k})}-\Vec{e}\\
    \det(H_{\phi_e}(\cv{k}))=-\frac{(e^0)^3m^2}{\omega(\cv{k})^5}\, ,
\end{align}
where $H_{ \phi_e}(x)$ is the Hessian matrix of second derivatives. We can see that $\dd\phi_e$ can only vanish if $e^0\neq 0$, in which case the determinant of the Hessian is non-zero. To be more precise, $\dd\phi_e(\cv{k})=0$ iff  $e^0\neq 0$ and $\Vec{k}/\omega(\cv{k})=\Vec{e}/e^0$,  which in particular requires $\Vec{k}$ and $\Vec{e}$ to be collinear  and
consequently $e = \pm m^{-1}k^\sharp$, so in particular $e$ is timelike.

We consider now the behaviour of $A_\delta(s,\rho,f)$ as $\delta \to \infty$. First, assume that $e$ is such that $\dd\phi_e\neq 0$ on $\supp s$.
Then, by \cite[Theorem 7.7.1]{Hoermander1} for any $N\in \NN$ there is a constant $C_N$ so that
\begin{equation}
    \abs{A_\delta(s,\rho,f)}\leq C_N \delta^{-N}\, 
\end{equation}
as $\delta\to \infty$.
Next, if  $e=\pm m^{-1}k_0^\sharp$ for some $\cv{k_0}\in \supp s$, then by \cite[Theorem 7.7.5]{Hoermander1}, one has
\begin{equation}
   A_\delta(s, \rho,f)= e^{\mp \ii(\delta m -\pi/4)}\frac{m^2\sqrt{\omega(\cv{k_0})}}{4(\pi\delta m)^{3/2}}  s(\cv{k_0})(\widehat{\rho E^-_{K^{(0)}_\fm}f})(-k_0)
  +\mathcal{O}((\delta m)^{-5/2})\, .
\end{equation}

This result can be understood as follows: If we move the detector and observable along the direction in which the particles travel, we obtain the loss expected from the spreading of the particle beam due to the imperfect collimation. In contrast, if we displace the detector and observable in any other direction, $\abs{A_\delta(s,\rho, f)}$ decays faster than any polynomial. In this sense, $A(s,\rho,f)$ can be understood as a measure of the extent to which the particle beam meets the detector.

 Even though the \emph{collimation factor} $\|s\epsilon\|_{\fh_W}$ appearing in $R(n,S,\rho,u_\theta,f, \lambda)$ depends on the choice of $\sigma$, it obeys uniform bounds in $\sigma$, see \eqref{eq:bound norm correction}. Moreover, these bounds can be tightened by concentrating the support of $s$ around the $z$-axis. Its deviation from unity can be regarded as a small correction that appears due to the non-orthonormality of the basis $(\epsilon^x,\epsilon^y)$ for momenta $\cv{k}$ lying off the $z$-axis.

Apart from this small correction, the dependence of $R(n,S,\rho,u_\theta,f,\lambda)$ on the polarization $\epsilon$ of the Proca particle and $u_\theta$ of the coupling covector is entirely contained in the first factor in \eqref{eq:Malus law}, which is the quantum mechanical probability that the qubit state $\sigma$ passes a test for polarization direction $(\cos\theta,\sin\theta)$.

 If $\sigma\in \RR^2$, it is of the form $\sigma=(\cos\eta, \sin\eta)$. In this case the state $\omega_{S,n}$ describes  linearly polarized Proca particles and one has
\begin{equation}
    \sigma_x\cos\theta+\sigma_y\sin\theta=\cos\eta\cos\theta+\sin\eta\sin\theta=\cos(\theta-\eta)\, .
\end{equation}
We therefore find that in a linearly polarized $n$-particle state of the Proca field the expectation value of the induced observable $\Ec_{\Omega_\Phi}( B_f)$ depends on the cosine squared of the angle $\theta-\eta$ between the polarization covector $\epsilon$ projected to the $xy$-plane of the laboratory frame and the coupling covector $u_\theta$.
This is the direct analogue of Malus' law for the intensity of a polarized light beam going through a linear polarizer.

 Another illuminating example is to choose $\sigma=2^{-1/2}(1,\pm i)$, corresponding to circular or anticircular polarization of the Proca particles. In this case, one quickly obtains 
\begin{equation}
    \abs{\sigma_x\cos\theta+\sigma_y\sin\theta}^2=\frac{1}{2}\, ,
\end{equation}
as one would have expected \cite{Hecht}.

We conclude that our measurement scheme provides a model for a polarization-sensitive detector. 

\section{Conclusion}
\label{sec:Conc}
In this work, we have developed a method of auxiliary fields, which provides practical criteria for testing the (semi)-Green-hyperbolicity and the decomposability of differential operators. An application of this method revealed that the equations of motion for  (a) the charged Proca in the presence of a background electromagnetic field and with arbitrary anomalous magnetic moment,  (b) a Proca multiplet with variable mass matrix, and  (c) a Proca field linearly coupled to a scalar field are all governed by Green-hyperbolic, $\Nc^\pm$-decomposable differential operators. 
We have also discussed how theories based on Green-hyperbolic operators can be quantized, even in the absence of the mild additional assumptions that are usually made.

Making use of these results, we demonstrated that the coupled Proca--scalar system can be used to model a polarization-sensitive detector. To be precise, we considered the scalar field as a probe for the Proca field, and observed that the leading-order response in the measurement of polarized $n$-particle states of the Proca field is in accordance with Malus' law up to a collimation factor.
This computation serves as a proof of concept, that can be extended in various ways, for example by considering charged Proca in background electromagnetic fields to model other ways of manipulating the field, or considering configurations of multiple detectors.

\section*{Acknowledgments}
CJF's work is partly supported by EPSRC Grant EP/Y000099/1 to the University of York, and additional support from the Institut Henri Poincaré (UAR 839 CNRS-Sorbonne Université), and LabEx CARMIN (ANR-10-LABX-59-01) is gratefully acknowledged. CK is funded by the Deutsche Forschungsgemeinschaft (DFG, German
Research Foundation) – Projektnummer 531357976. It is a pleasure to thank Eli Hawkins, Onirban Islam, Daan Janssen, Benito Ju\'arez Aubry, Igor Khavkine, Valter Moretti, Kasia Rejzner, Alexander Schenkel, Alexander Strohmaier, Rainer Verch and Berend Visser for insightful comments at various stages of this work.

For the purpose of open access, the authors have applied a creative commons attribution (CC BY) licence to any author accepted manuscript version arising. 

\appendix
\section{Derivation of \eqref{eq:scalar-Proca Green op}}
\label{sec:App A}
In this appendix, we will give some details on the derivation of \eqref{eq:scalar-Proca Green op}. 

We start by considering the retarded and advanced Green operators for the operator $T$ given in \eqref{eq:sP-norm hyp operator}. Decomposing $\Gamma^\infty(\Lambda^1M\oplus B)\cong \Gamma^\infty(\Lambda^1M) \oplus \Gamma^\infty(\Lambda^1M)\oplus \Gamma^\infty(\Lambda^0M)$, we can write
\begin{align}
    E^\pm_{T}&=\begin{pmatrix}
        \tilde{E}_{VV}^\pm  & \tilde{E}_{VW}^\pm  & \tilde{E}_{ V\phi}^\pm \\
        \tilde{E}_{WV}^\pm & \tilde{E}_{WW}^\pm & \tilde{E}_{ W\phi}^\pm\\
        \tilde{E}_{\phi V}^\pm & \tilde{E}_{\phi W}^\pm & \tilde{E}_{\phi \phi}^\pm
    \end{pmatrix}\, .
\end{align}
 Plugging this decomposition into \eqref{eq:sP-Q_Green op} and in turn into \eqref{eq:sP- Green op formula}, we find that the advanced and retarded Green operators for the coupled Proca--scalar operator $P$ \eqref{eq:P real Proca scalar} can be written as
 \begin{align}
 \label{eq:sP-Green op decomp}
     E^\pm_P=\begin{pmatrix}
        \tilde{E}^\pm_{WW}(1-m^{-2}\dd\delta) & \tilde{E}_{W\phi }^\pm+\tilde{E}^\pm_{WV} \dd\\
        \tilde{E}^\pm_{\phi W}(1-m^{-2}\dd\delta) & \tilde{E}_{\phi \phi}^\pm+\tilde{E}^\pm_{\phi V} \dd
        \end{pmatrix}\, .
 \end{align}
 To analyse this further, we turn to $E^\pm_T$. Using the same decomposition as before, we can use the properties G1 and G2 of $E^\pm_T$ together with the explicit form of $T$ to derive the following equations:

 From G1, we obtain the following identities on compactly supported sections of $\Lambda^1M$ or $\Lambda^0M$
 \begin{align}
     v^\bullet \tilde{E}^\pm_{WV}&=K^{(0)}_{\fm}\tilde{E}^\pm_{\phi V}\, ,\\
     v^\bullet \tilde{E}^\pm_{WW}&=K^{(0)}_{\fm}\tilde{E}^\pm_{\phi W}\, ,\\
     \dd v^\bullet \tilde{E}^\pm_{WW}&=K^{(1)}_{\fm}\tilde{E}^\pm_{VW}\, ,\\
     \dd v^\bullet \tilde{E}^\pm_{W\phi}&=K^{(1)}_{\fm}\tilde{E}^\pm_{V\phi}
 \end{align}
 and acting from the left on the first two with $E^\pm_{K^{(0)}_{\fm}}$ and the second two with $E^\pm_{K^{(1)}_{\fm}}$, we can express $\tilde{E}^\pm_{\phi V}$, $\tilde{E}^\pm_{\phi W}$, $\tilde{E}^\pm_{VW}$, and $\tilde{E}^\pm_{V\phi}$ in terms of $\tilde{E}^\pm_{WW}$,
 $\tilde{E}^\pm_{WV}$, $\tilde{E}^\pm_{W\phi}$,
  $v$, differential operators, and Green operators for the Klein--Gordon operator (we also use the compact support of $v$ and the support property G3). From G2, we obtain the additional identities 
 \begin{align}
     -m^{-2}\tilde{E}^\pm_{WW}\dd v^\bullet&=\tilde{E}^\pm_{WV}K^{(1)}_{\fm}\, , \\
     -\tilde{E}^\pm_{WW}(v-m^{-2} \dd(\delta v))&=\tilde{E}^\pm_{W\phi}K^{(0)}_{\fm}
 \end{align}
 on sections with compact support.
 They can be composed with $E^\pm_{K^{(1)}_{\fm}}$ or $E^\pm_{K^{(0)}_{\fm}}$ from the right respectively, to obtain expressions for $\tilde{E}^\pm_{WV}$ and $\tilde{E}^\pm_{W\phi}$. 
Furthermore, we take from G1 the identities
\begin{align}
    K^{(1)}_\fm \tilde{E}^\pm_{VV}-\dd v^\bullet \tilde{E}^\pm_{WV}&=\id\, ,\\
    K^{(0)}_\fm \tilde{E}^\pm_{\phi\phi}-v^\bullet \tilde{E}^\pm_{W\phi}&=\id\, ,\\
    m^{-2}\dd v^\bullet \tilde{E}^\pm_{VW}+K^{(1)}_m\tilde{E}^\pm_{WW}+(v-m^{-2}\dd(\delta v))\tilde{E}^\pm_{\phi W}&=\id\, ,
\end{align}
and from G2 the identity
\begin{align}
    \tilde{E}^\pm_{WW}K^{(1)}_{m}-\tilde{E}^\pm_{WV}\dd v^\bullet-\tilde{E}^\pm_{W\phi}v^\bullet=\id\, .
\end{align}
Plugging the results for the off-diagonal components of $E^\pm_T$ into the first two and acting with $E^\pm_{K^{(1)}_\fm}$ or $E^\pm_{K^{(0)}_\fm}$ from the left, respectively, we obtain expressions for $\tilde{E}^\pm_{VV}$ and $\tilde{E}^\pm_{\phi\phi}$. The third equation from G1 and the equation for G2, combined with the previous result, gives \eqref{eq:sP-coupled Green op-WW}. 

In the last step, we use the fact that $E^\pm_{K^{(1)}} \dd=\dd E^\pm_{K^{(0)}}$ and $\dd v^\bullet \dd -\dd(\delta v)=-\dd \delta v$ as an operator acting on $0$-form fields. The same identities can be used to compute
\begin{align}
    \tilde{E}^\pm_{W\phi}+\tilde{E}^\pm_{WV}\dd&=-\tilde{E}^\pm_{WW}(\id-m^{-2}\dd \delta) v E^\pm_{K^{(0)}_\fm}\\
   \tilde{E}^\pm_{\phi\phi}+\tilde{E}^\pm_{\phi V} d&= E^\pm_{K^{(0)}_{\fm}}-E^\pm_{K^{(0)}_\fm}v^\bullet \tilde{E}^\pm_{WW}(\id-m^{-2}\dd \delta)v E^\pm_{K^{(0)}_\fm}\\
   \tilde{E}^\pm_{\phi W}(1-m^{-2}\dd \delta)&=E^\pm_{K^{(0)}_\fm}v^\bullet \tilde{E}^\pm_{WW}(\id-m^{-2}\dd\delta)\, .
\end{align}
Plugging these results into \eqref{eq:sP-Green op decomp}, one obtains \eqref{eq:scalar-Proca Green op}.

\section{Bounds for higher-order terms in Malus' law}
\label{app:convergence}
In this section, we will make precise in which sense the higher-order terms in \eqref{eq:exp value} are of order $\lambda^4$. As mentioned there, it is enough to show that
$\Kc \overline{\rho u_\theta  E^-_{K_\fm^{(0)}} \rho u_\theta^\bullet E^-_{WW}\rho u_\theta E^-_{K^{(0)}_\fm}f}$ converges in $\fh_W$ as $\lambda\to 0$. 

We note that the only $\lambda$-dependence is contained in the operator $E^-_{WW}$.  We will make use of various facts that will be proved below. First, we claim that 
\begin{equation}\label{eq:claim1}
\|\Kc u_\theta h\|_{\fh_W}\le C \| h\|_{H^3_{\tilde{K}}(M)}
\end{equation}
for all compact $\tilde{K}$ and $h\in C^\infty_{\tilde{K}}(M)$. 
Here and below, $H^s_{\tilde{K}}(M)$ is the Sobolev space of order $s$ of functions supported in $\tilde{K}$, while
$H^s_{\tilde{K}}(M,\Lambda^1M)$ is the analogous Sobolev space of $1$-forms. (Our definitions follow those in the Appendix of~\cite{Fewster:2023}.)

Second, if $\tilde{K}$ contains $\supp \rho$ then $\rho E^-_{K_\fm^{(0)}}$ defines a bounded operator between $H^2_{\tilde{K}}(M)$ and $H^3_{\tilde{K}}(M)$. 
Third, we will show that $\rho u_\theta^\bullet E^-_{WW}$ defines a bounded operator $\mathcal{B}(H^3_{\tilde K}(M;\Lambda^1M),H^{2}_{\tilde{K}}(M))$, converging to $\rho u_\theta^\bullet E^-_{P_\Mb}$ in operator norm as $\lambda\to 0$.
Here $E^-_{P_\Mb} =E^-_{K_m} (1-m^{-2}\dd\delta)$ is the advanced Green operator for the real, uncoupled Proca field. Fourth, for each $f\in C_0^\infty(M)$ one has
$\rho u_\theta E^-_{K^{(0)}_\fm}f\in \Gamma_{\tilde{K}}^\infty (\Lambda^1M) \subset H^3_{\tilde K}(M;\Lambda^1M)$.
Given these facts, it follows that 
\begin{equation}
  \Kc \overline{\rho u_\theta  E^-_{K_\fm^{(0)}} \rho u_\theta^\bullet E^-_{WW}\rho u_\theta E^-_{K^{(0)}_\fm}f} \to
  \Kc \overline{\rho u_\theta  E^-_{K_\fm^{(0)}} \rho u_\theta^\bullet E^-_{P_{\Mb}}\rho u_\theta E^-_{K^{(0)}_\fm}f}
\end{equation}
in $\fh_W$ as $\lambda\to 0$ for each fixed $f\in C_0^\infty(M)$, which is our required overall result.

It remains to establish the four facts above. Indeed, the second is a standard fact about Green operators of normally hyperbolic operators, which gain one derivative
by the results of \cite[Theorem 6.5.3]{DuiHoer_FIOii:1972} (see also \cite{IslamStrohmaier:2020} for a bundle version). Meanwhile the fourth fact is trivial so only the first and third need be shown.

Starting with the first fact, let $h\in C^\infty_{\tilde{K}}(M)$ and consider $\Kc u_\theta h$. Then
\begin{equation}
\label{eq:C1}
       \norm{\Kc u_\theta h}_{\fh_W}^2=\int\limits_{\RR^3}\frac{d^3\cv{k}}{(2\pi)^3 2\omega(\cv{k})} \phi_\theta(\cv{k})\abs{\hat{h}(\omega(\cv{k}),\cv{k})}^2\, ,
\end{equation}
 where 
 \begin{equation}
     \phi_\theta(\cv{k})=-\eta^{-1}(\Pi u_\theta, \Pi u_\theta)=1+m^{-2}\abs{\cv{k}}^2\sin^2\alpha\cos^2(\beta-\theta) \le 1+m^{-2}\abs{\cv{k}}^2 = m^{-2}\omega(\cv{k})^2
 \end{equation}
can be computed using the parametrization \eqref{eq:k param} of the 4-covector $k_\mu$.
We can estimate \eqref{eq:C1} by
\begin{align}
       \norm{\Kc u_\theta h }_{\fh_W}^2&\le \frac{1}{2m^2}\int\limits_{\RR^3}\frac{d^3\cv{k}}{(2\pi)^3 \omega(\cv{k})^5} \abs{\omega(\cv{k})^3\widehat{h}(\omega(\cv{k}),\cv{k})}^2\,\\\nonumber
       &=  \frac{1}{2m^2}\int\limits_{\RR^3}\frac{d^3\cv{k}}{(2\pi)^3 \omega(\cv{k})^5}  \abs{\widehat{\partial_t^3h}(\omega(\cv{k}),\cv{k})}^2\,\\\nonumber
       &\leq C \norm{\partial_t^3 h}_{L^1(\RR^4)}^2\, ,
\end{align}
 for some constant $C>0$, where we have used the notation $t$ for the $x^0$-coordinate. In the last step, we have used the facts that $\omega^{-5}\in L^1(\RR^3,\dd\cv{k})$ and that the $L^\infty$-norm of a function's Fourier transform is bounded by its $L^1$-norm. Note that one could make do with a smaller number of derivatives, but this result is sufficient for our purposes.
 Finally, let $\chi\in L^2(\RR^4)$ so that $\chi(x)=1$ on $\tilde{K}$. Then we have 
 \begin{align}
      \norm{\partial_t^3h}_{L^1(\RR^4)}&=\norm{\chi \partial_t^3 h}_{L^1(\RR^4)}\leq \norm{\partial_t^3 h}_{L^2(\RR^4)}\norm{\chi}_{L^2(\RR^4)} 
      \leq C'   \norm{h}_{H^3_{\tilde{K}}(M)} , 
 \end{align} 
and combining with the previous estimate the first fact is established.

Lastly, recall that $E^\pm_{WW}=\tilde{E}^-_{WW}(1-m^{-2}\dd\delta)$ by \eqref{eq:E^pm_WW}, so the $\lambda$-dependence of $E^\pm_{WW}$ is contained in $\tilde{E}^-_{WW}$.
The operators $\tilde{E}^\pm_{WW}$ satisfy the equations \eqref{eq:sP-coupled Green op-WW}, i.e., they invert the non-local operator $O^\pm=K_m^{(1)}+\lambda^2(\id-m^{-2}\dd \delta )v_\theta E^\pm_{K^{(0)}_\fm}v_\theta^\bullet$ where $v_\theta=\rho u_\theta\in \Gamma_0^\infty(\Lambda^1M)$ is a fixed, real one-form.  Acting on the equation $\tilde{E}^\pm_{WW}O^\pm=\id$ with $E^\pm_{K^{(1)}_m}$ from the right leads to the equation
\begin{equation}
\label{eq:C2}
        \tilde{E}^\pm_{WW}(\id+\lambda^2 (\id-m^{-2}\dd\delta)v_\theta E^\pm_{K_\fm} v_\theta ^\bullet E^\pm_{K_m})=E^\pm_{K_m}\, ,
\end{equation}
which holds on $\Gamma_0^\infty(\Lambda^1M)$. 

Now the operator $(\id-m^{-2}\dd\delta)v_\theta E^\pm_{K_\fm^{(0)}} v_\theta^\bullet E^\pm_{K_m^{(1)}}$ extends to a bounded operator on the Hilbert space $H^s_{\tilde K} (M;\Lambda^1M)$ for any $s\in \RR$ and any compact set $\tilde K\subset M$ that contains $\supp \rho$. Consequently, the operator 
\begin{align}
    (\id+\lambda^2 (\id-m^{-2}\dd\delta)v_\theta E^\pm_{K_\fm} v_\theta^\bullet E^\pm_{K_m})^{-1}: H^s_{\tilde K}(M;\Lambda^1M)\to H^s_{\tilde K}(M;\Lambda^1M)
\end{align}
is the resolvent of a bounded operator, which is well-defined and bounded if $|\lambda|<\lambda_0$ for some $\lambda_0>0$ which may depend on $s$; furthermore the resolvent is holomorphic in $\lambda$ within this disk.

From now on, let us fix some arbitrary $s\in \RR$ and let us assume that $0<\lambda<\lambda_0$, so that the resolvent is a bounded operator on $H^s_{\tilde K}(M;\Lambda^1M)$. Then acting on \eqref{eq:C2} with the resolvent from the right, and by contraction with  the compactly supported 1-form $v_\theta$ from the left,
one obtains the formula 
 \begin{align}
      v_\theta^\bullet \tilde{E}^\pm_{WW}= v_\theta^\bullet E^\pm_{K_m} (\id+\lambda^2 (\id-m^{-2}\dd\delta)v_\theta E^\pm_{K_\fm} v_\theta^\bullet E^\pm_{K_m})^{-1}\, 
    \end{align}
 on $\Gamma^\infty_{\tilde{K}}(\Lambda^1M)\subset H^s_{\tilde K}(M;\Lambda^1M)$.
 By the preceding discussion, we can thus see that $v_\theta^\bullet \tilde{E}^\pm_{WW}$ extends to a bounded operator from $H^s_{\tilde K}(M;\Lambda^1M)$ to $H^{s+1}_{\tilde{K}}(M)$, and that the $\lambda$-dependence of this operator is determined by the $\lambda$-dependence of the resolvent. Consequently, as $\lambda\to 0$, $v_\theta^\bullet \tilde{E}^\pm_{WW}$ converges in norm to $v_\theta^\bullet E^\pm_{K^{(1)}_m} \in \mathcal{B}(H^s_{\tilde K}(M;\Lambda^1M),H^{s+1}_{\tilde{K}}(M))$.
 Recalling \eqref{eq:E^pm_WW}, we notice that no additional $\lambda$-dependence is introduced in going from $\tilde{E}^\pm_{WW}$ to $E^\pm_{WW} =\tilde{E}^\pm_{WW}(1-m^{-2}\dd\delta)$, but that the transition loses up to two derivatives. We conclude that,  for any $s\in\RR$, $v_\theta^\bullet E^-_{WW}:\,   H^s_{\tilde K}(M;\Lambda^1M)\to H^{s-1}_{\tilde{K}}(M)$ is a bounded operator, which converges in norm to $v_\theta^\bullet E^-_{P_\Mb}$ in $\mathcal{B}(H^s_{\tilde K}(M;\Lambda^1M),H^{s-1}_{\tilde{K}}(M))$. While one can make stronger statements concerning differentiability in $\lambda$, the convergence result is all that is needed for our purposes. The third fact above is the special case $s=3$.

\section{Two coupled scalars}
\label{sec:scalar case}
 In this section, we will  show that the factor $A(s,\rho, f)$  also arises in the response of our probe field  when it is linearly coupled to a scalar field. 

 For this, we consider as our system theory a complex scalar of mass $m$ (i.e., the  same mass as the Proca field in Section~\ref{sec:Malus}). The uncoupled theory is then described by $K_m\oplus K_\fm$ (we drop the $(0)$-superscript on the Klein-Gordon operator since we will only deal with scalar functions in this section), and we assume that the differential operator describing the coupled theory takes the form 
 \begin{equation}
     P=\begin{pmatrix}
         K_m & \lambda\rho \\ \lambda\rho & K_\fm
     \end{pmatrix}\, ,
 \end{equation}
 where $\lambda\in\RR$ and $\rho\in C^\infty_0(M;[0,1])$ are the same coupling constant and scalar test function as considered for the Proca-scalar system. 
 The probe observable we consider for this system is 
 \begin{align}
     O_f=\Phi(f)^*\Phi(f)-c_f\1_\Cf\, ,
 \end{align}
where $f\in C_0^\infty(M^+)$ is the same test function as in the Proca-scalar system, and $c_f\in \RR$ is a `calibration constant' which may depend on $f$, $\lambda$, and $\rho$.

A calculation along the same lines as in Section~\ref{sec:Malus} then shows that the induced observable in this case is given by
\begin{align}
    \Ec_{\Omega_\Phi}(O_f)=\Psi(h^-)^*\Psi(h^-)+\left(\Omega_\Phi(\Phi(f^-)^*\Phi(f^-))-c_f\right)\1_\Sf\, ,
\end{align}
where $\Sf$ is the algebra of observables of the system, and $\Psi(f)$, $\Psi(f)^*$ are its generators. We have also defined
\begin{align}
    \begin{pmatrix}
        h^-\\ f^-
    \end{pmatrix}=\begin{pmatrix}
        -\lambda\rho E^-_{K_\fm} f+\mathcal{O}(\lambda^3) \\ f+\lambda^2\rho E^-_{K_m}\rho E^-_{K_\fm}f+\mathcal{O}(\lambda^4)
    \end{pmatrix}\, ,
\end{align}
which follows from a Born expansion for the coupled propagator, see for example \cite{FewVer_QFLM:2018} or \cite{FewsterJubbRuep:2022}. 

As before, we consider system and probe in states in the folium of the Minkowski vacuum, so that the probe fields $\Phi(f)$ are represented as before, and the system fields $\Psi$ are represented in the same way as the probe fields with $\fm$ replaced by $m$.

We then select the system preparation state to be the state 
\begin{align}
    \omega_{s,n}(A)=\frac{1}{n!}\IP{a_\Psi^*(s)^n\Omega_\Psi, \pi_\Psi(A)a_\Psi^*(s)^n\Omega_\Psi}_{\Hc_\Psi}\, ,
\end{align}
where $\Hc_\Psi$, $\Omega_\Psi\in \Hc_\Psi$, $\pi_\Psi$, and $a^*_\Psi$ are the Hilbert space, Minkowski vacuum state, representation of $\Sf$ on $\Hc_\Psi$, and creation operator for the complex scalar field $\Psi$, respectively.
The function $s\in L^2(\RR^3, (2\pi)^{-3}d^3\cv{k})$ is chosen to be the same as in the scalar-Proca system. Thus this is an $n$-particle single-mode state with momentum concentrated around the $z$-axis.

Using the commutation relations of the creation and annihilation operators, we then obtain
\begin{align}
    \omega_{s,n}(\Ec_{\Omega_\Phi}(O_f))=n\abs{\IP{K_\Psi\overline{h^-}, s}_{\fh_\Psi}}^2+\norm{K_\Psi h^-}_{\fh_\Psi}^2+\norm{K_\Phi f^-}_{\fh_\Phi}^2-c_f\, .
\end{align}
By setting $c_f=\norm{K_\Psi h^-}_{\fh_\Psi}^2+\norm{K_\Phi f^-}_{\fh_\Phi}^2$, which is independent of the choice of $s$ and $n$, and using the Born expansion for $h^-$, we obtain
\begin{align}
    \omega_{s,n}(\Ec_{\Omega_\Phi}(O_f))&=\lambda^2n\abs{\int \frac{d^3\cv{k}}{(2\pi)^3} s(\cv{k})\overline{K_\Psi(\rho E^-_{K_\fm} f)}(\cv{k})}^2+\mathcal{O}(\lambda^3)\\
    &=\lambda^2n\abs{\int \frac{d^3\cv{k}}{(2\pi)^3(2\omega(\cv{k}))^{1/2}} s(\cv{k})\widehat{\rho E^-_{K_\fm} f}(-\omega(\cv{k}),-\cv{k})}^2+\mathcal{O}(\lambda^3)\,. 
\end{align}
Comparing with \eqref{eq:form factor}, one can see immediately that this is indeed equal to $\lambda^2\abs{A(s,\rho, f)}^2+\mathcal{O}(\lambda^3)$.

{\small
\bibliographystyle{spmpsci_mod_href}
\bibliography{covariant}
}
\end{document}